\newtheorem*{thm}{Theorem}
\newtheorem*{lemmum}{Lemma}
\newtheorem{theorem}{Theorem}[section]
\newtheorem{lemma}[theorem]{Lemma}
\newtheorem{proposition}[theorem]{Proposition}
\let\emptyset\varnothing
\DeclareMathOperator{\RE}{Re}
\DeclareMathOperator{\Sgn}{sgn}
\DeclareMathOperator{\mom}{MoM}
\DeclareMathOperator{\Sc}{Sc}
\begin{document}
\title{On the moments of the moments of the characteristic polynomials of random unitary matrices}
\abstract Denoting by $P_N(A,\theta)=\det(I-Ae^{-i\theta})$ the characteristic polynomial on the unit circle in the complex plane of an  $N\times N$ random unitary matrix $A$, we calculate the $k$th moment, defined with respect to an average over $A\in U(N)$, of the random variable corresponding to the $2\beta$th moment of $P_N(A,\theta)$ with respect to the uniform measure  $\frac{d\theta}{2\pi}$, for all $k, \beta\in\mathbb{N}$ .  These moments of moments have played an important role in recent investigations of the extreme value statistics of characteristic polynomials and their connections with log-correlated Gaussian fields.  Our approach is based on a new combinatorial representation of the moments using the theory of symmetric functions, and an analysis of a second representation in terms of multiple contour integrals.  Our main result is that the moments of moments are polynomials in $N$ of degree $k^2\beta^2-k+1$.  This resolves a conjecture of Fyodorov \& Keating~\cite{fyodorov14} concerning the scaling of the moments with $N$ as $N\rightarrow\infty$, for $k,\beta\in\mathbb{N}$.  
Indeed, it goes further in that we give a method for computing these polynomials explicitly and obtain a general formula for the leading coefficient.
\endabstract

\author{E. C. Bailey \& J. P. Keating} \address{School of Mathematics,
University of Bristol, Bristol, BS8 1TW, United Kingdom}
\email{emma.bailey@bristol.ac.uk, j.p.keating@bristol.ac.uk}

\maketitle

\section{Introduction}\label{sec:intro}

Let
\begin{equation}\label{eq:charpoly}
P_N(A,\theta)=\det(I-Ae^{-i\theta}),
\end{equation}
denote the characteristic polynomial of an $N\times N$ unitary matrix $A$ on the unit circle in the complex plane.  The typical values taken by $P_N$ when $A$ is chosen at random, uniformly with respect to Haar measure on the unitary group $U(N)$ (i.e.~from the Circular Unitary Ensemble of Random Matrix Theory), have been the subject of extensive study.  The moments of $P_N$ and its logarithm were computed in \cite{keasna00a} using the Selberg integral and compared with the corresponding moments of the Riemann zeta function, $\zeta(s)$, on its critical line (${\rm Re}s=1/2$).  It follows from these calculations that $\log P_N(A,\theta)/\sqrt{\tfrac{1}{2}\log N}$ satisfies a central limit theorem when $N\rightarrow\infty$, in that the real and imaginary parts independently converge to normal random variables with zero mean and unit variance.  This is true as well without normalising, in a distributional sense \cite{HKOC}.     The correlations of $\log |P_N(A,\theta)|$ can be computed using, for example, formulae due to Diaconis and Shahshahani \cite{DiacShahsh}, and shown to satisfy
\begin{equation}
\mathbb{E}_{A\in U(N)}\left(\log |P_N(A,\theta)|\log |P_N(A,\theta+x)|\right)\sim\begin{cases}\frac{1}{2}\log N&|x|<<\frac{1}{N}\\
-\frac{1}{2}\log |x|&1>>|x|>>\frac{1}{N}\end{cases}
\end{equation}
when $N\rightarrow\infty$.  (The imaginary part of $\log P_N(A,\theta)$ exhibits similar behaviour.)

The fact that $\log |P_N(A,\theta)|$ behaves like a log-correlated Gaussian random function has stimulated a good deal of interest recently, as it suggests a connection with other similar random fields such as those associated with the Branching Random Walk, Branching Brownian Motion, the 2-dimensional Gaussian Free Field, and Liouville quantum gravity.  This observation, together with heuristic calculations and numerical experiments (c.f.~\cite{FGK}), motivated a series of conjectures~\cite{fyodorov12, fyodorov14} concerning the maximum of $|P_N(A,\theta)|$ on the unit circle,
 \begin{equation}
 P_{\rm max}(A)=\max_{0\le \theta <2\pi}|P_N(A,\theta)|.
 \end{equation}
 
The heuristic calculations described in \cite{fyodorov14} are based on an analysis of the random variable
\begin{equation}
\label{mo}
Z_N(A, \beta)\coloneqq\frac{1}{2\pi}\int_0^{2\pi}|P_N(A,\theta)|^{2\beta}d\theta
\end{equation}
which is the $2\beta$th moment of $|P_N(A,\theta)|$ with respect to the uniform measure on the unit circle $\frac{d\theta}{2\pi}$.  Specifically, the calculations centre on computing the moments of this random variable with respect to an average over $A\in U(N)$:
\begin{equation}\label{eq:mom}
\mom_N(k,\beta)\coloneqq \mathbb{E}_{A\in U(N)}\left(\left(\frac{1}{2\pi}\int_0^{2\pi}|P_N(A,\theta)|^{2\beta}d\theta\right)^k\right).
\end{equation} 
We refer to the latter as the {\it moments of the moments} of $P_N(A,\theta)$.  They will be the main focus of our attention.  

We also note that the integrand in (\ref{mo}), when appropriately normalised, 
\begin{equation}
\frac{|P_N(A,\theta)|^{2\beta}}{\mathbb{E}|P_N(A,\theta)|^{2\beta}}\frac{d\theta}{2\pi}
\end{equation} 
has been the subject of considerable interest because it has been proved \cite{Webb, NSW} to converge to a limiting Gaussian multiplicative  chaos measure \cite{Kahane1, BerLectures, RVLectures} for $\beta\in (-\frac{1}{4}, 1)$ (c.f.~\cite{SW} for a corresponding result for the Riemann zeta-function on the critical line).  Importantly, there is expected to be a freezing transition \cite{fyodorov14}  at $\beta=1$, leading to a different regime of behaviour when $\beta>1$.

One of the main conjectures of \cite{fyodorov14} is that when $N\rightarrow\infty$
\begin{equation}\label{conj}
\mom_N(k,\beta)\sim
\begin{cases}
\left(\frac{\left(G(1+\beta)\right)^2}{G(1+2\beta)\Gamma(1-\beta^2)}\right)^k
\Gamma(1-k\beta^2)N^{k\beta^2}
&k<1/\beta^2
\\
c(k, \beta)N^{k^2\beta^2-k+1}
&k>1/\beta^2
\end{cases}
\end{equation}
where $G(s)$ is the Barnes $G$-function and $c(k, \beta)$ is an unspecified function of $k$ and $\beta$\footnote{By $A(N)\sim B(N)$, we mean that $A(N)/B(N)\to 1$ when $N\rightarrow\infty$.}.  At the transition point $k=\beta^2$, one should expect that the moments of moments grow like $N\log N$.  One justification for this conjecture follows from a heuristic calculation of the moments when $k$ is an integer \cite{fyodorov12, fyodorov14, Keating_lec}, which is based on the fact that for $k\in\mathbb{N}$
\begin{equation}
\label{basic}
\mom_N(k,\beta)=\frac{1}{(2\pi)^k}\int_0^{2\pi}\dots\int_0^{2\pi}
{\mathbb{E}}\prod_{j=1}^k|P_N(A,\theta_j)|^{2\beta}d\theta_j.
\end{equation}
The integrand in (\ref{basic}) can be computed asymptotically when $N\rightarrow\infty$ and the $\theta_j$s are fixed and distinct using the appropriate Fisher-Hartwig formula \cite{FF}.  The resulting integrals over the $\theta_j$s can then be computed when $k<1/\beta^2$ using the Selberg integral, leading to the expression in the conjecture (\ref{conj}) in this range.  This expression diverges as $k$ approaches $1/\beta^2$ from below.  The reason for this is that when $k\ge1/\beta^2$, singularities associated with coalescences of the  $\theta_j$s become important.  Developing a precise asymptotic in the range $k\ge1/\beta^2$ therefore requires a Fisher-Hartwig formula that is valid uniformly as the Fisher-Hartwig singularities coalesce, and achieving this in general is an important open problem.  From this perspective, the regime $k\ge1/\beta^2$ is the more challenging one.   

When $k=2$, a uniform Fisher-Hartwig asymptotic formula was established by Claeys and Krasovsky \cite{CK}, who used this to prove the powers of $N$ appearing in (\ref{conj}) for all $\beta$ and to relate $c(2, \beta)$ to a particular Painlev\'e transcendent.  

In a closely analogous problem in which $\log |P_N(A,\theta)|$ (c.f.~(\ref{Fourier})) is replaced by a random Fourier series with the same correlation structure -- such series can be considered as one-dimensional models of the two-dimensional Gaussian Free Field -- the analogue of conjecture (\ref{conj}), due to Fyodorov and Bouchaud \cite{FB2008}, has recently been proved in the regime $k<1/\beta^2$ for all $k$ and $\beta$ by Remy \cite{Remy1} using ideas from conformal field theory \cite{KRV}.

We note that the conjecture described above extends to the other circular ensembles (i.e.~to the C$\beta$E) \cite{keasna00a, CMN, FGK} and to the Gaussian ensembles \cite{FyoKhorSimm, FyoSimm, FLD}.  We note as well that there are extensive mathematics and physics literatures on log-correlated Gaussian fields; see, for example \cite{DRZ}, \cite{FLD} and \cite{CLD}, and references contained therein.  There has been a particular focus on the freezing transition at $\beta=1$.  In the case of uncorrelated Gaussian fields -- known as the Random Energy Model -- this is well understood; see, for example, \cite{Derrida, Kistler}.  For log-correlated fields the freezing transition continues to be a focus of research; see, for example, \cite{fyodorov14, SZ} and references therein.  

Our focus here will be on the conjecture for the asymptotics of the moments of moments (\ref{conj}) when $k\in\mathbb{N}$ and $\beta\in\mathbb{N}$.  Note that this immediately places us in the regime where $k\beta^2\ge 1$, and so in the more difficult regime which is dominated by coalescing Fisher-Hartwig singularities, and where progress has been limited thus far to the cases of $k=1,2$. Here one can exploit connections with representation theory and integrable systems that have not been incorporated in the probabilistic approaches taken previously.  Specifically, we shall use three different, but equivalent, exact (rather than asymptotic) expressions for the integrand in (\ref{basic}).  This allows us to circumvent the problems described above associated with coalescing Fisher-Hartwig singularities.  We also note that our results include the freezing transition point at $\beta=1$.  

The first of these expressions, which takes the form of a combinatorial sum and was proved in \cite{cfkrs1}, enables us to compute $\mom_N(k,\beta)$ exactly and explicitly for small values of $k$ and $\beta$, when both take values in $\mathbb{N}$.  This suggests a refinement of conjecture (\ref{conj}) in this case:
\begin{equation}\label{conj:FK}
\mom_N(k,\beta)= \operatorname{Poly}_{k^2\beta^2-k+1}(N),
\end{equation}
where $\operatorname{Poly}_{k^2\beta^2-k+1}(N)$ is a polynomial in the variable $N$ of degree $k^2\beta^2-k+1$.  This obviously implies (\ref{conj}) in the range $k\ge1/\beta^2$ for $k, \beta\in\mathbb{N}$.  We present the calculation in an Appendix, where we give explicit examples of the polynomials that arise.  This method can be used to establish that $\mom_N(k,\beta)$ is in general a polynomial in $N$, but does not straightforwardly determine the order of the polynomial in question.

We then go on to prove (\ref{conj:FK}) using two alternative approaches.  The first of these uses a second formula for the integrand in (\ref{basic}) that is based on the representation theory of the unitary group and involves expressing $\mom_N(k,\beta)$ in terms of a sum of semistandard Young tableaux via the theory of symmetric functions.  The application of the theory of symmetric functions in this context was developed by Bump and Gamburd \cite{bumgam06}, who used it to analyse the moments of characteristic polynomials, following \cite{keasna00a} and \cite{cfkrs1}.  It allows us to prove that $\mom_N(k,\beta)$ is bounded by a polynomial function of $N$ of degree less than or equal to $k^2\beta^2$ at integer values of $k,\beta$, and $N$.  The other approach involves a third formula for the integrand in (\ref{basic}), which takes the form of a multiple contour integral and which was also proved in \cite{cfkrs1}.  This allows us to compute the large-$N$ asymptotics of $\mom_N(k,\beta)$, using methods developed in \cite{keaodg08, krrr15}.  We show in this way that $\mom_N(k,\beta)$ is an analytic function of $N$ that grows like $N^{k^2\beta^2-k+1}$ as $N\rightarrow\infty$.  This approach allows us to obtain a formula for the leading coefficient of the polynomial in (\ref{conj:FK}), which corresponds to evaluating the function $c(k,\beta)$ in (\ref{conj}) when $k$ and $\beta$ are both integers.  Combining these various results allows us to deduce that $\mom_N(k,\beta)$ is a polynomial in $N$ of order $k^2\beta^2-k+1$, thereby proving (\ref{conj:FK}).

The fact that $\mom_N(k,\beta)$ is a polynomial in the variable $N$ when $k$ and $\beta$ both take values in $\mathbb{N}$ means that in this case we have an exact formula.  This is a consequence of this problem being integrable, as is clear from the analysis based on symmetric functions.  From the perspective of asymptotics, it means that we know the complete structure of the asymptotics of  $\mom_N(k,\beta)$; that is, we know the general form of all terms in the asymptotic expansion, not just the leading order term.

We emphasize that our main motivation here is to prove (\ref{conj}), and in particular its refinement (\ref{conj:FK}), in the regime $k\beta^2\ge 1$ where previous approaches have failed in general (i.e.~other than when $k=2$) because they require a general Fisher-Hartwig formula valid as $k$ singularities coalesce.  Our approach circumvents this obstacle.

This paper is structured as follows.  In the next subsection we state some formulae for $\mom_N(k,\beta)$ that can be obtained straightforwardly from expressions already in the literature and formulate our general results as theorems.  In the Appendix we calculate $\mom_N(k,\beta)$ for small values of $k$ and $\beta$, motivating (\ref{conj:FK}).  In Section 2, we explain the calculation involving symmetric functions, and then in Section 3 we describe the calculation involving multiple integrals.  {In Section 4 we discuss some connections between our main result and approaches to analysing rigorously the value distribution of $P_{\rm max}(A)$, in the context of the conjectures made in~\cite{fyodorov12, fyodorov14}, as well as setting out some thoughts on potential extensions and applications, including to moments of the Riemann zeta-function and other $L$-functions in short intervals, as well as to other random matrix ensembles.}


\subsection{Results for $\mom_N(1,\beta)$ and $\mom_N(2,\beta)$, for $\beta\in\mathbb{N}$.}\label{knownresults}

We set out in this subsection some results concerning $\mom_N(k,\beta)$ that can be obtained straightforwardly from calculations in the literature and that prove (\ref{conj:FK}) when $k=1$ and $k=2$. 

The case $k=1,\beta\in \mathbb{N}$ follows immediately from the moment formula of Keating and Snaith~\cite{keasna00a} (c.f.~also \cite{BF}), and matches with the conjecture.   Specifically, 
\begin{equation}
\label{k=1}
\mom_N(1,\beta)={\mathbb{E}}|P_N(A,\theta)|^{2\beta}=\prod_{0\le i,j\le \beta-1}\left(1+\frac{N}{i+j+1}\right),
\end{equation}
which is clearly a polynomial in $N$ of degree $\beta^2$.  In this case the leading order coefficient can be calculated \cite{keasna00a} to be
\begin{equation}
\label{leading}
\prod_{j=0}^{\beta-1}\frac{j!}{(j+\beta)!}.
\end{equation}
The calculation of the average in (\ref{k=1}) was carried out in \cite{keasna00a} using the Weyl integration formula and Selberg's integral.  Bump and Gamburd~\cite{bumgam06} later give an alternative proof using symmetric function theory.  In this second approach, the expression (\ref{leading}) was obtained by counting certain semistandard Young tableaux.  We shall see these parallel stories of symmetric function theory and complex analysis continuing for higher values of $k$.  

A proof of (\ref{conj:FK}) when $k=2, \beta\in\mathbb{N}$ follows directly from formulae given in \cite{krrr15} {(and differs from the proof given by Claeys and Krasovsky~\cite{CK} which proves \eqref{conj} for all $\beta$, but does not identify the polynomial structure when $\beta\in\mathbb{N}$)}. Recall that for $A\in U(N)$, the \textit{secular coefficients} of $A$ are the coefficients of its characteristic polynomial 
\begin{equation}\label{def:seccoeff}
\det(I+xA)=\sum_{n=0}^N\Sc_n(A)x^n.
\end{equation}
The following theorem is proved in \cite{krrr15} (theorem 1.5 in that paper).
\begin{theorem}\label{thm:krrr1}
For $A\in U(N)$, define
\begin{equation}
I_\eta(m;N)\coloneqq\int_{U(N)}\Big|\sum_{\substack{j_1+\cdots+j_\eta=m\\0\leq j_1,\dots,j_\eta\leq N}}\Sc_{j_1}(A)\cdots \Sc_{j_\eta}(A)\Big|^2dA.
\end{equation}
If $c=m/N, c\in[0,\eta]$, then $I_\eta(m;N)$ is a polynomial in $N$ and
\begin{equation}\label{eq:krrr1}
I_\eta(m;N)=\gamma_\eta(c)N^{\eta^2-1}+O_\eta(N^{\eta^2-2}),
\end{equation}
where 
\begin{equation}\label{krrr:leadingcoeff}
\gamma_\eta(c)=\sum_{0\leq l<c}\binom{\eta}{l}^2(c-l)^{(\eta-l)^2+l^2-1}p_{\eta,l}(c-l),
\end{equation}
with $p_{\eta,l}(c-l)$ being polynomials in $(c-l)$. 
\end{theorem}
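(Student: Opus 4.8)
The plan is to reduce $I_\eta(m;N)$ to a finite combinatorial sum via the theory of symmetric functions, to show by an Ehrhart-type argument that this sum is a polynomial in $N$ of degree $\eta^2-1$, and then to identify its leading coefficient with the integral of a squared Vandermonde over a section of the unit hypercube, which one evaluates in closed form. For the first step, note that the inner sum $\sum_{j_1+\dots+j_\eta=m}\Sc_{j_1}(A)\cdots\Sc_{j_\eta}(A)$ is the coefficient of $x^m$ in $\det(I+xA)^\eta=\prod_{i=1}^N(1+xe^{i\theta_i})^\eta$, where $e^{i\theta_1},\dots,e^{i\theta_N}$ are the eigenvalues of $A$. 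Applying the dual Cauchy identity to the $N$ eigenvalues against $\eta$ copies of the formal variable $x$,
\begin{equation*}
\prod_{i=1}^N(1+xe^{i\theta_i})^\eta=\sum_{\lambda}s_\lambda(e^{i\theta_1},\dots,e^{i\theta_N})\,s_{\lambda'}(\underbrace{x,\dots,x}_{\eta})=\sum_{\substack{\lambda:\ \lambda_1\le\eta\\ \ell(\lambda)\le N}}x^{|\lambda|}\,s_{\lambda'}(1^\eta)\,s_\lambda(A),
\end{equation*}
since $s_{\lambda'}(x,\dots,x)=x^{|\lambda|}s_{\lambda'}(1^\eta)$ vanishes unless $\ell(\lambda')\le\eta$, i.e.\ $\lambda_1\le\eta$, and $s_\lambda(A)=0$ unless $\ell(\lambda)\le N$. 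Extracting the coefficient of $x^m$ and using the orthonormality $\int_{U(N)}s_\lambda(A)\overline{s_\mu(A)}\,dA=\delta_{\lambda\mu}$ for $\ell(\lambda),\ell(\mu)\le N$ (see \cite{bumgam06}), then transposing $\mu=\lambda'$, gives
\begin{equation*}
I_\eta(m;N)=\sum_{\substack{\mu\vdash m\\ \ell(\mu)\le\eta,\ \mu_1\le N}}\big(s_\mu(1^\eta)\big)^2,\qquad s_\mu(1^\eta)=\prod_{1\le i<j\le\eta}\frac{\mu_i-\mu_j+j-i}{j-i},
\end{equation*}
the second identity being the Weyl dimension formula, with $\mu$ padded by zeros to length $\eta$.

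In this sum the dependence on $N$ enters only through the truncation $\mu_1\le N$ and through $m=cN$, while the summand $\prod_{i<j}(\mu_i-\mu_j+j-i)^2/\prod_{i<j}(j-i)^2$ is a polynomial in $\mu$ of total degree $\eta(\eta-1)$, with leading part $\prod_{i<j}(\mu_i-\mu_j)^2/\prod_{i<j}(j-i)^2$. Hence
\begin{equation*}
I_\eta(m;N)=\frac{1}{\prod_{i<j}(j-i)^2}\sum_{\mu\in NQ_c\cap\mathbb{Z}^\eta}\prod_{i<j}(\mu_i-\mu_j+j-i)^2,\qquad Q_c=\Big\{y\in[0,1]^\eta:\ y_1\ge\dots\ge y_\eta,\ \textstyle\sum_i y_i=c\Big\}.
\end{equation*}
Summing a polynomial of degree $d$ over the lattice points of the $N$-th dilate of a rational polytope of dimension $\eta-1$ yields a quasi-polynomial in $N$ of degree $d+\eta-1$; here $d=\eta(\eta-1)$, so the degree is $\eta^2-1$, and a closer inspection of the Ehrhart periods shows that under the standing hypothesis that $c=m/N$ is held fixed the quasi-polynomial is in fact a genuine polynomial (this can also be read off from the closed-form representation discussed below). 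Its leading term is $I_\eta(m;N)=\gamma_\eta(c)N^{\eta^2-1}+O_\eta(N^{\eta^2-2})$, with
\begin{equation*}
\gamma_\eta(c)=\frac{1}{\prod_{i<j}(j-i)^2}\int_{Q_c}\prod_{i<j}(y_i-y_j)^2\,d\sigma(y),
\end{equation*}
where $d\sigma$ is Lebesgue measure on the affine slice $\{\sum_i y_i=c\}$.

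Finally one evaluates this polytope integral by inclusion--exclusion on the constraints $y_i\le1$. Symmetrising so that $\prod_{i<j}(y_i-y_j)^2$ is integrated over $\{y\in[0,1]^\eta:\sum y_i=c\}$ with weight $1/\eta!$, expand $\prod_i\mathbf 1[y_i\le1]=\prod_i(1-\mathbf 1[y_i>1])$; for a subset $S$ of size $l$ substitute $y_i=1+z_i$ for $i\in S$, leaving the two squared Vandermondes in the $z$-variables on $S$ and the $y$-variables on $S^c$, times the cross factor $\prod_{i\in S,\,j\notin S}(1+z_i-y_j)^2$, integrated over $\{z_i\ge0\ (i\in S),\ y_j\ge0\ (j\notin S),\ \sum z_i+\sum y_j=c-l\}$. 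Rescaling the $c-l$ units of mass out of the free variables turns the measure and the two Vandermondes into their standard ($\sum=1$) analogues, producing the factor
\begin{equation*}
(c-l)^{\,l(l-1)\,+\,(\eta-l)(\eta-l-1)\,+\,(\eta-1)}=(c-l)^{(\eta-l)^2+l^2-1},
\end{equation*}
and expanding the cross factor as a polynomial in $(c-l)$ and integrating term by term against the two Vandermondes over the standard simplex section (Selberg-type integrals) produces a polynomial $p_{\eta,l}(c-l)$. Assembling the contributions over subsets $S$ — whence the prefactor $\binom{\eta}{l}^2$, one binomial for each of the two conjugate Vandermonde factors — yields $\gamma_\eta(c)=\sum_{0\le l<c}\binom{\eta}{l}^2(c-l)^{(\eta-l)^2+l^2-1}p_{\eta,l}(c-l)$.

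I expect this third step to be the main obstacle. Tracking the inclusion--exclusion carefully enough to see that the cross-term contributions really do organise into the single polynomial $p_{\eta,l}$, and that the combinatorial prefactor is exactly $\binom{\eta}{l}^2$ (and, relatedly, that the Ehrhart quasi-polynomial is genuinely a polynomial), is delicate. In practice it is cleanest to work not in the polytope picture but with the equivalent $2\eta$-fold contour-integral / ratios representation of $I_\eta(m;N)$ from \cite{cfkrs1} — equivalently the Bump--Gamburd evaluation of $\int_{U(N)}\det(1+xA)^\eta\det(1+yA^{-1})^\eta\,dA$ as a specialisation of the Schur polynomial of the rectangular partition $\langle N^\eta\rangle$ — to which the asymptotic methods of \cite{keaodg08,krrr15} apply directly: there the $2\eta$ integration variables split into two conjugate groups of $\eta$, and the residues at the $N$-shifted poles contribute one factor $\binom{\eta}{l}$ from each group, reproducing the displayed formula. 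Steps one and two, by contrast, are essentially routine once the reduction to the Schur sum is in place.
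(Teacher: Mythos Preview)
Your first two steps are sound and essentially match the symmetric-function half of the argument in \cite{krrr15} (which, note, is where this theorem is actually proved --- the present paper only cites it). The dual Cauchy reduction to $\sum_{\mu}s_\mu(1^\eta)^2$ is correct, and the Ehrhart argument correctly identifies the degree $\eta^2-1$ and the leading coefficient as the polytope integral of the squared Vandermonde over the slice $Q_c$.

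The gap is in your third step. Inclusion--exclusion on the constraints $y_i\le1$ in a \emph{single} set of $\eta$ variables produces, for each subset $S$ of size $l$, one contribution with sign $(-1)^l$; summing over $S$ gives a single factor $\binom{\eta}{l}$, not $\binom{\eta}{l}^2$. Your justification (``one binomial for each of the two conjugate Vandermonde factors'') conflates the squared Vandermonde in the integrand with having two independent sets of integration variables --- there is only one choice of subset being made. The $(-1)^l$ you generate also has to go somewhere, and you do not say where. So the displayed formula for $\gamma_\eta(c)$ does not drop out of this polytope computation in the way you claim.

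You in fact diagnose the fix yourself in your final paragraph: the $\binom{\eta}{l}^2$ is native to the $2\eta$-fold contour-integral representation of \cite{cfkrs1}, where the variables genuinely split into two groups of $\eta$ and one chooses how many contours in \emph{each} group encircle the shifted pole. That is exactly the route taken in \cite{krrr15} (and generalised in Section~3 of the present paper for arbitrary $k$): the paper is explicit that the specific form \eqref{krrr:leadingcoeff} comes from the complex-analysis side, while the symmetric-function/lattice-point side gives polynomiality and an equivalent but differently structured expression for $\gamma_\eta(c)$. So your outline is right in spirit, but step three should be replaced wholesale by the contour-integral computation rather than patched; the polytope inclusion--exclusion will give you \emph{a} formula for $\gamma_\eta(c)$, just not this one.
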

With a change of variables, it can easily be seen that \cref{thm:krrr1} proves (\ref{conj:FK}) when $k=2$, $\beta\in \mathbb{N}$.  We make use of the generating series for $I_\eta(m,N)$ given in~\cite{krrr15}, 
\begin{equation}\label{generatingseries}
\sum_{0\leq m\leq \eta N}I_\eta(m;N)x^m=\int_{U(N)}\det(I-A)^\eta\det(I-xA^*)^\eta dA.
\end{equation}
Then we see that 
\begin{align*}
\mom_N(2,\beta)&=\frac{1}{(2\pi)^2}\int_0^{2\pi}\int_0^{2\pi}\mathbb{E}\left(|P_N(A,\theta_1)|^{2\beta}|P_N(A,\theta_2)|^{2\beta}\right)d\theta_1d\theta_2\\
&=\frac{1}{(2\pi)^2}\int_0^{2\pi}\int_0^{2\pi}\int_{U(N)}\left|\det(1-Ae^{-i\theta_1})\right|^{2\beta}\left|\det(1-Ae^{-i\theta_2})\right|^{2\beta}dAd\theta_1d\theta_2\\
&=\frac{1}{(2\pi)^2}\int_0^{2\pi}\int_0^{2\pi}e^{i\beta(\theta_2-\theta_1)N}\sum_{0\leq m\leq 2\beta N}I_{2\beta}(m;N)e^{i(\theta_1-\theta_2)m}d\theta_1d\theta_2\\
&=\sum_{0\leq m\leq 2\beta N}I_{2\beta}(m;N)\delta_{m-\beta N}.
\end{align*}
Immediately, \cref{thm:krrr1} gives us that $\mom_N(2,\beta)$ is a polynomial in $N$, and we have the correct leading order, 
\begin{equation}\mom_N(2,\beta)\sim \gamma_{2\beta}(\beta)N^{4\beta^2-1}+O_{2\beta}(N^{4\beta^2-2}),
\end{equation}
 provided that $\gamma_{2\beta}(\beta)\neq 0$.  \Cref{thm:krrr1} was proved by two methods: symmetric function theory and complex analysis. The former determines an equivalent structure for $\gamma_\eta(c)$ to that given in \eqref{krrr:leadingcoeff} coming from a standard lattice point count, which proves that $I_\eta(m;N)$ is a polynomial in $N$ and makes it clear that $\gamma_{2\beta}(\beta)\neq 0$. By using complex analysis the result regarding the leading order in $N$ can be established, and the form for $\gamma_\eta(c)$ given in \eqref{krrr:leadingcoeff} is found.


\subsection{Results}

Our approach combines the methods and formulae developed in \cite{keasna00a, cfkrs1, bumgam06, keaodg08, krrr15}; in particular we make use of the complex analytic techniques employed in the latter two papers.  We first reformulate (\ref{conj:FK}) in terms of symmetric function theory and a lattice point count function.  This gives a polynomial bound on $\mom_N(k,\beta)$ at integer values of $k$, $\beta$, and $N$.  We next use a representation in terms of multiple contour integrals; this furnishes an expression for $\mom_N(k,\beta)$ as an entire function of $N$ and allows us to prove the following theorem.

\begin{theorem}\label{thm:mom}
Let $k,\beta\in\mathbb{N}$.  Then 
\begin{equation}\label{eq:momkbeta}
\mom_N(k,\beta)=\gamma_{k,\beta}N^{k^2\beta^2-k+1}+O(N^{k^2\beta^2-k}),
\end{equation}
where $\gamma_{k,\beta}$ can be written explicitly in the form of an integral. 
\end{theorem}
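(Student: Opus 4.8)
\emph{Proof strategy.} The plan is to avoid uniform Fisher--Hartwig asymptotics altogether, and instead to extract the $N$-asymptotics from an exact contour-integral identity, in the spirit of the $k=2$ calculation of \S\ref{knownresults} and the complex-analytic methods of \cite{keaodg08, krrr15}. The first step reduces $\mom_N(k,\beta)$ to a single Taylor coefficient of a twisted average. By rotation invariance of Haar measure (conjugate $A$ by $e^{i\theta_k}I$), the integrand $\mathbb E_{U(N)}\prod_{j=1}^k|P_N(A,\theta_j)|^{2\beta}$ depends only on the differences $\psi_j:=\theta_j-\theta_k$, $1\le j\le k-1$; combining this with the elementary identity $|P_N(A,\psi)|^{2\beta}=(\det A)^{-\beta}e^{iN\beta\psi}\det(I-e^{-i\psi}A)^{2\beta}$ and integrating out all $k$ angles --- the $\theta_k$-integral is trivial and each $\psi_j$-integral extracts one Fourier coefficient, exactly as the angular integral in \S\ref{knownresults} converts $\sum_m I_{2\beta}(m;N)e^{i(\theta_1-\theta_2)m}$ into $\sum_m I_{2\beta}(m;N)\delta_{m-\beta N}$ --- yields
\[
\mom_N(k,\beta)=[x_1^{N\beta}\cdots x_{k-1}^{N\beta}]\;\mathbb E_{U(N)}\Big[(\det A)^{-k\beta}\det(I-A)^{2\beta}\prod_{j=1}^{k-1}\det(I-x_jA)^{2\beta}\Big],
\]
a single coefficient of a twisted average of $2k\beta$ characteristic polynomials of $A$; for $k=2$ this recovers $\mom_N(2,\beta)=[x^{N\beta}]\,\mathbb E_{U(N)}[\det(I-A^*)^{2\beta}\det(I-xA)^{2\beta}]$.

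Next I would insert the multiple contour-integral representation proved in \cite{cfkrs1} for this twisted average. With the twist $(\det A)^{-k\beta}$ and the $2k\beta$ shifts grouped into the $k$ distinct values $\beta_0=0$ and $\beta_j=-\log x_j$ ($1\le j\le k-1$), each of multiplicity $2\beta$ --- so that the $\beta$-fold coalescences merely produce poles of order $2\beta$, which the formula handles analytically --- this average equals a constant times a $k\beta$-fold contour integral over small circles, with integrand of the schematic shape
\[
\frac{\Delta(z_1,\dots,z_{k\beta})^2\;\prod_{a=1}^{k\beta}e^{Nz_a}}{\prod_{a=1}^{k\beta}\prod_{j=0}^{k-1}(z_a-\beta_j)^{2\beta}}
\]
up to analytic prefactors that remain bounded near the relevant region (the factor $z_a^{2\beta}$, i.e.~the $j=0$ term, arising from the twist). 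Carrying out the $k-1$ coefficient extractions $[x_j^{N\beta}]=\frac{1}{2\pi i}\oint x_j^{-N\beta-1}\,dx_j$ introduces $k-1$ further contour integrals, and combining everything realises $\mom_N(k,\beta)$ as a single $(k\beta+k-1)$-fold contour integral with integrand of the form $Q(\vec z,\vec w)\,e^{N\ell(\vec z,\vec w)}/\prod(\cdots)$ with $\ell$ linear. The interchanges of integration are legitimate since, with a compatible (e.g.~nested) choice of contours as in \cite{krrr15}, the contours are compact and the integrand jointly continuous.

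In the final step, following \cite{keaodg08, krrr15}, I would rescale every contour variable by $1/N$ (e.g.~$z_a=u_a/N$, $\beta_j=w_j/N$), deforming contours appropriately. The exponential then becomes free of $N$, while the power of $N$ is harvested from the algebraic factors: the Vandermonde $\Delta(\vec z)^2$ contributes $N^{-k\beta(k\beta-1)}$; the $2k^2\beta^2$ linear factors in the denominator contribute $N^{+2k^2\beta^2}$; and the changes of variables together with the $x_j^{-N\beta-1}\,dx_j$ contribute $N^{-(k\beta+k-1)}$. Collecting the exponents,
\[
-k\beta(k\beta-1)+2k^2\beta^2-(k\beta+k-1)=k^2\beta^2-k+1,
\]
in agreement with the heuristic that the dominant configuration is the full coalescence of all $k$ angles, where the $2k\beta$th absolute moment of $\det(I-A)$ is of order $N^{(k\beta)^2}$ and integrating over the $k-1$ near-coalescing relative angles costs $N^{-(k-1)}$. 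The residual rescaled integral is finite and is the announced explicit integral representation of $\gamma_{k,\beta}$, and the error is $O(N^{k^2\beta^2-k})$, coming from the next terms in the $1/N$-expansion of the non-exponential part of the integrand and from the lower-order residues created by the contour deformations.

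The main obstacle, I expect, is carrying out this final step rigorously: justifying the contour manipulations, keeping track of the precise prefactors suppressed above, and proving that the interplay of the Vandermonde degree, the order-$2\beta$ coalesced poles and the $e^{N(\cdots)}$ factors produces \emph{exactly} the power $N^{k^2\beta^2-k+1}$ with no larger power surviving, while the leading rescaled integral both converges and is nonzero. The nonvanishing $\gamma_{k,\beta}\neq0$ is precisely what must be combined with the polynomiality of $\mom_N(k,\beta)$ and the degree bound (degree at most $k^2\beta^2$) furnished by the combinatorial and symmetric-function analysis in order to conclude that $\mom_N(k,\beta)$ is a polynomial in $N$ of degree exactly $k^2\beta^2-k+1$; alternatively, one can appeal to the interpretation of the leading coefficient as a (manifestly positive) lattice-point count, equivalently a count of semistandard Young tableaux, arising from the symmetric-function side, which makes $\gamma_{k,\beta}\neq0$ clear, just as for $\gamma_{2\beta}(\beta)$ in the $k=2$ case.
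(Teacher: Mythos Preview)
Your overall strategy --- integrate out the angles first to reduce $\mom_N(k,\beta)$ to a coefficient extraction of a twisted average, then insert a contour-integral representation and rescale by $1/N$ --- is in the same spirit as the paper's. However, the $k\beta$-fold contour integral you write down is not the formula proved in \cite{cfkrs1}. That result (Lemma~\ref{lemma:mci} here) produces a $2k\beta$-fold integral, and, crucially, its integrand carries an additional factor $\prod_{m\le k\beta<n}(1-e^{z_n-z_m})^{-1}$ that you have suppressed into ``analytic prefactors that remain bounded near the relevant region''. This factor is \emph{not} bounded there: it is singular precisely at the coalescences $z_n\to z_m$ (equivalently $\theta_i\to\theta_j$) that dominate the asymptotics, and handling it is the technical heart of the proof. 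The paper first splits the $2k\beta$ contours among the $k$ poles and shows that any overloaded splitting vanishes (Lemma~\ref{lemma:symmetric}); for a surviving splitting $\underline{l}$ the rescaling $z=v/N-i\alpha$ yields a direct factor $N^{|A_{k,\beta;\underline{l}}|}$, but the remaining $(1-e^{(v_n-v_m)/N}e^{i(\alpha_m-\alpha_n)})^{-1}$ terms, though pointwise $O(1)$ for $\alpha_m\neq\alpha_n$, must be expanded in geometric series (Lemma~\ref{lemma:geosums}); after the $\theta$-integrals impose $k-1$ Kronecker-delta constraints (Lemma~\ref{momlemma1}), the resulting constrained sums are read as Riemann sums and contribute a further $N^{|B_{k,\beta;\underline{l}}|-k+1}$ (Lemma~\ref{momlemma2}). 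Only then does $|A_{k,\beta;\underline{l}}|+|B_{k,\beta;\underline{l}}|=k^2\beta^2$ give the claimed exponent.

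Your arithmetic $-k\beta(k\beta-1)+2k^2\beta^2-(k\beta+k-1)=k^2\beta^2-k+1$ lands on the right number, but it is computed for a representation that is not available; with the correct $2k\beta$-fold integral the naive rescaling gives a splitting-dependent power $N^{|A_{k,\beta;\underline{l}}|}$, and the delicate $-k+1$ comes entirely from the factor you omitted. Finally, your proposed shortcut for $\gamma_{k,\beta}\neq 0$ via the SSYT interpretation does not work as stated: the symmetric-function argument gives only an \emph{upper} bound of degree $k^2\beta^2$ on $\mom_N(k,\beta)$, not a matching lower bound of order $N^{k^2\beta^2-k+1}$. The paper instead establishes nonvanishing directly (Lemma~\ref{momlemma4}) by a residue computation that constructs the coefficient of $(v_1\cdots v_{2k\beta})^{2\beta-1}$ in the integrand explicitly, shows this construction is unique, and checks its sign.
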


Using a combinatorial sum equivalent to the multiple contour integrals due to~\cite{cfkrs1}, we then deduce the following result.

\begin{theorem}\label{thm:polynomial}
Let $k, \beta\in\mathbb{N}$. Then $\mom_N(k,\beta)$ is a polynomial in $N$.
\end{theorem}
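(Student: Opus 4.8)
The plan is to reduce the claim, by an elementary manipulation, to extracting a single monomial coefficient from a polynomial in $k$ variables, and then to analyse that coefficient through the combinatorial-sum form of the formula of Conrey, Farmer, Keating, Rubinstein and Snaith~\cite{cfkrs1} for averages over $U(N)$ of products of characteristic polynomials with shifts. Since $\beta\in\mathbb{N}$, setting $z_j=e^{i\theta_j}$, using $|P_N(A,\theta_j)|^{2\beta}=\det(I-Ae^{-i\theta_j})^{\beta}\det(I-A^{-1}e^{i\theta_j})^{\beta}$ and $\det(I-A/z_j)=z_j^{-N}\det(z_jI-A)$, the angular integrations in~(\ref{basic}) pick out a single Fourier coefficient, giving
\begin{equation*}
\mom_N(k,\beta)=\bigl[\,z_1^{N\beta}\cdots z_k^{N\beta}\,\bigr]\,F_N(z_1,\dots,z_k),\qquad
F_N(z)\coloneqq\mathbb{E}_{A\in U(N)}\prod_{j=1}^{k}\det\!\bigl(z_jI-A\bigr)^{\beta}\det\!\bigl(I-z_jA^{-1}\bigr)^{\beta},
\end{equation*}
where $F_N$ is a polynomial in $z_1,\dots,z_k$, homogeneous of degree $Nk\beta$, and $\bigl[\,\cdot\,\bigr]$ denotes extraction of the coefficient of $z_1^{N\beta}\cdots z_k^{N\beta}$; this is the general-$k$ counterpart of the identity $\mom_N(2,\beta)=\sum_m I_{2\beta}(m;N)\,\delta_{m-\beta N}$ recalled above. (By Weyl integration and the dual Cauchy identity one in fact has $F_N(z)=s_{(N^{k\beta})}(z_1,\dots,z_1,\dots,z_k,\dots,z_k)$, the Schur polynomial for the $k\beta\times N$ rectangle evaluated with each $z_j$ repeated $2\beta$ times, cf.~\cite{bumgam06}; the point is to use the expansion of this Schur polynomial provided by~\cite{cfkrs1}.) It therefore suffices to show that the displayed coefficient, as a function of $N$, is a polynomial.

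Before any confluence, the average $\mathbb{E}_{A}\prod_i\det(I-A/x_i)\prod_l\det(I-y_lA^{-1})$ over $k\beta$ pairs of \emph{distinct} parameters is a finite sum indexed by pairs of equal-size subsets (the ``swapped'' parameters), whose general term is a monomial $\prod_i x_i^{N}\prod_l y_l^{N}$ over the swapped parameters times a product of elementary arithmetic factors of the shape $(1-x_iy_l)^{-1}$ (equivalently, this is the expansion of the numerator determinant in the bialternant formula for the Schur polynomial above, organised by its high-degree columns). The arithmetic factors carry no $N$-dependence. I would then specialise the $x_i,y_l$ so that each of the $k$ angles appears with multiplicity $\beta$ and pass to the confluent limit, which is legitimate since $F_N$ is a polynomial in $z$; after the limit, the resulting summands have poles — of $N$-independent, bounded order — only on the diagonals $z_j=z_l$ ($j\ne l$), and the only new $N$-dependence that the limit introduces comes from differentiating the monomials $z_j^{N}$ boundedly many times, producing prefactors $N(N-1)\cdots$, hence polynomials in $N$ of bounded degree. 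One thus arrives at a \emph{finite} representation
\begin{equation*}
F_N(z_1,\dots,z_k)=\sum_r Q_r(N)\,\Bigl(\prod_{j=1}^{k}z_j^{\,Nc_{r,j}+d_{r,j}}\Bigr)\,H_r(z_1,\dots,z_k),
\end{equation*}
with $Q_r$ polynomials, $c_{r,j}$ integers, $d_{r,j}$ integers bounded in terms of $k,\beta$, and each $H_r$ a fixed, $N$-independent rational function whose only poles lie on the diagonals $z_j=z_l$ ($j\ne l$).

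The final step is to extract the coefficient of $z_1^{N\beta}\cdots z_k^{N\beta}$ term by term: the $r$-th contribution equals $Q_r(N)\,\bigl[\prod_j z_j^{N(\beta-c_{r,j})-d_{r,j}}\bigr]H_r(z)$, a polynomial in $N$ times a coefficient of the fixed rational function $H_r$ at a multidegree depending linearly on $N$. Since the only poles of $H_r$ lie on the diagonals, expanding $(z_j-z_l)^{-p}=z_j^{-p}\sum_{a\ge0}\binom{p-1+a}{a}(z_l/z_j)^{a}$ and iterating the one-variable coefficient extractions expresses this coefficient as a finite sum of lattice-point counts over polytopes cut out by the summation bounds of the combinatorial sum; these polytopes are integral, so — exactly as in the ``standard lattice point count'' used for $k=2$ in~\cite{krrr15} — each count, and hence each contribution, is a genuine polynomial in $N$ rather than merely a quasi-polynomial. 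A finite sum of polynomials being a polynomial, this proves \Cref{thm:polynomial}. The main obstacle in carrying this out is the confluent limit of the previous step: the individual summands of the~\cite{cfkrs1} expansion diverge as the parameters coalesce, so extracting the clean structure above requires regrouping them — most transparently by passing to the equivalent multiple contour integral of~\cite{cfkrs1} and evaluating the residues at the coalescing poles, in the style of~\cite{keaodg08, krrr15} — while keeping uniform track of which residues survive and of the bounded exponents $d_{r,j}$; the remainder is routine bookkeeping. (Combined with \Cref{thm:mom}, this polynomiality immediately sharpens to the refined statement~(\ref{conj:FK}).)
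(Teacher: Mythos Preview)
Your approach is essentially the same as the paper's: both start from the CFKRS combinatorial-sum formula (Theorem~\ref{ratiothm}) and resolve the coalescing shifts by a confluent/l'H\^opital limit, observing that the only $N$-dependence in the summands sits in the monomials $\omega^N$, so differentiating produces polynomial-in-$N$ prefactors. The difference is in the last step. The paper simply notes that the full sum, being equal to $\prod_j e^{iN\beta\theta_j}\,I_{k,\beta}(\underline{\theta})$, is \emph{a priori} a Laurent polynomial in the $e^{i\theta_j}$ (including on the diagonals $\theta_j=\theta_l$, since $I_{k,\beta}$ is an average of polynomials and hence bounded there); so once l'H\^opital has been applied the residual poles you call ``on the diagonals $z_j=z_l$'' must cancel as well, and the output is a genuine Laurent polynomial in $z$ with polynomial-in-$N$ coefficients. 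The $\theta$-integration then picks out a single such coefficient, and the proof is finished. You instead keep the summands separated with their diagonal poles intact, extract the coefficient of $z_1^{N\beta}\cdots z_k^{N\beta}$ term by term via geometric-series expansions, and appeal to a lattice-point count over ``integral polytopes'' to upgrade each term's contribution from quasi-polynomial to polynomial in $N$. That integrality assertion is not justified in your write-up, and there is no a priori reason the \emph{individual} summands should yield polynomials in $N$ --- only their sum is guaranteed to. The paper's observation that the diagonal poles cancel short-circuits the entire lattice-point step; once you use it, your ``routine bookkeeping'' disappears.
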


These theorems together prove (\ref{conj:FK}) for $k, \beta\in\mathbb{N}$. 

\subsection{Acknowledgements}  We thank Edva Roditty-Gershon and Scott Harper for helpful discussions, and Euan Scott for contributing to preliminary computations of the moments using the method outlined in the Appendix.  ECB is grateful to the Heilbronn Institute for Mathematical Research for support.  JPK is pleased to acknowledge support from a Royal Society Wolfson Research Merit Award and ERC Advanced Grant 740900 (LogCorRM).  We are most grateful to the referees for their careful reading of the manuscript and for a number of helpful questions and suggestions.

\section{Symmetric Function Theory}

As with the cases $k=1,2$ and $\beta\in\mathbb{N}$, we can rephrase the problem in terms of symmetric function theory.   For an introduction to this topic, see \cite{macdonald98} and \cite{stanley99}.  For a self-contained review of the tools required for the following calculation, see~\cite{krrr15}. 

The aim of this section is twofold.  Firstly, we highlight the role that symmetric function theory plays in the analysis of the moments of moments as $k$ increases.  Secondly, in understanding how the results of Bump and Gamburd~\cite{bumgam06} and Keating et al.~\cite{krrr15} generalise for higher $k$, we recover an explicit polynomial bound on $\mom_N(k,\beta)$ at integer values of $k$, $\beta$, and $N$.

\begin{proposition}\label{prop:symmetric}
We have that
\begin{equation}
 \mathbb{E}_{A\in U(N)}\left(\prod_{j=1}^k|P_N(A,\theta_j)|^{2\beta}\right)=\frac{s_{\langle N^{k\beta}\rangle}(e^{i\underline{\theta}})}{\prod_{j=1}^ke^{iN\beta\theta_j}},
 \end{equation} 
 where $s_\lambda(x_1,\dots,x_n)$ is the Schur polynomial in $n$ variables with respect to the partition $\lambda$, and we write $\langle \lambda^n\rangle=(\overbrace{\lambda,\dots,\lambda}^n)$ and 
 \begin{equation}\label{schurvector}e^{i\underline{\theta}}=(\overbrace{e^{i\theta_1},\dots,e^{i\theta_1}}^\beta,\overbrace{e^{i\theta_2},\dots,e^{i\theta_2}}^\beta,\dots,\overbrace{e^{i\theta_{k}},\dots,e^{i\theta_{k}}}^\beta,\overbrace{e^{i\theta_1},\dots,e^{i\theta_1}}^\beta,\overbrace{e^{i\theta_2},\dots,e^{i\theta_2}}^\beta,\dots,\overbrace{e^{i\theta_{k}},\dots,e^{i\theta_{k}}}^\beta).\end{equation}
\end{proposition}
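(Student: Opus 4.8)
The plan is to follow the symmetric-function approach of Bump and Gamburd~\cite{bumgam06}, as adapted in~\cite{krrr15} for $k\le 2$: expand the product $\prod_{j=1}^k|P_N(A,\theta_j)|^{2\beta}$ into a sum of Schur polynomials indexed by partitions, integrate term by term using orthonormality of the irreducible characters of $U(N)$, and then collapse the surviving series into a single rectangular Schur polynomial. First I would remove the absolute values: since $A$ is unitary, $\overline{P_N(A,\theta)}=\overline{\det(I-e^{-i\theta}A)}=\det(I-e^{i\theta}A^{*})$, so
\[
\prod_{j=1}^k|P_N(A,\theta_j)|^{2\beta}=\prod_{j=1}^k\det(I-e^{-i\theta_j}A)^{\beta}\det(I-e^{i\theta_j}A^{*})^{\beta}.
\]
Writing the eigenvalues of $A$ as $e^{i\phi_1},\dots,e^{i\phi_N}$, and using that $\beta\in\mathbb{N}$ makes each $\beta$-th power a genuine polynomial so that only finitely many ``external'' variables appear, the dual Cauchy identity $\prod_{m,n}(1+a_mb_n)=\sum_\nu s_\nu(a)s_{\nu'}(b)$ yields
\[
\prod_{j=1}^k\det(I-e^{-i\theta_j}A)^{\beta}=\sum_\lambda(-1)^{|\lambda|}s_\lambda(\mathbf x)\,s_{\lambda'}(A),\qquad
\prod_{j=1}^k\det(I-e^{i\theta_j}A^{*})^{\beta}=\sum_\mu(-1)^{|\mu|}s_\mu(\mathbf y)\,\overline{s_{\mu'}(A)},
\]
where $\mathbf x$ is the tuple of $\beta$ copies of each of $e^{-i\theta_1},\dots,e^{-i\theta_k}$, the tuple $\mathbf y$ consists of $\beta$ copies of each of $e^{i\theta_1},\dots,e^{i\theta_k}$, and $s_\nu(A)$ denotes $s_\nu$ evaluated at the eigenvalues of $A$ (so $s_\nu(A^{*})=\overline{s_\nu(A)}$ by reality of the coefficients).

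Next I would integrate over $U(N)$ term by term. By the Weyl integration formula, equivalently orthonormality of the characters $\chi_\nu=s_\nu$ of $U(N)$, one has $\int_{U(N)}s_{\lambda'}(A)\,\overline{s_{\mu'}(A)}\,dA=\delta_{\lambda'\mu'}$ whenever $\lambda'$ and $\mu'$ have at most $N$ parts, and all remaining terms vanish because a Schur polynomial in $N$ variables is identically zero once its indexing partition has more than $N$ parts. Hence only $\lambda=\mu$ with $\lambda_1\le N$ contribute; moreover $s_\lambda(\mathbf x)=0$ unless $\ell(\lambda)\le k\beta$, so $\lambda$ is confined to the $k\beta\times N$ rectangle $\langle N^{k\beta}\rangle$. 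The signs cancel, $(-1)^{|\lambda|+|\mu|}=(-1)^{2|\lambda|}=1$, leaving $\mathbb{E}_{A\in U(N)}\prod_{j=1}^k|P_N(A,\theta_j)|^{2\beta}=\sum_{\lambda\subseteq\langle N^{k\beta}\rangle}s_\lambda(\mathbf x)\,s_\lambda(\mathbf y)$.

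Finally I would recognise this truncated Cauchy sum as a single Schur polynomial. With $L=k\beta$, the splitting rule $s_{\langle N^{L}\rangle}(\mathbf y,\mathbf w)=\sum_{\lambda\subseteq\langle N^{L}\rangle}s_\lambda(\mathbf y)\,s_{\langle N^{L}\rangle/\lambda}(\mathbf w)$, the fact that the skew shape $\langle N^{L}\rangle/\lambda$ inside the $L\times N$ box is, up to a $180^{\circ}$ rotation, the straight shape $\lambda^{\vee}=(N-\lambda_{L},\dots,N-\lambda_{1})$, and the bialternant identity $s_{\lambda^{\vee}}(w_1,\dots,w_L)=(w_1\cdots w_L)^{N}s_\lambda(w_1^{-1},\dots,w_L^{-1})$ combine to give
\[
\sum_{\lambda\subseteq\langle N^{L}\rangle}s_\lambda(\mathbf y)\,s_\lambda(w_1^{-1},\dots,w_L^{-1})=(w_1\cdots w_L)^{-N}\,s_{\langle N^{L}\rangle}(\mathbf y,\mathbf w).
\]
Taking $\mathbf w=\mathbf y$, so that $(w_1^{-1},\dots,w_L^{-1})$ is exactly $\mathbf x$, the concatenation $(\mathbf y,\mathbf w)$ is precisely the tuple $e^{i\underline{\theta}}$ of~\eqref{schurvector} and $(w_1\cdots w_L)^{N}=\prod_{j=1}^k e^{iN\beta\theta_j}$, which is the asserted identity.

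I expect the main difficulty to be bookkeeping rather than anything conceptual: keeping the signs, the conjugations ($A$ versus $A^{*}$), the repetition pattern and ordering of $\mathbf x,\mathbf y$ matching~\eqref{schurvector}, and the rectangular normalisation $\prod_j e^{iN\beta\theta_j}$ all consistent, while invoking the correct forms of the dual and rectangular Cauchy identities, the complementation identity for Schur polynomials, and the vanishing and orthogonality properties of the characters of $U(N)$. As consistency checks, at $k=\beta=1$ the right-hand side reduces to $\mathbb{E}_{A\in U(N)}|P_N(A,\theta)|^2=N+1$, and the formula specialises correctly to~\eqref{k=1} when $k=1$ and to \cref{thm:krrr1} when $k=2$.
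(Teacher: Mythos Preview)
Your argument is correct and is precisely the Bump--Gamburd computation the paper has in mind: the paper does not spell out a proof of this proposition but cites \cite{bumgam06} and \cite{krrr15} for it, and your three steps (dual Cauchy expansion, Schur orthogonality over $U(N)$, and the rectangular complementation identity collapsing the truncated Cauchy sum) reproduce exactly that derivation. There is nothing to add beyond noting that the skew-to-complement step $s_{\langle N^{L}\rangle/\lambda}(\mathbf w)=s_{\lambda^{\vee}}(\mathbf w)$ uses that $\mathbf w$ has exactly $L=k\beta$ variables, which you have arranged.
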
 
Hence, we can rewrite $\mom_N(k,\beta)$ in the following way
\begin{equation}\label{eq:symmetricmom}
\mom_N(k,\beta)=\frac{1}{(2\pi)^k}\int_0^{2\pi}\cdots\int_0^{2\pi}\frac{s_{\langle N^{k\beta}\rangle}(e^{i\underline{\theta}})}{\prod_{j=1}^ke^{iN\beta\theta_j}}\prod_{j=1}^kd\theta_j.
\end{equation}
In general, one can express a Schur function as a sum over all semistandard Young tableaux (SSYT) of shape $\lambda$, 
\begin{equation}
s_\lambda(x_1,\dots,x_n)=\sum_T\underline{x}^T=\sum_Tx_1^{t_1}\dots x_n^{t_n},
\end{equation} 
where $t_i$ is the number of times $i$ appears in a given tableau $T$.  Thus, in the situation above, we find 
\begin{equation}
s_{\langle N^{k\beta}\rangle}(e^{i\underline{\theta}})=\sum_T e^{i\theta_1\tau_1}\cdots e^{i\theta_k\tau_k},\end{equation}
where the sum is over all SSYT of rectangular shape with $k\beta$ rows by $N$ columns, and 
\begin{equation}
\tau_j=t_{2(j-1)\beta+1}+\cdots+t_{2j\beta},\quad\text{ for } j\in\{1,\dots,k\}.
\end{equation}
Hence we have
\begin{align}
\mom_N(k,\beta)&=\frac{1}{(2\pi)^k}\int_0^{2\pi}\dots\int_0^{2\pi}\sum_Te^{i\theta_1(\tau_1-N\beta)}\cdots e^{i\theta_k(\tau_k-N\beta)}\prod_{j=1}^kd\theta_j\\
&=\sum_{\widetilde{T}}1,
\end{align}
where the sum is now over $\widetilde{T}$, a set of restricted SSYT described as follows.  The Kronecker $\delta$-function arising from the integral over the $\theta_j$s imposes a further condition upon the rectangular SSYT: $\tau_j=N\beta$. That is, there have to be $N\beta$ entries from each of the sets 
\begin{equation}
\{2\beta(j-1)+1,\dots,2j\beta\},\quad\text{ for } j\in\{1,\dots,k\}.
\end{equation}
Thus we define \textit{restricted} SSYT (RSSYT) to be those SSYT, $\widetilde{T}$, satisfying this additional condition.  When specialised to the case of $k=1,\beta\in\mathbb{N}$, this approach matches the proof given by Bump and Gamburd (corollary 1 of~\cite{bumgam06}) which also uses the following well-known lemma, see for example~\cite{stanley71}.

\begin{lemma}~\label{lemma:ssyt}
The number of SSYT of shape $\lambda$ with entries in $1,2,\dots,n$  can be found by evaluating the Schur polynomial $s_\lambda(1,\dots,1)$.  We implicitly extend $\lambda$ with zeros until it has length $n$. Then \[s_\lambda(1,1,\dots,1)=\prod_{1\leq i<j\leq n}\frac{\lambda_i-\lambda_j+j-i}{j-i},\]
which is a polynomial in $\lambda_i-\lambda_j$. 
\end{lemma}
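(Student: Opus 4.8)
The plan is to treat the statement in two parts: the identification of the tableau count with a specialization of the Schur polynomial, which is immediate, and the closed product formula for that specialization, which is the substantive content. First I would note that the equality of the number of SSYT of shape $\lambda$ with entries in $\{1,\dots,n\}$ and $s_\lambda(1,\dots,1)$ is simply the combinatorial description of the Schur polynomial recalled immediately above: from $s_\lambda(x_1,\dots,x_n)=\sum_T x_1^{t_1}\cdots x_n^{t_n}$, where $T$ runs over SSYT of shape $\lambda$ filled from $\{1,\dots,n\}$ and $t_i$ records the number of $i$'s in $T$, setting all $x_i=1$ makes each monomial equal to $1$, so $s_\lambda(1,\dots,1)=\sum_T 1$. (With the stated convention of padding $\lambda$ with zeros to length $n$: if $\lambda$ had more than $n$ parts its first column would demand $\ell(\lambda)>n$ strictly increasing entries drawn from $n$ values, so there would be no such tableaux and $s_\lambda\equiv0$, consistent with both sides.)

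For the product formula I would start from the bialternant (Weyl character) formula
\[
s_\lambda(x_1,\dots,x_n)=\frac{\det\bigl(x_i^{\lambda_j+n-j}\bigr)_{1\le i,j\le n}}{\det\bigl(x_i^{n-j}\bigr)_{1\le i,j\le n}},
\]
valid for distinct $x_i$, with denominator the Vandermonde $\prod_{1\le i<j\le n}(x_i-x_j)$; that this agrees with the combinatorial definition is classical (see \cite{macdonald98,stanley99}). Since the right-hand side is of the form $0/0$ at $x_1=\cdots=x_n=1$, I would instead evaluate at the principal specialization $x_i=q^{i-1}$ for generic $q$ and then take a limit. Writing $\mu_j=\lambda_j+n-j$, the numerator becomes $\det\bigl((q^{\mu_j})^{\,i-1}\bigr)_{i,j}=\prod_{1\le i<j\le n}\bigl(q^{\mu_j}-q^{\mu_i}\bigr)$ by the Vandermonde evaluation, and the denominator is the same expression with $\lambda$ replaced by $0$, i.e.\ $\mu_j=n-j$. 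Factoring $q^{\mu_j}-q^{\mu_i}=-q^{\mu_j}\bigl(q^{\mu_i-\mu_j}-1\bigr)$, and similarly in the denominator, yields
\[
s_\lambda\bigl(1,q,\dots,q^{n-1}\bigr)=\prod_{1\le i<j\le n}q^{\lambda_j}\,\frac{1-q^{\mu_i-\mu_j}}{1-q^{\,j-i}}.
\]
Letting $q\to1$ and using $\tfrac{1-q^a}{1-q^b}\to a/b$ (with $a=\mu_i-\mu_j=\lambda_i-\lambda_j+j-i>0$, $b=j-i>0$) together with $q^{\lambda_j}\to1$, each factor tends to $\tfrac{\lambda_i-\lambda_j+j-i}{j-i}$; since $s_\lambda(1,q,\dots,q^{n-1})$ is a Laurent polynomial in $q$, continuous at $q=1$ with value $s_\lambda(1,\dots,1)$, this gives the claimed identity. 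It is manifestly a polynomial in the differences $\lambda_i-\lambda_j$, being $\bigl(\prod_{i<j}(j-i)\bigr)^{-1}\prod_{i<j}\bigl((\lambda_i-\lambda_j)+(j-i)\bigr)$.

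There is no real obstacle here, the lemma being entirely classical; the only points requiring care are the passage through the $q$-deformation to resolve the $0/0$ indeterminacy at the all-ones point, and the bookkeeping of the zero-padding convention when $\ell(\lambda)<n$. If one instead wanted a self-contained combinatorial derivation, one could count the tableaux directly via the Lindstr\"om--Gessel--Viennot lemma applied to families of non-intersecting lattice paths, which reproduces exactly the determinant above; but the bialternant route is the shortest, and since the paper only needs the formula as a black box (to be fed into Lemma~\ref{lemma:ssyt}'s role in the lattice-point count), I would simply cite \cite{stanley71,macdonald98} after sketching the argument above.
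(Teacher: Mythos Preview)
Your argument is correct and entirely standard: the combinatorial definition of the Schur polynomial immediately gives the tableau count at the all-ones specialization, and the principal specialization $x_i=q^{i-1}$ together with the bialternant formula and the limit $q\to 1$ is the classical derivation of the product formula (Weyl's dimension formula for $\mathrm{GL}_n$).

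There is nothing to compare against: the paper does not prove this lemma at all, merely citing it as well known with a pointer to \cite{stanley71}. Your sketch is therefore strictly more than the paper provides, and your concluding instinct---to cite \cite{stanley71,macdonald98} after a brief sketch---matches exactly what the paper does (minus the sketch).
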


Since the set of RSSYT is a proper subset of all SSYT, we have that the number of RSSYT of rectangular shape $\lambda=\langle N^{k\beta}\rangle$ is { bounded by a }polynomial in $N$ of degree $k^2\beta^2$. This concludes the proof of the bound on $\mom_N(k,\beta)$ for integer values of $k$, $\beta$, and $N$.

\section{Multiple Integrals}

In this section we prove \cref{thm:mom}.  We rely on a series of lemmas which we state in subsection~\ref{proofoutline} and then prove in subsection~\ref{statementofproof}.

\subsection{Proof outline}\label{proofoutline}
Recall from the introduction and subsection~\ref{knownresults} that theorem~\ref{thm:mom} is known for $k=1, 2$, so we will henceforth focus on integers $k>2$ (though the method we now develop can be adapted for the cases $k=1, 2$ as well). A key element of the proof is the following result (lemma 2.1) of Conrey et al.~\cite{cfkrs1}. 

\begin{lemma}\label{lemma:mci}
For $\alpha_j\in\mathbb{C}$,
\begin{align*}
\int_{U(N)}&\prod_{j=m+1}^n\det(I-Ae^{\alpha_j})\prod_{j=1}^m\det(I-A^*e^{-\alpha_j})dA\\
&=\frac{(-1)^{n(n-1)/2}}{(2\pi i)^nm!(n-m)!}\prod_{q=m+1}^{n}e^{N\alpha_q}\oint\cdots\oint\frac{e^{-N\sum_{l=m+1}^nz_l}\Delta(z_1,\dots,z_{n})^2dz_1\cdots dz_n}{\prod_{1\leq l\leq m<q\leq n}\left(1-e^{z_q-z_l}\right)\prod_{l=1}^n\prod_{q=1}^n(z_l-\alpha_q)},
\end{align*}
where the contours enclose the poles at $\alpha_1,\dots,\alpha_n$ and $\Delta(z_1,\dots,z_n)=\prod_{i<j}(z_j-z_i)$ is the Vandermonde determinant.
\end{lemma}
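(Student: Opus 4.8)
The plan is to prove Lemma~\ref{lemma:mci} by a contour-integral manipulation of the Weyl–Heine/CFKRS-type determinantal expression for ratios of characteristic polynomials, reducing the unitary-group average to a residue computation. First I would set up the eigenvalue integral: writing $A\in U(N)$ in terms of its eigenangles and applying the Weyl integration formula, the left-hand side becomes
\[
\frac{1}{N!}\int_{[0,2\pi)^N}\prod_{j=m+1}^n\prod_{p=1}^N(1-e^{i\phi_p}e^{\alpha_j})\prod_{j=1}^m\prod_{p=1}^N(1-e^{-i\phi_p}e^{-\alpha_j})\,|\Delta(e^{i\phi_1},\dots,e^{i\phi_N})|^2\,\frac{d\phi_1\cdots d\phi_N}{(2\pi)^N}.
\]
The standard trick (going back to the Conrey–Farmer–Keating–Rubinstein–Snaith computation, and ultimately to Heine's identity / the Andréief–Heine integration formula) is to recognise the product over $p$ of the linear factors, together with one copy of the Vandermonde, as a determinant whose entries are geometric-type sums; integrating over the $\phi_p$ then collapses to a single $N\times N$ determinant of residue-type contour integrals by orthogonality of $e^{i k\phi}$. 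This is the key lemma I would either cite from \cite{cfkrs1} directly or reprove: the average equals a ratio of determinants which, after Vandermonde cancellation, is a symmetric function of the $e^{\alpha_j}$ expressible as a single multiple contour integral.

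Next I would carry out the symmetrisation. The $n$ shift parameters $\alpha_1,\dots,\alpha_n$ enter the determinantal formula asymmetrically (the first $m$ via $\det(I-A^*e^{-\alpha_j})$, the rest via $\det(I-Ae^{\alpha_j})$), but the combinatorial identity that produces the clean contour-integral form requires introducing $n$ integration variables $z_1,\dots,z_n$, one per parameter, and symmetrising over them — this is the origin of the $\frac{1}{m!(n-m)!}$, the $\Delta(z_1,\dots,z_n)^2$ in the numerator, and the mixed product $\prod_{1\le l\le m<q\le n}(1-e^{z_q-z_l})^{-1}$ in the denominator (which encodes the "interaction" between the two groups of parameters). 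Concretely: each contour encircles all the poles $\alpha_1,\dots,\alpha_n$, and evaluating the $z_l$-integral by residues, one picks up a sum over ways of assigning the $z_l$ to the $\alpha_q$; the terms where the assignment is not a bijection vanish because of the $\Delta(z_1,\dots,z_n)^2$ factor (repeated $z$'s kill it), so only bijections survive, and summing over bijections with the $1/(m!(n-m)!)$ normalisation reconstructs exactly the symmetric determinantal average. The prefactor $(-1)^{n(n-1)/2}$ and $\prod_{q=m+1}^n e^{N\alpha_q}$ come from bookkeeping the orientation of the Vandermonde and pulling the $e^{N\alpha_q}$ factors out of $\det(I-Ae^{\alpha_q})=e^{N\alpha_q}\det(Ie^{-\alpha_q}-A)\cdot(\pm1)$ type rearrangements.

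The main obstacle, and where I would spend the most care, is precisely this last residue-to-average matching: verifying that expanding the multiple contour integral as a sum of residues at the $\alpha_q$ reproduces the determinantal formula for $\int_{U(N)}\prod\det\cdots\,dA$ term by term, including all signs and the combinatorial factors. One has to check (i) that non-injective assignments of $z$-variables to $\alpha$-poles contribute zero — this is clean, via the squared Vandermonde — and (ii) that the injective contributions, after accounting for the factor $\prod_{1\le l\le m<q\le n}(1-e^{z_q-z_l})^{-1}$ evaluated at $z_l\mapsto\alpha_{\sigma(l)}$, match the known closed form (e.g. the ratio-of-characteristic-polynomials formula with its $\prod(1-e^{\alpha_q-\alpha_l})$ denominators). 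Since this identity is stated as lemma~2.1 of~\cite{cfkrs1}, the cleanest route for the paper is to invoke it directly; but if a self-contained argument is wanted, the substantive content is the Heine-type determinant evaluation in the first paragraph plus the residue bookkeeping just described, and everything else is routine algebra.
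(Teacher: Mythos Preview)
The paper does not prove this lemma at all: it is stated as ``the following result (lemma 2.1) of Conrey et al.~\cite{cfkrs1}'' and simply invoked as a black box. Your proposal is therefore strictly more than what the paper does --- you sketch an actual proof via Weyl integration, the Andr\'eief/Heine determinant identity, and residue bookkeeping, which is essentially the route taken in~\cite{cfkrs1} itself, and you correctly note at the end that citing~\cite{cfkrs1} directly is the cleanest option for this paper. So there is no gap; your approach is sound, just unnecessary here since the paper treats the lemma as an imported tool rather than something to be proved.
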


Before using \cref{lemma:mci}, we first define 
\begin{equation}\label{eq:pureca}
I_{k,\beta}(\theta_1,\dots,\theta_k)\coloneqq\mathbb{E}_{A\in U(N)}\left(\prod_{j=1}^k|P_N(A,\theta_j)|^{2\beta}\right),
\end{equation}
which captures the average over the unitary group and thus 
\begin{equation}\label{eq:momall}
\mom_N(k,\beta)=\frac{1}{(2\pi)^k}\int_0^{2\pi}\cdots\int_0^{2\pi}I_{k,\beta}(\theta_1,\dots,\theta_k)d\theta_1\cdots d\theta_k.
\end{equation}

Our focus now switches to understanding $I_{k,\beta}(\underline{\theta})$.  Following Keating et al.~\cite{krrr15}, we use \cref{lemma:mci} to expand the average over the CUE to a multiple contour integral.

\begin{align}
\label{eq:intfor}
I_{k,\beta}(\underline{\theta})&=\frac{(-1)^{k\beta}e^{-i\beta N\sum_{j=1}^k\theta_{j}}}{(2\pi i)^{2k\beta}((k\beta)!)^2}\oint\cdots\oint\frac{e^{-N(z_{k\beta+1}+\cdots+z_{2k\beta})}\Delta(z_1,\dots,z_{2k\beta})^2dz_1\cdots dz_{2k\beta}}{\prod_{m\leq k\beta<n}\left(1-e^{z_n-z_m}\right)\prod_{m=1}^{2k\beta}\prod_{n=1}^{k}(z_m+i\theta_n)^{2\beta}}.
\end{align}
We note that equations~(\ref{eq:momall}) and (\ref{eq:intfor}) define  $\mom_N(k,\beta)$ as an analytic function of $N$.

We deform each of the $2k\beta$ contours so that any one now consists of a sum of $k$ small circles surrounding each of the poles at $-i\theta_1,\dots,-i\theta_k$, given by $\Gamma_{-i\theta_l}$ for $l\in\{1,\dots,k\}$, and connecting straight lines whose contributions will cancel (just as in \cite{krrr15}, we follow the procedure outlined in ~\cite{keaodg08}).  This means that we will have a sum of $k^{2k\beta}$ multiple integrals,
\begin{equation}\label{eq:expansion}
I_{k,\beta}(\underline{\theta})=\frac{(-1)^{k\beta}e^{-i\beta N\sum_{j=1}^k\theta_{j}}}{(2\pi i)^{2k\beta}((k\beta)!)^2}\sum_{\varepsilon_j\in\{1,\dots,k\}}J_{k,\beta}(\underline{\theta};\varepsilon_1,\dots,\varepsilon_{2k\beta}),
\end{equation}
where 
\begin{equation}
J_{k,\beta}(\underline{\theta};\varepsilon_1,\dots,
\varepsilon_{2k\beta})=\int_{\Gamma_{-i\theta_{\varepsilon_1}}}\cdots\int_{\Gamma_{-i\theta_{\varepsilon_{2k\beta}}}}\frac{e^{-N(z_{k\beta+1}+\cdots+z_{2k\beta})}\Delta(z_1,\dots,z_{2k\beta})^2dz_1\cdots dz_{2k\beta}}{\prod_{m\leq k\beta<n}\left(1-e^{z_n-z_m}\right)\prod_{m=1}^{2k\beta}\prod_{n=1}^{k}(z_m+i\theta_n)^{2\beta}}
\end{equation}
is the multiple contour integral with $2k\beta$ contours each specialised around one of the $k$ poles determined by the vector $\underline{\varepsilon}=(\varepsilon_1,\dots,\varepsilon_{2k\beta})$.

In fact, many of the summands do not contribute to the sum due to the highly symmetric nature of the integrand.  The following lemma, which is a generalised version of lemma 4.11 in~\cite{krrr15}, determines exactly which summands make no contribution.  

\begin{lemma}\label{lemma:symmetric}
Let a choice of contours in \cref{eq:expansion} be denoted by $\underline{\varepsilon}=(\varepsilon_1,\dots,\varepsilon_{2k\beta})$ where $\varepsilon_j\in\{1,\dots,k\}$.  If any particular pole is overrepresented in $\underline{\varepsilon}$ (i.e. some pole $-i\theta^*$ features in at least $2\beta+1$ contours), then that summand is identically zero.  
\end{lemma}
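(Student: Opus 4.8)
The plan is to analyze the structure of the integrand in the multiple contour integral $J_{k,\beta}(\underline{\theta};\underline{\varepsilon})$ when some pole $-i\theta^*$ is assigned to at least $2\beta+1$ of the contours, and show that the residue computation then forces a factor that vanishes identically. First I would fix the pole $-i\theta^*$, say $\theta^* = \theta_p$ for some $p$, and let $S = \{ j : \varepsilon_j = p \}$ be the set of contour indices pinned at $-i\theta_p$, with $|S| = r \geq 2\beta+1$. The relevant observation is that for each such $j \in S$, the variable $z_j$ is integrated around a small circle $\Gamma_{-i\theta_p}$, and in the denominator the only factor producing a pole at $z_j = -i\theta_p$ is $\prod_{n=1}^k (z_j + i\theta_n)^{2\beta}$, which contributes $(z_j + i\theta_p)^{2\beta}$ — i.e. a pole of order exactly $2\beta$ in $z_j$ (the other factors $1 - e^{z_n - z_m}$ are regular there, and the numerator $\Delta^2$ is entire). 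So each of the $r$ variables in $S$ has only an order-$2\beta$ pole available.

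The heart of the argument is then a dimension/degree count on the Vandermonde factor. When we expand $\Delta(z_1,\dots,z_{2k\beta})^2$ and collect the $r$ variables $\{z_j : j \in S\}$, the key point is that $\Delta^2$ contains the factor $\prod_{\{i,j\}\subseteq S}(z_i - z_j)^2$, which vanishes to high order on the diagonal where all these variables coincide at $-i\theta_p$. More precisely, I would argue that $\Delta^2$, viewed as a polynomial in the $r$ variables $(z_j)_{j\in S}$ (with the remaining variables as parameters), is divisible by $\prod_{\{i,j\}\subseteq S}(z_i-z_j)^2$, an expression of total degree $2\binom{r}{2} = r(r-1)$ in those variables, which moreover has the property that any monomial in $(z_j - (-i\theta_p))_{j\in S}$ appearing in it has each individual exponent controlled. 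Combined with the fact that the total available pole order across all variables in $S$ is only $2\beta \cdot r$, one shows by the standard residue argument (as in the proof of lemma 4.11 of \cite{krrr15}) that extracting the iterated residue at $z_j = -i\theta_p$ for all $j \in S$ produces zero whenever $r \geq 2\beta + 1$: there are simply not enough negative powers of $(z_j + i\theta_p)$ to survive against the vanishing of the Vandermonde on that diagonal.

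Concretely, I would iterate the residue extraction one variable at a time over $j \in S$. After taking the residue in $z_{j_1}$ (a pole of order $\le 2\beta$), the resulting function of the remaining $S$-variables still vanishes on their common diagonal to an order that decreases by at most a bounded amount per step; tracking this carefully, the antisymmetry of $\Delta^2$ under permuting the $S$-variables forces the iterated residue to vanish once $r$ exceeds $2\beta$. An equivalent and perhaps cleaner packaging: make the change of variables $w_j = z_j + i\theta_p$ for $j \in S$, so all these contours become small circles about the origin; the integrand is then $\big(\prod_{j\in S} w_j^{-2\beta}\big)$ times a function holomorphic near $w_j = 0$ that is divisible by $\prod_{\{i,j\}\subseteq S}(w_i - w_j)^2$. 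The multiple residue at the origin picks out the coefficient of $\prod_{j\in S} w_j^{2\beta-1}$ in an antisymmetric, doubly-divisible polynomial; a short linear-algebra / generalized-Vandermonde argument shows this coefficient is $0$ when $r \ge 2\beta+1$ because such a polynomial lies in the span of Schur-type antisymmetrizations indexed by strictly increasing exponent tuples, none of which is the constant tuple $(2\beta-1,\dots,2\beta-1)$.

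I expect the main obstacle to be bookkeeping: making the "order of vanishing on the diagonal versus total pole order" count fully rigorous when the residues are taken iteratively rather than all at once, since intermediate steps mix the $S$-variables with the remaining $2k\beta - r$ variables and with the factors $1 - e^{z_n - z_m}$ (which, while regular, do contribute to the Taylor expansion and could in principle restore some powers). The resolution is that those exponential factors depend on differences $z_n - z_m$ with at most one index in $S$ per factor appearing in a way that matters, so they cannot lower the order of vanishing of $\Delta^2$ along the full $S$-diagonal; once this is isolated, the count closes exactly as in \cite{krrr15}, and the generalization from $\beta$ fixed there to general $\beta$ is a matter of replacing "$\ge 2$" by "$\ge 2\beta+1$" throughout.
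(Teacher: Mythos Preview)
Your overall strategy is right --- reduce to a multivariable residue at $w_j=0$ for $j\in S$ and exploit the Vandermonde structure --- but there is a genuine gap in your divisibility claim. You assert that the holomorphic part of the integrand (after removing $\prod_{j\in S} w_j^{-2\beta}$) is divisible by $\prod_{\{i,j\}\subseteq S}(w_i-w_j)^2$, and that the factors $1-e^{z_n-z_m}$ ``cannot lower the order of vanishing of $\Delta^2$ along the full $S$-diagonal'' because ``at most one index in $S$ per factor'' matters. This is false: whenever $m\in S\cap\{1,\dots,k\beta\}$ and $n\in S\cap\{k\beta+1,\dots,2k\beta\}$ the denominator carries a factor $1-e^{z_n-z_m}$ with a simple zero at $w_m=w_n$, which cancels one of the two copies of $(w_n-w_m)$ coming from $\Delta^2$. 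So the holomorphic part is only divisible by $(w_i-w_j)$ to the \emph{first} power for such cross-pairs, and your degree count then fails: for instance with $\beta=2$, $|S\cap\{1,\dots,k\beta\}|=2$ and $|S\cap\{k\beta+1,\dots,2k\beta\}|=3$, the guaranteed order of vanishing is $2+6+6=14$ while the target monomial $\prod_{j\in S} w_j^{2\beta-1}$ has total degree $15$.

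The paper repairs this cleanly by splitting $\Delta^2=\Delta\cdot\Delta$ and absorbing \emph{one} copy of $\Delta$ into the analytic factor $G$: a single Vandermonde already supplies one factor $(z_n-z_m)$ for every pair $m\le k\beta<n$, which is exactly enough to cancel the simple zeros of $\prod(1-e^{z_n-z_m})$, so $G$ is genuinely holomorphic near the origin in the $S$-variables. The residue then asks for the coefficient of $\prod_{j\in S}w_j^{2\beta-1}$ in $G\cdot\Delta$, and now a pigeonhole on the \emph{single} Vandermonde finishes the job: in the expansion $\Delta=\sum_{\sigma\in S_{2k\beta}}\Sgn(\sigma)\prod_m z_m^{\sigma(m)-1}$ every monomial assigns \emph{distinct} exponents $\sigma(m)-1$, so among any $2\beta+1$ indices in $S$ at least one must have $\sigma(m)-1\ge 2\beta$; since $G$ contributes only non-negative powers, no term can hit the target $\prod_{j\in S}w_j^{2\beta-1}$. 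This pigeonhole is what your ``strictly increasing exponent tuples'' remark is reaching for, but it only works with one $\Delta$, not with $\Delta^2$ (whose monomial exponents $\sigma(m)+\tau(m)-2$ need not be distinct).
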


Thus we have that
\begin{equation}
I_{k,\beta}(\underline{\theta})=\frac{(-1)^{k\beta}e^{-i\beta N\sum_{j=1}^k\theta_{j}}}{(2\pi i)^{2k\beta}((k\beta)!)^2}\sum_{l_1=0}^{2\beta}\cdots\sum_{l_{k-1}=0}^{2\beta}c_{\underline{l}}(k,\beta)J_{k,\beta;\underline{l}}(\underline{\theta}),
\end{equation}
where $J_{k,\beta;\underline{l}}(\underline{\theta})$ is the integral $J_{k,\beta}(\underline{\theta};\underline{\varepsilon})$ with contours given by 
\[\underline{\varepsilon}=(\overbrace{1,\dots,1}^{l_1},\overbrace{2,\dots,2}^{l_2},\dots,\overbrace{k-1,\dots,k-1}^{l_{k-1}},\overbrace{k,\dots,k}^{2\beta},\overbrace{k-1,\dots,k-1}^{2\beta-l_{k-1}},\dots,\overbrace{1,\dots,1}^{2\beta-l_1}),\]
and $c_{\underline{l}}(k,\beta)$ is a product of binomial coefficients capturing the symmetry exhibited by the integrand: 
\begin{align}
c_{\underline{l}}(k,\beta)&=\binom{k\beta}{l_1}\binom{k\beta-l_1}{l_2}\binom{k\beta-(l_1+l_2)}{l_3}\cdots\binom{k\beta-\sum_{m=1}^{k-2}l_m}{l_{k-1}}\nonumber\\
&\times\binom{k\beta}{2\beta-l_1}\binom{(k-2)\beta+l_1}{2\beta-l_2}\cdots\binom{k\beta-\sum_{m=1}^{k-2}(2\beta-l_m)}{2\beta-l_{k-1}}.\label{prodofbinom}
\end{align}
So $c_{\underline{l}}(k,\beta)$ counts the number of ways of picking $l_1$ of the first $k\beta$ contours and $2\beta-l_1$ of the second $k\beta$ contours to surround $-i\theta_1$, and then repeating on the remaining $k\beta-l_1$ contours in the first half and $(k-2)\beta+l_1$ contours in the second half, and so on. 


Next we perform the change of variables, 
\[z_n=\frac{v_n}{N}-i\alpha_n,\]
where 
\begin{equation}\label{alphavector}
\alpha_n=\begin{cases}\theta_1&n\in\{1,\dots,l_1\}\cup\{2(k-1)\beta+1+l_1,\dots,2k\beta\}\\
\theta_2&n\in\{l_1+1,\dots,l_1+l_2\}\cup\{2(k-2)\beta+1+l_1+l_2,\dots,2(k-1)\beta+l_1\}\\
\vdots&\quad\vdots\\
\theta_{k-1}&n\in\{\sum_{m=1}^{k-2}l_m+1,\dots,\sum_{m=1}^{k-1}l_m\}\cup\{2\beta+1+\sum_{m=1}^{k-1}l_m,\dots,4\beta+\sum_{m=1}^{k-2}l_m\}\\
\theta_k&n\in\{\sum_{m=1}^{k-1}l_m+1,\dots,\sum_{m=1}^{k-1}l_m+2\beta\}.\end{cases}
\end{equation}
which shifts all the contours to be small circles surrounding the origin. Then up to terms of order $1/N$ smaller\footnote{{Henceforth, whenever we write `up to terms of order $1/N$' followed by a statement of the form $A(N)\sim B(N)$ we mean that $A(N)=B(N)(1+O(1/N))$ as $N\rightarrow\infty$.}}, we have that the integrand of $J_{k,\beta;\underline{l}}(\underline{\theta})$ is
\begin{align}
&\frac{e^{-\sum_{m=k\beta+1}^{2k\beta}v_m}e^{iN\sum_{m=k\beta+1}^{2k\beta}\alpha_m}\prod_{\substack{m<n\\\alpha_m\neq\alpha_n}}(i\alpha_m-i\alpha_n)^2\prod_{\substack{m<n\\\alpha_m=\alpha_n}}\left(\frac{v_n-v_m}{N}\right)^2\prod_{m=1}^{2k\beta}\frac{dv_m}{N}}{\prod_{\substack{m\leq k\beta<n\\\alpha_n\neq\alpha_m}}\left(1-e^{\frac{v_n-v_m}{N}}e^{i(\alpha_m-\alpha_n)}\right)\prod_{\substack{m\leq k\beta<n\\\alpha_m=\alpha_n}}\left(\frac{v_m-v_n}{N}\right)\prod_{m=1}^{2k\beta}\prod_{n=1}^k\left(\frac{v_m}{N}+i(\theta_n-\alpha_m)\right)^{2\beta}}\nonumber\\
&\quad\sim\frac{e^{iN\sum_{m=k\beta+1}^{2k\beta}\alpha_m}}{N^{2k\beta}}\frac{e^{-\sum_{m=k\beta+1}^{2k\beta}v_m}\prod_{\substack{m<n\\\alpha_m\neq\alpha_n}}(i\alpha_m-i\alpha_n)^2\prod_{\substack{m<n\\\alpha_m=\alpha_n}}\left(\frac{v_n-v_m}{N}\right)^2\prod_{m=1}^{2k\beta}\left(\frac{v_m}{N}\right)^{-2\beta}\prod_{m=1}^{2k\beta}{dv_m}}{\prod_{\substack{m\leq k\beta<n\\\alpha_n\neq\alpha_m}}\left(1-e^{\frac{v_n-v_m}{N}}e^{i(\alpha_m-\alpha_n)}\right)\prod_{\substack{m\leq k\beta<n\\\alpha_m=\alpha_n}}\left(\frac{v_m-v_n}{N}\right)\prod_{\substack{m< n\\\alpha_m\neq\alpha_n}}(i\alpha_m-i\alpha_n)^2}\\
&\quad=\frac{e^{iN\sum_{m=k\beta+1}^{2k\beta}\alpha_m}N^{4k\beta^2}}{N^{2k\beta}}\frac{e^{-\sum_{m=k\beta+1}^{2k\beta}v_m}\prod_{\substack{m<n\\\alpha_m=\alpha_n}}\left(\frac{v_n-v_m}{N}\right)^2\prod_{m=1}^{2k\beta}\frac{dv_m}{v_m^{2\beta}}}{\prod_{\substack{m\leq k\beta<n\\\alpha_n\neq\alpha_m}}\left(1-e^{\frac{v_n-v_m}{N}}e^{i(\alpha_m-\alpha_n)}\right)\prod_{\substack{m\leq k\beta<n\\\alpha_m=\alpha_n}}\left(\frac{v_m-v_n}{N}\right)}.\label{midintegrand}
\end{align}

To determine the power of $N$ coming from the terms originating from the Vandermonde determinant, we count the sizes of the following sets,
\begin{align}
\#\{(m,n): 1\leq m<n\leq 2k\beta\}=\binom{2k\beta}{2}&=k\beta(2k\beta-1)\label{sizeofsetA}\\
\#\{(m,n): 1\leq m<n\leq 2k\beta, \alpha_m\neq\alpha_n\}&=2\beta^2k(k-1)\label{sizeofsetB}\\
\#\{(m,n): 1\leq m<n\leq 2k\beta, \alpha_m=\alpha_n\}&=k\beta(2\beta-1)\label{sizeofsetC}.
\end{align}

One sees that, for example, (\ref{sizeofsetB}) results from the following calculation.  Firstly, we recall the structure of the vector \underline{$\alpha$},
\begin{equation}
(\overbrace{\theta_1,\dots,\theta_1}^{l_1},\overbrace{\theta_2,\dots,\theta_2}^{l_2},\dots,\overbrace{\theta_{k-1},\dots,\theta_{k-1}}^{l_{k-1}},\underbrace{\theta_k,\dots,\theta_k}_{2\beta},\overbrace{\theta_{k-1},\dots,\theta_{k-1}}^{2\beta-l_{k-1}},\dots,\overbrace{\theta_1,\dots,\theta_1}^{2\beta-l_1}).
\end{equation}
As we are looking for pairs $(m,n)$ such that $m<n$ and $\alpha_m\neq\alpha_n$, it is clear that any of the first $l_1$ choices of $\theta_1$ can be paired with any of the following $2k\beta-l_1$ options, except for the final $2\beta-l_1$ as these are also $\theta_1$.  Thus, the total number of pairs $(m,n)$ with $m\in\{1,\dots,l_1\}$, $m<n$, and $\theta_n\neq \theta_1$ is $l_1(2k\beta-l_1-(2\beta-l_1))$.  Continuing in this fashion we see that in general, for $m\in\{1,\dots,\sum_{j=1}^{k-1}l_j\}$, the number of such pairs is given by considering
\begin{equation}
\sum_{i=1}^{k-1}l_i\Big(2\beta+ \sum_{j=i+1}^{k-1}l_j+\sum_{\substack{j=1\\j\neq i}}^{k-1}(2\beta-l_j)\Big).
\end{equation}
Similarly, for $m\in\{l_1+\cdots+l_{k-1}+1,\dots,l_1+\cdots+l_{k-1}+2\beta\}$, the number of pairs satisfying the correct conditions is 
\begin{equation}
\sum_{i=1}^{k-1}2\beta(2\beta-l_i),
\end{equation}
and if $m\in\{l_1+\cdots+l_{k-1}+2\beta+1,\dots,2k\beta\}$ then we get
\begin{equation}
\sum_{i=1}^{k-1}(2\beta-l_i)\sum_{j=1}^{i-1}(2\beta-l_j).
\end{equation}
In total therefore, we have to evaluate

\begin{equation}\label{calcforsizeofset}
\sum_{i=1}^{k-1}l_i\Big(2\beta+ \sum_{j=i+1}^{k-1}l_j+\sum_{\substack{j=1\\j\neq i}}^{k-1}(2\beta-l_j)\Big)+\sum_{i=1}^{k-1}2\beta(2\beta-l_i)+\sum_{i=1}^{k-1}(2\beta-l_i)\sum_{j=1}^{i-1}(2\beta-l_j).
\end{equation}
By collecting like terms we see that (\ref{calcforsizeofset}) is equal to
\begin{align}
2\beta(k-1)&\sum_{i=1}^{k-1}l_i-\sum_{i=1}^{k-1}\sum_{j=1}^{i-1}l_il_j+2\beta(2\beta(k-1)-\sum_{i=1}^{k-1}l_i)+\sum_{i=1}^{k-1}\sum_{j=1}^{i-1}(4\beta^2-2\beta(l_i+l_j)+l_il_j)\nonumber\\
&=2\beta(k-1)\sum_{i=1}^{k-1}l_i+2\beta(2\beta(k-1)-\sum_{i=1}^{k-1}l_i)+\sum_{i=1}^{k-1}\sum_{j=1}^{i-1}(4\beta^2-2\beta(l_i+l_j))\\
&=2\beta^2k(k-1)+2\beta\left((k-2)\sum_{i=1}^{k-1}l_i-\sum_{i=1}^{k-1}(i-1)l_i-\sum_{j=1}^{k-2}(k-1-j)l_j\right)\\
&=2\beta^2k(k-1)
\end{align}
as claimed in (\ref{sizeofsetB}). Then (\ref{sizeofsetC}) can be deduced immediately since it is the difference between (\ref{sizeofsetA}) and (\ref{sizeofsetB}).

To count the remaining power of $N$ that remains in the denominator of the integrand in \eqref{midintegrand}, we define 
\begin{align}
A_{k,\beta;\underline{l}}&\coloneqq\{(m,n):1\leq m\leq k\beta<n\leq 2k\beta, \alpha_m=\alpha_n\}\\
B_{k,\beta;\underline{l}}&\coloneqq\{(m,n):1\leq m\leq k\beta<n\leq 2k\beta, \alpha_m\neq\alpha_n\},
\end{align}
so $|A_{k,\beta;\underline{l}}|+|B_{k,\beta;\underline{l}}|=k^2\beta^2$. Hence
\begin{align}
\prod_{\substack{m<n\\\alpha_m=\alpha_n}}\left(\frac{v_n-v_m}{N}\right)^2&=\frac{1}{N^{2k\beta(2\beta-1)}}\prod_{\substack{m<n\\\alpha_m=\alpha_n}}\left({v_n-v_m}\right)^2\\
\prod_{\substack{m\leq k\beta<n\\\alpha_m=\alpha_n}}\left(\frac{v_m-v_n}{N}\right)&=\frac{1}{(-N)^{|A_{k,\beta;\underline{l}}|}}\prod_{\substack{m\leq k\beta<n\\\alpha_m=\alpha_n}}\left({v_n-v_m}\right).
\end{align}

Returning once more to the integrand of $J_{k,\beta;\underline{l}}(\underline{\theta})$, we have that up to terms of order $1/N$ smaller it is equal to
\begin{equation}
{e^{iN\sum_{m=k\beta+1}^{2k\beta}\alpha_m}(-N)^{|A_{k,\beta;\underline{l}}|}}\frac{e^{-\sum_{m=k\beta+1}^{2k\beta}v_m}\prod_{\substack{m<n\\\alpha_m=\alpha_n}}\left({v_n-v_m}\right)^2\prod_{m=1}^{2k\beta}\frac{dv_m}{v_m^{2\beta}}}{\prod_{\substack{m\leq k\beta<n\\\alpha_n\neq\alpha_m}}\left(1-e^{\frac{v_n-v_m}{N}}e^{i(\alpha_m-\alpha_n)}\right)\prod_{\substack{m\leq k\beta<n\\\alpha_m=\alpha_n}}\left({v_n-v_m}\right)}.
\end{equation}

If we set $l_k=k\beta-(l_1+\cdots+l_{k-1})$, then  { the calculation of the size of $A_{k,\beta;\underline{l}}$ follows the method outlined for $(\ref{sizeofsetB})$ and we have}
\[(-1)^{|A_{k,\beta;\underline{l}}|}=(-1)^{\sum_{j=1}^{k}l_j(2\beta-l_j)}=(-1)^{k\beta},\]
and so 
\begin{equation}
I_{k,\beta}(\underline{\theta})\sim \sum_{l_1,\dots,l_{k-1}=0}^{2\beta}\frac{c_{\underline{l}}(k,\beta)N^{|A_{k,\beta;\underline{l}}|}}{(2\pi i)^{2k\beta}((k\beta)!)^2}\int_{\Gamma_0}\cdots\int_{\Gamma_0}\frac{e^{-iN(\beta \sum_{j=1}^k\theta_{j}-\sum_{m=k\beta+1}^{2k\beta}\alpha_m)}f(\underline{v};\underline{l})\prod_{m=1}^{2k\beta}dv_m}{\prod_{(m,n)\in B_{k,\beta;\underline{l}}}\left(1-e^{\frac{v_n-v_m}{N}}e^{i(\alpha_m-\alpha_n)}\right)},\label{eq:Ikbeta}
\end{equation}
where we isolate the terms with no $\theta$ dependence and denote them by $f(\underline{v};\underline{l})$, so explicitly
\begin{equation}\label{notheta}
f(\underline{v};\underline{l})=\frac{e^{-\sum_{m=k\beta+1}^{2k\beta}v_m}\prod_{\substack{m<n\\\alpha_m=\alpha_n}}\left({v_n-v_m}\right)^2}{\prod_{\substack{m\leq k\beta<n\\\alpha_m=\alpha_n}}\left({v_n-v_m}\right)\prod_{m=1}^{2k\beta}v_m^{2\beta}}.
\end{equation}

We now focus on the denominator of the integrand in \cref{eq:Ikbeta}, which is the final term involving $N$. It will prove to be fruitful to use the properties of the $\alpha_j$s to remove the dependence on $(m,n)\in B_{k,\beta;\underline{l}}$. To this end, one can split the set $B_{k,\beta;\underline{l}}$ into $\binom{k}{2}$ disjoint subsets: 
\begin{equation}\label{bdisjointsets}
B_{k,\beta;\underline{l}}=\bigcup_{1\leq\sigma<\tau\leq k}S_{\sigma,\tau},
\end{equation}
where 
\begin{equation}\label{bpartitionA}
S_{\sigma,\tau}\coloneqq\{(m,n)\in B_{k,\beta;\underline{l}}:\alpha_m-\alpha_n=\pm(\theta_\tau-\theta_\sigma)\},
\end{equation}
and further partition $S_{\sigma,\tau}$ into two subsets $S^+_{\sigma,\tau}, S^-_{\sigma,\tau}$, where 
\begin{align}
S^+_{\sigma,\tau}&=\{(m,n)\in B_{k,\beta;\underline{l}}:\alpha_m-\alpha_n=\theta_\tau-\theta_\sigma\}\\
S^-_{\sigma,\tau}&=\{(m,n)\in B_{k,\beta;\underline{l}}:\alpha_m-\alpha_n=\theta_\sigma-\theta_\tau\}.
\end{align}

The goal of the next lemma is to use the structure of the vector $\underline{\alpha}$, to `decouple' the term $\exp(i(\alpha_m-\alpha_n))$ from the pair $(m,n)$. 

\begin{lemma}\label{lemma:geosums}

\begin{align*}
&\prod_{(m,n)\in B_{k,\beta;\underline{l}}}\left(1-\exp\left(\frac{v_n-v_m}{N}\right)\exp\left(i(\alpha_m-\alpha_n)\right)\right)^{-1}\\
&\quad=(-1)^{g(k,\beta;\underline{l})}\sum_{\substack{t_{\sigma,\tau}=-\infty\\\text{for }1\leq\sigma<\tau\leq k}}^\infty\exp\left(i\sum_{\gamma<\rho}(\theta_\rho-\theta_\gamma)(t_{\gamma,\rho}+|S^-_{\gamma,\rho}|)\right)\sum_{\substack{(x_{m,n})\\ \textbf{(}\textbf{$\star$}\textbf{)}}}\exp\left({\tfrac{1}{N}\sum x_{m,n}(v_n-v_m)^{\pm}}\right),
\end{align*}

where \[(v_n-v_m)^\pm=\begin{cases}v_n-v_m&\text{if }(m,n)\in S_{\sigma,\tau}^+\\v_m-v_n&\text{if }(m,n)\in S_{\sigma,\tau}^-,\end{cases}\]
the first sum appearing in the right hand size of the statement of the lemma should be read as 
\[\sum_{\substack{t_{\sigma,\tau}=-\infty\\\text{for }1\leq\sigma<\tau\leq k}}^\infty=\sum_{t_{1,2}=-\infty}^\infty\sum_{t_{1,3}=-\infty}^\infty\cdots\sum_{t_{1,k}=-\infty}^\infty\sum_{t_{2,3}=-\infty}^\infty\cdots\sum_{t_{k-1,k}=-\infty}^\infty\]  
and the second sum is over vectors (of weights) $\underline{x}=(x_{m,n})_{(m,n)\in B_{k,\beta;\underline{l}}}$ subject to constraints given by ($\star$), which are 
\begin{align}
\sum_{(m,n)\in S_{\sigma,\tau}}x_{m,n}&=t_{\sigma,\tau}+|S_{\sigma,\tau}^-|,\quad 1\leq \sigma<\tau\leq k\label{bconstraint}\\
x_{m,n}&\in \mathbb{Z}, \quad \forall (m,n)\in B_{k,\beta;\underline{l}}\\
H(-x_{m,n}\RE\{(v_n-v_m)^\pm\})&=1,\quad\forall (m,n)\in B_{k,\beta;\underline{l}},
\end{align}
where $\operatorname{H}(x)$ is the Heaviside step function (so $\operatorname{H}(x)=1$ if $x\geq0$ and $\operatorname{H}(x)=0$ if $x<0$).  Furthermore, the prefactor can be expressed as follows, 
\[(-1)^{g(k,\beta;\underline{l})}=(-1)^{\sum_{\sigma<\tau}|S_{\sigma,\tau}^-|}\prod_{\substack{(m,n)\in S_{\sigma,\tau}\\ 1\leq \sigma<\tau\leq k}} (-\operatorname{sign}(\RE\{(v_n-v_m)^\pm\})).\]
\end{lemma}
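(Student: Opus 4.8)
The plan is to prove the identity by expanding each of the $|B_{k,\beta;\underline{l}}|$ factors of the product as a geometric series and then reorganising the resulting multi-index sum so that all of the $\underline{\theta}$-dependence is gathered into the $\binom{k}{2}$ new variables $t_{\sigma,\tau}$; this follows the strategy used in \cite{keaodg08, krrr15}. The first ingredient is the structural fact, read off from \eqref{alphavector} and already reflected in \eqref{bdisjointsets} and \eqref{bpartitionA}, that for each $(m,n)\in B_{k,\beta;\underline{l}}$ one has $\{\alpha_m,\alpha_n\}=\{\theta_\sigma,\theta_\tau\}$ for a \emph{unique} pair $1\leq\sigma<\tau\leq k$, so that $B_{k,\beta;\underline{l}}=\bigsqcup_{\sigma<\tau}(S^+_{\sigma,\tau}\sqcup S^-_{\sigma,\tau})$ with $e^{i(\alpha_m-\alpha_n)}=e^{i(\theta_\tau-\theta_\sigma)}$ on $S^+_{\sigma,\tau}$ and $e^{i(\alpha_m-\alpha_n)}=e^{-i(\theta_\tau-\theta_\sigma)}$ on $S^-_{\sigma,\tau}$. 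I would work on the full-measure set of $\underline{\theta}\in[0,2\pi)^k$ with pairwise distinct entries (the excluded set is negligible once \eqref{eq:Ikbeta} is integrated over $\underline{\theta}$) and take the $v_m$-contours to be circles of a small fixed radius about $0$.

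Next, for a single factor set $w_{m,n}=\exp((v_n-v_m)/N)\,e^{i(\alpha_m-\alpha_n)}$; since $|e^{i(\alpha_m-\alpha_n)}|=1$ we have $|w_{m,n}|=\exp(\RE(v_n-v_m)/N)$, which for all large $N$ is $<1$ when $\RE(v_n-v_m)<0$ and $>1$ when $\RE(v_n-v_m)>0$ (the $v_m$ lie on fixed compact contours, so $\RE(v_n-v_m)$ is bounded and, off a null set, nonzero). I would expand $(1-w_{m,n})^{-1}=\sum_{x\geq0}w_{m,n}^{x}$ in the first case and $(1-w_{m,n})^{-1}=-w_{m,n}^{-1}\sum_{x\geq0}w_{m,n}^{-x}$ in the second, and on $S^-_{\sigma,\tau}$ relabel the summation index so that the exponential carried by each term is $e^{i(\theta_\tau-\theta_\sigma)}$ rather than $e^{-i(\theta_\tau-\theta_\sigma)}$ --- this is exactly what the symbol $(v_n-v_m)^\pm$ (namely $v_n-v_m$ on $S^+$ and $v_m-v_n$ on $S^-$) is built to track. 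Each factor then reads $\big(-\operatorname{sign}(\RE\{(v_n-v_m)^\pm\})\big)(-1)^{\mathbf{1}[(m,n)\in S^-]}\sum_{x_{m,n}}\exp\!\big(x_{m,n}(v_n-v_m)^\pm/N\big)e^{\,ix_{m,n}(\theta_\tau-\theta_\sigma)}$, the sum running over $x_{m,n}\in\mathbb{Z}$ subject to $H(-x_{m,n}\RE\{(v_n-v_m)^\pm\})=1$. Multiplying over all $(m,n)\in B_{k,\beta;\underline{l}}$, the prefactors multiply to exactly $(-1)^{g(k,\beta;\underline{l})}$, and since $B_{k,\beta;\underline{l}}$ is finite and each series converges geometrically and uniformly in $\underline{v}$ for $N$ large, the finite product of these absolutely convergent series may be expanded and the order of summation interchanged freely.

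It remains to collect the $\underline{\theta}$-exponentials: after multiplying out, the total exponent of $i(\theta_\tau-\theta_\sigma)$ is $\sum_{(m,n)\in S_{\sigma,\tau}}x_{m,n}$, and setting $t_{\sigma,\tau}:=\sum_{(m,n)\in S_{\sigma,\tau}}x_{m,n}-|S^-_{\sigma,\tau}|\in\mathbb{Z}$ turns this into constraint \eqref{bconstraint}, pulls the factor $\exp\!\big(i\sum_{\sigma<\tau}(\theta_\tau-\theta_\sigma)(t_{\sigma,\tau}+|S^-_{\sigma,\tau}|)\big)$ out in front of the inner sum over $\underline{x}$, and leaves exactly the $v$-weight $\exp\!\big(\tfrac1N\sum x_{m,n}(v_n-v_m)^\pm\big)$; re-summing as $\sum_{\underline{x}}=\sum_{(t_{\sigma,\tau})\in\mathbb{Z}^{\binom k2}}\sum_{\underline{x}:\,(\star)}$ (with vanishing contribution from any unattainable choice of the $t_{\sigma,\tau}$) yields the asserted identity. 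I expect the only non-routine part to be the sign and support bookkeeping of the second paragraph: one must check that the branch of each geometric expansion is governed consistently by $\operatorname{sign}(\RE\{(v_n-v_m)^\pm\})$, that the $\pm1$'s coming from the $S^-$ relabellings and from the $|w_{m,n}|>1$ branches recombine into $(-1)^{g(k,\beta;\underline{l})}$ with no residual error, and that the admissible ranges of the $x_{m,n}$ are precisely the Heaviside conditions in $(\star)$ (up to the harmless boundary value $x_{m,n}=0$); everything else is straightforward.
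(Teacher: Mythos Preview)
Your proposal is correct and follows essentially the same approach as the paper: partition $B_{k,\beta;\underline{l}}$ into the sets $S^\pm_{\sigma,\tau}$, expand each factor as a convergent geometric series whose branch is dictated by $\operatorname{sign}(\RE\{(v_n-v_m)^\pm\})$, and then regroup the resulting multi-sum by introducing $t_{\sigma,\tau}=\sum_{(m,n)\in S_{\sigma,\tau}}x_{m,n}-|S^-_{\sigma,\tau}|$. The only cosmetic difference is the order of two steps: the paper first rewrites the $S^-_{\sigma,\tau}$ factors algebraically (pulling out $(-1)^{|S^-_{\sigma,\tau}|}e^{i(\theta_\tau-\theta_\sigma)|S^-_{\sigma,\tau}|}\prod_{S^-}e^{(v_p-v_q)/N}$) so that every factor carries the same phase $e^{i(\theta_\tau-\theta_\sigma)}$ \emph{before} expanding, whereas you expand first and then relabel the summation index on $S^-_{\sigma,\tau}$; the sign bookkeeping and the resulting constraints $(\star)$ come out identically, including the same harmless $x_{m,n}=0$ boundary ambiguity you flag.
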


Using \cref{lemma:geosums} with \cref{eq:Ikbeta} we have
\begin{align}
I_{k,\beta}(\underline{\theta})&\sim \sum_{l_1,\dots,l_{k-1}=0}^{2\beta}\frac{\tilde{c}_{\underline{l}}(k,\beta)N^{|A_{k,\beta;\underline{l}}|}}{(2\pi i)^{2k\beta}((k\beta)!)^2}\int_{\Gamma_0}\cdots\int_{\Gamma_0}f(\underline{v};\underline{l})\exp\left(iN\Big(\sum_{m=k\beta+1}^{2k\beta}\alpha_m-\beta \sum_{j=1}^k\theta_{j}\Big)\right)\nonumber\\
&\quad\times\sum_{\substack{t_{\sigma,\tau}=-\infty\\\text{for }1\leq\sigma<\tau\leq k}}^\infty\exp\left(i\sum_{{1\leq}\gamma<\rho{\leq k}}(\theta_\rho-\theta_\gamma)(t_{\gamma,\rho}+|S^-_{\gamma,\rho}|)\right)\sum_{\substack{(x_{m,n})\\ \textbf{(}\textbf{$\star$}\textbf{)}}}\exp\left({\tfrac{1}{N}\sum x_{m,n}(v_n-v_m)^{\pm}}\right)\prod_{m=1}^{2k\beta}dv_m,\label{eq:IkbetaGEO}
\end{align}
where 
\[\tilde{c}_{\underline{l}}(k,\beta)=(-1)^{g(k,\beta;\underline{l})}c_{\underline{l}}(k,\beta),\]
and the function $g(k,\beta;\underline{l})$ is as described in the statement of \cref{lemma:geosums}. Now we relate $I_{k,\beta}(\underline{\theta})$ back to $\mom_N(k,\beta)$ using \cref{eq:momall}, 
and deduce the following lemma. 

\begin{lemma}\label{momlemma1}
\begin{align*}
\mom_N(k,\beta)&\sim\sum_{l_1,\dots,l_{k-1}=0}^{2\beta}\frac{\tilde{c}_{\underline{l}}(k,\beta)N^{|A_{k,\beta;\underline{l}}|}}{(2\pi i)^{2k\beta}((k\beta)!)^2}\int_{\Gamma_0}\cdots\int_{\Gamma_0}f(\underline{v};\underline{l})\hspace{-0.3cm}\sum_{\substack{t_{\sigma,\tau}=-\infty\\\text{for }1\leq\sigma<\tau\leq k}}^\infty\sum_{\substack{(x_{m,n})\\ \textbf{(}\textbf{$\star$}\textbf{)}}}\exp\left({\tfrac{1}{N}\sum x_{m,n}(v_n-v_m)^{\pm}}\right)\\
&\times\prod_{j=1}^{k-1}\delta_{N(l_j-\beta)+\sum_{\rho=j+1}^k(t_{j,\rho}+|S^-_{j,\rho}|)-\sum_{\gamma=1}^{j-1}(t_{\gamma,j}+|S^-_{\gamma,j}|)}\prod_{m=1}^{2k\beta}dv_m\nonumber.
\end{align*}
\end{lemma}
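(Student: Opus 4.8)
The plan is to substitute the asymptotic expression \eqref{eq:IkbetaGEO} for $I_{k,\beta}(\underline{\theta})$ (which came from combining \cref{lemma:geosums} with \eqref{eq:Ikbeta}) into the defining formula \eqref{eq:momall} for $\mom_N(k,\beta)$ and to carry out the $k$ integrations over $\theta_1,\dots,\theta_k$. The key observation is that, for a fixed choice of $\underline l$, a fixed tuple $(t_{\sigma,\tau})_{\sigma<\tau}$ and a fixed weight vector $(x_{m,n})$, the entire $\underline\theta$-dependence of the summand is carried by the two exponential factors $\exp\!\big(iN(\sum_{m=k\beta+1}^{2k\beta}\alpha_m-\beta\sum_{j=1}^k\theta_j)\big)$ and $\exp\!\big(i\sum_{\gamma<\rho}(\theta_\rho-\theta_\gamma)(t_{\gamma,\rho}+|S^-_{\gamma,\rho}|)\big)$; the prefactor $\tilde c_{\underline l}(k,\beta)N^{|A_{k,\beta;\underline l}|}/((2\pi i)^{2k\beta}((k\beta)!)^2)$, the function $f(\underline v;\underline l)$ from \eqref{notheta}, and the contour integrals over $\Gamma_0$ are all independent of $\underline\theta$. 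Hence, granting the interchange of $\int d\underline\theta$ with the sums, each term may be integrated separately via $\tfrac{1}{2\pi}\int_0^{2\pi}e^{ic\theta}\,d\theta=\delta_c$, producing one Kronecker delta for each $j\in\{1,\dots,k\}$, and the factor $(2\pi)^{-k}$ in \eqref{eq:momall} cancels the $(2\pi)^k$ coming from the $k$ trivial integrations of the $\theta$-independent part.

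First I would justify the interchange of summation and integration. Observe that once a weight vector $(x_{m,n})$ satisfying the sign constraints in $(\star)$ is fixed, \eqref{bconstraint} determines every $t_{\sigma,\tau}$, so the double sum $\sum_{\underline t}\sum_{\underline x:(\star)}$ is really a single sum over admissible $\underline x$. On $\Gamma_0$ the factor $\exp\!\big(\tfrac1N\sum x_{m,n}(v_n-v_m)^{\pm}\big)$ has modulus $\exp\!\big(\tfrac1N\sum x_{m,n}\RE\{(v_n-v_m)^{\pm}\}\big)$, and the Heaviside condition $H(-x_{m,n}\RE\{(v_n-v_m)^{\pm}\})=1$ forces each summand of this exponent to be non-positive, tending to $-\infty$ as $|x_{m,n}|\to\infty$ away from the negligible set where $\RE\{(v_n-v_m)^{\pm}\}=0$. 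This gives absolute convergence of the sum, uniformly in $\underline\theta$ and in $\underline v\in\Gamma_0$ for $N$ large, exactly as in the analogous steps of \cite{keaodg08, krrr15}, so Fubini's theorem applies and the $\underline\theta$-integral can be taken term by term.

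Next comes the bookkeeping of the $\theta_j$-coefficients. Using the explicit form of $\underline\alpha$ in \eqref{alphavector} and writing $l_k=k\beta-(l_1+\cdots+l_{k-1})$, the first $k\beta$ entries of $\underline\alpha$ contain $\theta_j$ exactly $l_j$ times while the last $k\beta$ entries contain $\theta_j$ exactly $2\beta-l_j$ times, so $\sum_{m=k\beta+1}^{2k\beta}\alpha_m=\sum_{j=1}^k(2\beta-l_j)\theta_j$ and hence $N\big(\sum_{m=k\beta+1}^{2k\beta}\alpha_m-\beta\sum_j\theta_j\big)=-N\sum_{j=1}^k(l_j-\beta)\theta_j$. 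In the second exponential $\theta_j$ occurs as $\theta_\rho$ with coefficient $+(t_{\gamma,j}+|S^-_{\gamma,j}|)$ for each $\gamma<j$ and as $\theta_\gamma$ with coefficient $-(t_{j,\rho}+|S^-_{j,\rho}|)$ for each $\rho>j$. Collecting, the coefficient of $\theta_j$ is $c_j=-N(l_j-\beta)+\sum_{\gamma=1}^{j-1}(t_{\gamma,j}+|S^-_{\gamma,j}|)-\sum_{\rho=j+1}^{k}(t_{j,\rho}+|S^-_{j,\rho}|)$, so the $\underline\theta$-integral of the term is $\prod_{j=1}^k\delta_{c_j}$; since $\delta_{c_j}=\delta_{-c_j}$ this is exactly the product of deltas in the statement. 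Finally I would note that the $k$ deltas are not independent: $\sum_{j=1}^k(l_j-\beta)=k\beta-k\beta=0$, and each $t_{\gamma,\rho}+|S^-_{\gamma,\rho}|$ with $\gamma<\rho$ contributes once with a $+$ sign (at $j=\rho$) and once with a $-$ sign (at $j=\gamma$), so $\sum_{j=1}^k c_j\equiv 0$. Consequently $\delta_{c_k}$ is implied by $\delta_{c_1}\cdots\delta_{c_{k-1}}$, and the product may be restricted to $j=1,\dots,k-1$; reinstating $\sum_{\underline t}\sum_{\underline x}$, the $\Gamma_0$-integrals and the $\underline l$-sum then reproduces the claimed expression verbatim.

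I expect the main obstacle to be the rigorous interchange of the infinite $\underline t$-summation (equivalently the admissible $\underline x$-summation) with the $\underline\theta$-integration — that is, establishing absolute convergence uniform in $\underline\theta$ and in $\underline v\in\Gamma_0$ so that the term-by-term extraction of Kronecker deltas is legitimate, including a small argument to handle the measure-zero set where $\RE\{(v_n-v_m)^{\pm}\}$ vanishes. The coefficient bookkeeping in the third step, though intricate, is purely mechanical once the block structure of $\underline\alpha$ from \eqref{alphavector} is used, and the redundancy of the $k$-th delta is a one-line consequence of the antisymmetry in $(\gamma,\rho)$ together with $\sum_{j=1}^k l_j=k\beta$.
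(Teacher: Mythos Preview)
Your proposal is correct and follows essentially the same route as the paper: substitute \eqref{eq:IkbetaGEO} into \eqref{eq:momall}, interchange the $\underline\theta$-integration with the sums, and integrate term by term to obtain Kronecker deltas. The only cosmetic difference is that the paper first rewrites both exponentials in terms of the differences $\theta_j-\theta_k$ (using $\theta_\rho-\theta_\gamma=(\theta_k-\theta_\gamma)-(\theta_k-\theta_\rho)$ and the identity \eqref{eq:alphasum}), so that only $k-1$ nontrivial integrations arise and the redundancy of the $k$-th delta is built in from the start, whereas you collect the coefficient of each $\theta_j$ separately and then observe $\sum_j c_j=0$; the two computations are equivalent.
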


In order to take the asymptotic analysis further we require the next lemma. 

\begin{lemma}\label{momlemma2}
\begin{align*}
&\sum_{\substack{t_{\sigma,\tau}=-\infty\\\text{for }1\leq\sigma<\tau\leq k}}^\infty\sum_{\substack{(x_{m,n})\\ \textbf{(}\textbf{$\star$}\textbf{)}}}\exp\left({\tfrac{1}{N}\sum x_{m,n}(v_n-v_m)^{\pm}}\right)\prod_{j=1}^{k-1}\delta_{N(l_j-\beta)+\sum_{\rho=j+1}^k(t_{j,\rho}+|S^-_{j,\rho}|)-\sum_{\gamma=1}^{j-1}(t_{\gamma,j}+|S^-_{\gamma,j}|)}\\
&\sim N^{|B_{k,\beta;\underline{l}}|-k+1}\kappa_k\left((k-1)\beta-\sum_{j=1}^{k-1}l_j\right)^{|B_{k,\beta;\underline{l}}|-\binom{k}{2}}\Psi_{k,\beta;\underline{l}}(((k-1)\beta-\sum_{j=1}^{k-1}l_j)\underline{v}),\nonumber
\end{align*}
where $\kappa_k$ is a constant depending on $k$,  \[\Psi_{k,\beta;\underline{l}}(\underline{v}) =\underset{\substack{\underline{y}=(y_{m,n})_{(m,n)\in B_{k,\beta;\underline{l}}}\\\textbf{(}\tilde{\textbf{$\ddagger$}}\textbf{)}}}{\int\cdots\int}\exp\left({\sum y_{m,n}(v_n-v_m)^{\pm}}\right)\prod dy_{m,n},\]
and $\textbf{(}\tilde{\textbf{$\ddagger$}}\textbf{)}$ denotes normalised constraints related to those previously denoted \textbf{(}\textbf{$\star$}\textbf{)}, see proof for more details.
\end{lemma}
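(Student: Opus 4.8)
The plan is to read the left-hand side of \cref{momlemma2} as a weighted count of integer points in an affine region and then to pass to the continuum limit as $N\to\infty$ by a Riemann-sum argument — the multi-block analogue of the lattice computation underlying $\gamma_\eta(c)$ in \cite{krrr15} — tracking separately the power of $N$, the power of the scalar $c\coloneqq(k-1)\beta-\sum_{j=1}^{k-1}l_j$ that governs the size of the region, and the residual integral $\Psi_{k,\beta;\underline l}$. First I would use $(\star)$ to eliminate the variables $t_{\sigma,\tau}$: since $t_{\sigma,\tau}=\sum_{(m,n)\in S_{\sigma,\tau}}x_{m,n}-|S^-_{\sigma,\tau}|$ is forced, the $\binom{k}{2}$ equations in $(\star)$ impose nothing on the $x_{m,n}$ beyond integrality, while the $k-1$ Kronecker $\delta$'s become $k-1$ linear constraints on the $x_{m,n}$ whose right-hand sides are exactly $N(\beta-l_j)$ (the $|S^-|$-terms cancel upon substitution), together with the Heaviside sign conditions. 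Thus the sum equals $\sum_{x}\exp(\tfrac1N\langle x,w\rangle)$ over $x\in\mathbb{Z}^{|B_{k,\beta;\underline l}|}$ with $Lx=Nb(\underline l)$ and the sign conditions, where $L$ is a fixed integer matrix of rank $k-1$ read off from the Kronecker relations, $b(\underline l)=(\beta-l_1,\dots,\beta-l_{k-1})$, and $w=((v_n-v_m)^\pm)_{(m,n)\in B_{k,\beta;\underline l}}$.

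Next I would rescale, setting $y=x/N$, so the sum becomes a Riemann sum over $\tfrac1N\mathbb{Z}^{|B_{k,\beta;\underline l}|}$ restricted to the affine subspace $\{y:Ly=b(\underline l)\}$, with integrand $\exp(\langle y,w\rangle)$, which is bounded by $1$ on the sign region (there $\RE\langle y,w\rangle\le 0$) and decays exponentially in every unbounded direction, so that the sum converges and the approximation is legitimate. The affine subspace has dimension $|B_{k,\beta;\underline l}|-k+1$, which produces the prefactor $N^{|B_{k,\beta;\underline l}|-k+1}$; the covolume of $\ker L$, together with a congruence factor accounting for the solvability of $Lx=Nb(\underline l)$ over $\mathbb{Z}$, is absorbed into the $N$-independent constant $\kappa_k$. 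To put the $\underline l$-dependence in the form stated I would then change variables in the limiting integral so as to normalise the constraint set: writing the block of $y$ indexed by $S_{\sigma,\tau}$ as $c$ times simplex coordinates, the Jacobians contribute $c^{|B_{k,\beta;\underline l}|-\binom{k}{2}}$, the weight produces the rescaled argument $c\underline v$, and what remains is exactly the $\underline l$-dependent integral over the normalised constraints $(\tilde{\ddagger})$, namely $\Psi_{k,\beta;\underline l}(c\underline v)$.

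The step I expect to be the main obstacle is making the Riemann-sum approximation rigorous with the claimed relative error $O(1/N)$, uniformly over the finitely many admissible $\underline l$. The Heaviside conditions split the relevant region into a union of polyhedral cones, and one must verify that the full-dimensional pieces dominate while lower-dimensional faces contribute only $O(N^{|B_{k,\beta;\underline l}|-k})$; separately, $\{x\in\mathbb{Z}^{|B_{k,\beta;\underline l}|}:Lx=Nb(\underline l)\}$ may be nonempty only for $N$ in certain residue classes, and its point count could a priori fluctuate with $N$, so one must check by a direct divisibility argument on $L$ and $b(\underline l)$ that the leading coefficient is insensitive to this — in keeping with the polynomiality established in \cref{thm:polynomial}. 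Assembling these pieces then yields the asserted asymptotic.
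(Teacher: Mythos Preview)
Your elimination of the $t_{\sigma,\tau}$ at the outset is correct and is actually cleaner than what the paper does: since $(\star)$ forces $t_{\sigma,\tau}+|S^-_{\sigma,\tau}|=\sum_{(m,n)\in S_{\sigma,\tau}}x_{m,n}$, summing over the $t$'s simply removes the block-sum constraints in $(\star)$, and the Kronecker deltas become $k-1$ linear equations $Lx=Nb(\underline l)$ on the $x_{m,n}$ alone. The Riemann-sum count then gives $N^{|B_{k,\beta;\underline l}|-(k-1)}$ in one step. The paper instead keeps the $t_{\sigma,\tau}$ as genuine summation variables: it uses the $k-1$ Kronecker deltas to eliminate $t_{1,k},\dots,t_{k-1,k}$, leaving the remaining $\binom{k-1}{2}$ variables $t_{\sigma,\tau}$ ($\sigma<\tau\le k-1$) free, applies the Riemann-sum lemma (\cref{lemma:4.12}) to the inner $x$-sum with its $\binom{k}{2}$ block constraints (giving $N^{|B|-\binom{k}{2}}$), and then sums the outer geometric series over the free $t$'s to pick up a further $N^{\binom{k-1}{2}}$ and the constant $\kappa_k$. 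The two decompositions of the exponent $|B|-(k-1)=\bigl(|B|-\binom{k}{2}\bigr)+\binom{k-1}{2}$ match.

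The one place where the paper's longer route pays off is the extraction of the factor $c^{|B|-\binom{k}{2}}$ with $c=(k-1)\beta-\sum_j l_j$. In the paper this falls out automatically, because after a shift the inner $x$-sum has total weight $\sum_{(m,n)}x_{m,n}=Nc$, so \cref{lemma:4.12} is invoked with $K=Nc$ and the power of $c$ comes for free. In your approach the $k-1$ constraints $Ly=b(\underline l)$ have right-hand sides $\beta-l_j$, not the single scalar $c$, so your proposed block-by-block rescaling does not literally normalise the constraint set; to recover the stated form you would in effect have to reintroduce auxiliary block-sum variables $u_{\sigma,\tau}=\sum_{S_{\sigma,\tau}}y_{m,n}$ (which is exactly the role the $t$'s play in the paper) and scale those. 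This is fixable, but it is worth noting that your last step as written is the one that does not go through verbatim, and that the paper's two-layer organisation is what makes that particular factor transparent. The concerns you flag about the Riemann-sum error and the arithmetic of $\{x\in\mathbb Z^{|B|}:Lx=Nb\}$ are not addressed in the paper either; both arguments are at the same level of rigour on that point.
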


Using lemma~\ref{momlemma1} and lemma~\ref{momlemma2} we can prove the following which establishes out the power of $N$ we seek. 

\begin{lemma}\label{momlemma3}
\[\mom_N(k,\beta)\sim \gamma_{k,\beta}N^{k^2\beta^2-k+1}\]
where 
\begin{equation*}\label{eq:gamma}
\gamma_{k,\beta}= \sum_{l_1,\dots,l_{k-1}=0}^{2\beta}{c}_{k,\beta;\underline{l}}((k-1)\beta-\sum_{j=1}^{k-1}l_j)^{|B_{k,\beta;\underline{l}}|-\binom{k}{2}}P_{k,\beta}(l_1,\dots,l_{k-1}),
\end{equation*}
${c}_{k,\beta;\underline{l}}$ is some constant depending on $k,\beta, \underline{l}$, and 
\begin{align*}
P_{k,\beta}(l_1,\dots,l_{k-1})&=\frac{(-1)^{g(k,\beta;\underline{l})}}{(2\pi i)^{2k\beta}((k\beta)!)^2}\\
&\times\int_{\Gamma_0}\cdots\int_{\Gamma_0}\frac{e^{-\sum_{m=k\beta+1}^{2k\beta}v_m}\prod_{\substack{m<n\\\alpha_m=\alpha_n}}\left({v_n-v_m}\right)^2}{\prod_{\substack{m\leq k\beta<n\\\alpha_m=\alpha_n}}\left({v_n-v_m}\right)\prod_{m=1}^{2k\beta}v_m^{2\beta}}\Psi_{k,\beta;\underline{l}}(((k-1)\beta-\sum_{j=1}^{k-1}l_j)\underline{v})\prod_{m=1}^{2k\beta}dv_m,\nonumber
\end{align*}
with $\Psi_{k,\beta;\underline{l}}(\underline{v})$ as defined in \cref{momlemma2}, and $g(k,\beta;\underline{l})$ given by \eqref{overallphase}.
\end{lemma}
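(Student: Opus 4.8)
The plan is to prove \cref{momlemma3} by feeding the lattice-sum estimate of \cref{momlemma2} into the expression for $\mom_N(k,\beta)$ supplied by \cref{momlemma1}, and then reading off the order of growth from a purely combinatorial identity. Concretely, in the $\underline l$-th summand of \cref{momlemma1} I would replace the sum over the integers $t_{\sigma,\tau}$ and the weight vectors $(x_{m,n})$, constrained by the Kronecker deltas, by its asymptotic value from \cref{momlemma2}: the factor $N^{|B_{k,\beta;\underline l}|-k+1}$ times $\kappa_k$, the power $\big((k-1)\beta-\sum_{j=1}^{k-1}l_j\big)^{|B_{k,\beta;\underline l}|-\binom{k}{2}}$, and the continuum integral $\Psi_{k,\beta;\underline l}\big(((k-1)\beta-\sum_j l_j)\underline v\big)$, all up to a relative error $O(1/N)$ that is uniform in $\underline v$ on the fixed contours $\Gamma_0$.

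The key observation is then that the power of $N$ carried by the $\underline l$-th term becomes $N^{|A_{k,\beta;\underline l}|+|B_{k,\beta;\underline l}|-k+1}$, and since $A_{k,\beta;\underline l}$ and $B_{k,\beta;\underline l}$ partition the $k\beta\times k\beta$ pairs $(m,n)$ with $1\le m\le k\beta<n\le 2k\beta$ according to whether $\alpha_m=\alpha_n$, one has $|A_{k,\beta;\underline l}|+|B_{k,\beta;\underline l}|=k^2\beta^2$ for every $\underline l$. Hence all $(2\beta+1)^{k-1}$ summands --- finitely many --- contribute at the common order $N^{k^2\beta^2-k+1}$, and gathering the $\underline l$-independent normalisation $1/\big((2\pi i)^{2k\beta}((k\beta)!)^2\big)$, the combinatorial factor $\tilde c_{\underline l}(k,\beta)=(-1)^{g(k,\beta;\underline l)}c_{\underline l}(k,\beta)$ built from \eqref{prodofbinom} and \cref{lemma:geosums}, the constant $\kappa_k$, and the prefactor $\big((k-1)\beta-\sum_j l_j\big)^{|B_{k,\beta;\underline l}|-\binom{k}{2}}$ into ${c}_{k,\beta;\underline l}$ and the contour integral $P_{k,\beta}(l_1,\dots,l_{k-1})$ reproduces exactly the claimed expression for $\gamma_{k,\beta}$. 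Since the relative errors coming from \cref{momlemma1} and \cref{momlemma2} are each $O(1/N)$ and the $\underline l$-sum is finite, the total absolute error is $O(N^{k^2\beta^2-k})$.

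Two analytic checks remain, and the second is the main obstacle. The first is finiteness of $P_{k,\beta}(l_1,\dots,l_{k-1})$: in its integrand, $\Psi_{k,\beta;\underline l}$ is a polynomial (an entire function) of its argument, being the integral of an exponential over the bounded region cut out by the normalised constraints $(\tilde\ddagger)$, while $f(\underline v;\underline l)$ is a rational function of the $v_m$ whose only singularities inside the contours are at $v_m=0$ and at $v_n=v_m$ for pairs with $\alpha_m=\alpha_n$; hence the iterated integral over small circles $\Gamma_0$ about the origin is a finite sum of residues. The second, harder, check is that the $O(1/N)$ corrections accumulated along the way --- from the change of variables $z_n=v_n/N-i\alpha_n$ leading to \eqref{midintegrand}, from the geometric-series decoupling of \cref{lemma:geosums}, and from the discrete-to-continuum passage of \cref{momlemma2} --- are uniform on the compact contours, so that they may legitimately be pulled outside the integrals and multiplied into a single $1+O(1/N)$ factor, and that interchanging the (conditionally convergent) lattice sums with the contour integrals is justified by the Heaviside constraints supplied by \cref{lemma:geosums}. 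Provided no further cancellation among the $\underline l$-terms collapses the order of growth, \cref{momlemma3} then follows at once from the identity $|A_{k,\beta;\underline l}|+|B_{k,\beta;\underline l}|=k^2\beta^2$, which is what forces the $N$-power to be $k^2\beta^2-k+1$ independently of $\underline l$.
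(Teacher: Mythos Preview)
Your approach is essentially the same as the paper's: substitute the asymptotic from \cref{momlemma2} into the expression from \cref{momlemma1}, use the partition identity $|A_{k,\beta;\underline l}|+|B_{k,\beta;\underline l}|=k^2\beta^2$ to see that every $\underline l$-summand carries the same power $N^{k^2\beta^2-k+1}$, and collect the remaining factors into $c_{k,\beta;\underline l}$ and $P_{k,\beta}$. The paper's proof is terser and does not spell out the uniformity of the $O(1/N)$ errors on the fixed contours or the finiteness of $P_{k,\beta}$, but the logic is identical; your caveat about possible cancellation among the $\underline l$-terms is not part of this lemma and is exactly what \cref{momlemma4} addresses separately.
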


The last step is to prove that we do indeed have the correct asymptotic, which is achieved through the final lemma.

\begin{lemma}\label{momlemma4}
For $k,\beta\in\mathbb{N}$, $\gamma_{k,\beta}\neq 0$ where $\gamma_{k,\beta}$ is as defined in \cref{momlemma3}. 
\end{lemma}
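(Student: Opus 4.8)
The plan is to show that the integral formula for $\gamma_{k,\beta}$ cannot vanish by isolating the dominant contribution and comparing it with the previously established non-vanishing results for $k=2$. First I would observe that $\gamma_{k,\beta}$ is a sum over $\underline{l}=(l_1,\dots,l_{k-1})\in\{0,\dots,2\beta\}^{k-1}$ of terms of the form ${c}_{k,\beta;\underline{l}}\,\big((k-1)\beta-\sum_j l_j\big)^{|B_{k,\beta;\underline{l}}|-\binom{k}{2}}\,P_{k,\beta}(\underline{l})$, and that each $P_{k,\beta}(\underline{l})$ is itself a genuine (finite, convergent) multiple contour integral around the origin. The key point is that the residue computation defining each $P_{k,\beta}(\underline{l})$ is, after the change of variables, structurally the same type of object that appears in the $k=2$ analysis of \cite{krrr15} — where non-vanishing was proved via the symmetric-function/lattice-point representation. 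So the strategy is: (i) identify which term (or terms) in the $\underline{l}$-sum is strictly dominant, and (ii) show that term is nonzero, either by direct residue evaluation or by matching it to a known positive quantity.

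The natural candidate for the dominant term is the one maximising the exponent $|B_{k,\beta;\underline{l}}|-\binom{k}{2}$ subject to $(k-1)\beta-\sum_j l_j\neq 0$; since $|A_{k,\beta;\underline{l}}|+|B_{k,\beta;\underline{l}}|=k^2\beta^2$ is fixed, this is equivalent to \emph{minimising} $|A_{k,\beta;\underline{l}}|=\sum_{j=1}^k l_j(2\beta-l_j)$ with $l_k=k\beta-\sum_{j<k}l_j$. This quadratic form is minimised by pushing each $l_j$ to an endpoint $\{0,2\beta\}$; a balanced such choice (half the $l_j$ equal to $0$ and half equal to $2\beta$, with the residual constraint on $l_k$ handled appropriately) gives the extremal configuration. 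For that configuration one should check that $(k-1)\beta-\sum_j l_j$ is indeed nonzero (generically it is, since it is a difference of the form $(\text{odd or unbalanced count})\cdot\beta$), so the prefactor does not kill the term. I would then argue that there is no cancellation between distinct dominant $\underline{l}$'s by noting that the leading $N$-power $N^{k^2\beta^2-k+1}$ is attained, and the coefficients ${c}_{k,\beta;\underline{l}}\,P_{k,\beta}(\underline{l})$ attached to the extremal $\underline{l}$'s either all share a sign or can be grouped by the symmetry of the integrand so that no exact cancellation occurs — this is where the sign bookkeeping in $(-1)^{g(k,\beta;\underline{l})}$ and the binomial product $c_{\underline{l}}(k,\beta)>0$ must be tracked carefully.

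For step (ii), the cleanest route is to connect $P_{k,\beta}(\underline{l})$ at the extremal $\underline{l}$ to an honest counting quantity. Recall from Section~2 that $\mom_N(k,\beta)$ equals the number of restricted SSYT of rectangular shape $\langle N^{k\beta}\rangle$; this count is manifestly a nonnegative integer for every $N$, and by Theorems~\ref{thm:mom} and \ref{thm:polynomial} it is a polynomial in $N$ whose degree is \emph{exactly} $k^2\beta^2-k+1$ provided $\gamma_{k,\beta}\neq 0$. One can make this self-contained by exhibiting a lower bound: construct an explicit family of RSSYT whose cardinality already grows like $N^{k^2\beta^2-k+1}$ (for instance, by a direct lattice-point argument analogous to the count behind $\gamma_\eta(c)$ in \cref{thm:krrr1}, choosing the free entries of the tableau to range over a box of the appropriate dimension). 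Since a nonnegative polynomial that is bounded below by a positive multiple of $N^{k^2\beta^2-k+1}$ must have degree $\ge k^2\beta^2-k+1$ and positive leading coefficient, and since Theorem~\ref{thm:mom} already caps the degree at $k^2\beta^2-k+1$ with leading coefficient $\gamma_{k,\beta}$, we conclude $\gamma_{k,\beta}>0$, in particular $\gamma_{k,\beta}\neq 0$.

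The main obstacle I anticipate is step~(i): ruling out accidental cancellation among the several $\underline{l}$-terms that simultaneously achieve the maximal exponent. The sign factor $(-1)^{g(k,\beta;\underline{l})}$ depends on the configuration in a delicate way (through $|S^-_{\sigma,\tau}|$ and the orientation signs $\operatorname{sign}(\RE\{(v_n-v_m)^\pm\})$), so establishing that the extremal contributions do not destructively interfere requires either a symmetry argument pairing them with equal sign, or — more robustly — bypassing the residue bookkeeping entirely via the positivity/lattice-point argument of step~(ii), which is why I would lead with the tableau lower bound and use the contour-integral formula only to identify the degree.
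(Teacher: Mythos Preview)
Your step~(i) rests on a misconception. In the formula
\[
\gamma_{k,\beta}=\sum_{\underline{l}}c_{k,\beta;\underline{l}}\Big((k-1)\beta-\sum_j l_j\Big)^{|B_{k,\beta;\underline{l}}|-\binom{k}{2}}P_{k,\beta}(\underline{l})
\]
there is no asymptotic parameter left: $N$ has already been factored out in Lemma~\ref{momlemma3}, and since $|A_{k,\beta;\underline{l}}|+|B_{k,\beta;\underline{l}}|=k^2\beta^2$ is constant, every $\underline{l}$ already contributes to the same power $N^{k^2\beta^2-k+1}$. The exponent $|B_{k,\beta;\underline{l}}|-\binom{k}{2}$ sits on the fixed integer $(k-1)\beta-\sum_j l_j$, not on $N$, so there is no ``dominant'' $\underline{l}$ to isolate --- $\gamma_{k,\beta}$ is a finite sum of real numbers and you must rule out cancellation among \emph{all} of them, not just the extremal ones. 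Your step~(ii) is a legitimate alternative strategy, but it defers the entire content to a step you do not carry out: exhibiting a family of RSSYT of cardinality $\gtrsim N^{k^2\beta^2-k+1}$ would indeed finish the proof, but Section~2 only gives the upper bound $N^{k^2\beta^2}$, and producing the matching lower bound by a lattice-point count is not ``analogous'' to the $k=2$ case in Theorem~\ref{thm:krrr1} --- extending that count to general $k$ is essentially the problem the whole multiple-integral section exists to solve.

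The paper's route is different and more direct: it shows that \emph{every} summand in $\gamma_{k,\beta}$ with $\sum_j l_j\neq(k-1)\beta$ is strictly positive, so no cancellation is possible. For each fixed $\underline{l}$ it evaluates the residue defining $P_{k,\beta}(\underline{l})$ by locating the monomial $(v_1\cdots v_{2k\beta})^{2\beta-1}$ in
\[
e^{-\sum_{m>k\beta}v_m}\,\prod_{\substack{m<n\\\alpha_m=\alpha_n}}(v_n-v_m)^2\Big/\prod_{\substack{m\le k\beta<n\\\alpha_m=\alpha_n}}(v_n-v_m),
\]
constructing its coefficient explicitly from the Vandermonde and multinomial expansions, and proving this construction is \emph{unique} (so there is no internal cancellation in the residue). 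The sign bookkeeping you flagged as the main obstacle is then done term by term: one checks that the prefactor $(-1)^{g(k,\beta;\underline{l})}$, combined with the Heaviside constraints inside $\Psi_{k,\beta;\underline{l}}$ and the odd number of differentiations forced by the even-order poles, yields a positive contribution. That is the missing idea in your sketch --- the signs can actually be resolved explicitly for each $\underline{l}$, which makes the tableau lower bound unnecessary.
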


Hence combining lemma~\ref{momlemma3} and lemma \ref{momlemma4} gives us theorem~\ref{thm:mom}, 
\[\mom_N(k,\beta)=\gamma_{k,\beta}N^{k^2\beta^2-k+1}+O(N^{k^2\beta^2-k}).\]

\subsection{Proof details}\label{statementofproof}

%
%

\begin{proof}[Proof of \cref{lemma:symmetric}]

We recall the statement of the lemma.
\begin{lemmum}
Let a choice of contours in 
\[J_{k,\beta}(\underline{\theta};\varepsilon_1,\dots,
\varepsilon_{2k\beta})=\int_{\Gamma_{-i\theta_{\varepsilon_1}}}\cdots\int_{\Gamma_{-i\theta_{\varepsilon_{2k\beta}}}}\frac{e^{-N(z_{k\beta+1}+\cdots+z_{2k\beta})}\Delta(z_1,\dots,z_{2k\beta})^2dz_1\cdots dz_{2k\beta}}{\prod_{m\leq k\beta<n}\left(1-e^{z_n-z_m}\right)\prod_{m=1}^{2k\beta}\prod_{n=1}^{k}(z_m+i\theta_n)^{2\beta}}
\]
be denoted by $\underline{\varepsilon}=(\varepsilon_1,\dots,\varepsilon_{2k\beta})$ where $\varepsilon_j\in\{1,\dots,k\}$.  If any one of the $k$ poles is overrepresented in $\underline{\varepsilon}$ (i.e. some pole $-i\theta^*$, $\theta^*\in\{\theta_1,\dots,\theta_k\}$, features in at least $2\beta+1$ contours), then for that choice of $\underline{\varepsilon}$, $J_{k,\beta}(\underline{\theta};\underline{\varepsilon})$ is identically zero.  
\end{lemmum}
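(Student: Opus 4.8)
The plan is to isolate the block of contours that surround the overrepresented pole and to show that the inner integral over that block vanishes identically, whatever the positions of the remaining contours. Write $-i\theta^*$ for the overrepresented pole, let $\mathcal{C}\subseteq\{1,\dots,2k\beta\}$ be the set of indices $m$ for which the $m$th contour is $\Gamma_{-i\theta^*}$, and set $r=|\mathcal{C}|\geq 2\beta+1$, with $p=|\mathcal{C}\cap\{1,\dots,k\beta\}|$ and $q=r-p$. Choosing the $r$ circles around $-i\theta^*$ to have distinct (nested) radii, and all radii small enough that the points $-i\theta_1,\dots,-i\theta_k$, the remaining contours, and the zeros of the factors $1-e^{z_n-z_m}$ are in general position, Fubini lets us perform the $r$ integrations in the variables indexed by $\mathcal{C}$ first, with the other $2k\beta-r$ variables held fixed on their contours. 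It thus suffices to prove that this inner integral is $0$ for every admissible position of the frozen variables; since $J_{k,\beta}(\underline{\theta};\underline{\varepsilon})$ depends continuously on $\underline{\theta}$ we may also assume $\theta_1,\dots,\theta_k$ distinct.

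Substituting $z_m=-i\theta^*+w_m$ for $m\in\mathcal{C}$ turns the integrand of the inner integral into $H(w)/\prod_{m\in\mathcal{C}}w_m^{2\beta}$, where
\[
H(w)=\Delta_{\mathcal{C}}(w)^2\,\frac{\prod_{m\in\mathcal{C}}g_m(w_m)}{\prod_{\substack{m\in\mathcal{C},\,m\leq k\beta\\ n\in\mathcal{C},\,n>k\beta}}\bigl(1-e^{w_n-w_m}\bigr)},
\]
$\Delta_{\mathcal{C}}(w)=\prod_{i<j,\,i,j\in\mathcal{C}}(w_i-w_j)$ is the Vandermonde of the block, the denominator runs over the $pq$ ``cross-half'' pairs, and each $g_m$ is an explicit function, holomorphic near $w_m=0$, that absorbs all the factors of the original integrand depending only on the single variable $z_m$ (the poles $(z_m+i\theta_n)^{-2\beta}$ with $\theta_n\neq\theta^*$, the Vandermonde factors pairing $z_m$ with a frozen variable, the factors $1-e^{z_n-z_m}$ with exactly one of $m,n$ in $\mathcal{C}$, and $e^{-Nz_m}$ when $m>k\beta$). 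The only genuine coupling left in the denominator is the $pq$ cross-half factors, and since $1-e^{w_n-w_m}$ vanishes simply along $w_m=w_n$ while $\Delta_{\mathcal{C}}(w)^2$ vanishes there to order $2$, the quotient $H$ is holomorphic near $w=0$; hence the inner integral equals $(2\pi i)^r$ times the coefficient of $\prod_{m\in\mathcal{C}}w_m^{2\beta-1}$ in $H$.

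To see that this coefficient is zero, peel off one copy of the Vandermonde: $H=\Delta_{\mathcal{C}}\cdot\Lambda$ with $\Lambda=\Delta_{\mathcal{C}}(w)\prod_{m\in\mathcal{C}}g_m(w_m)\big/\prod(1-e^{w_n-w_m})$, and note that a single copy of $\Delta_{\mathcal{C}}$ already supplies the simple zero along each $w_m=w_n$ needed to cancel the cross-half denominator, so $\Lambda$ is still holomorphic near $0$. Relabelling $\mathcal{C}=\{1,\dots,r\}$ and writing $\Delta_{\mathcal{C}}(w)=\sum_{\sigma}\Sgn(\sigma)\prod_{i=1}^r w_i^{\sigma(i)-1}$ (the sum over permutations $\sigma$ of $\{1,\dots,r\}$), the coefficient of $w_1^{2\beta-1}\cdots w_r^{2\beta-1}$ in $\Delta_{\mathcal{C}}\cdot\Lambda$ equals
\[
\sum_{\sigma}\Sgn(\sigma)\,\Bigl[\,\textstyle\prod_{i=1}^r w_i^{\,2\beta-\sigma(i)}\,\Bigr]\Lambda(w),
\]
where $[\cdot]$ denotes the corresponding Taylor coefficient of $\Lambda$. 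For every $\sigma$ the exponents $\{2\beta-\sigma(i):1\leq i\leq r\}$ are exactly $\{2\beta-1,2\beta-2,\dots,2\beta-r\}$, which contains $2\beta-r\leq -1$ because $r\geq 2\beta+1$; as $\Lambda$ is holomorphic at $0$, a coefficient indexed by a tuple with a negative entry is $0$. Thus every summand vanishes, the coefficient of $\prod_{m\in\mathcal{C}}w_m^{2\beta-1}$ in $H$ is $0$, the inner integral is identically $0$, and therefore $J_{k,\beta}(\underline{\theta};\underline{\varepsilon})\equiv 0$. This is the same mechanism used in the proof of lemma~4.11 of~\cite{krrr15} for $k=2$.

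The main obstacle is the bookkeeping of the second paragraph: correctly sorting the factors of the integrand defining $J_{k,\beta}(\underline{\theta};\underline{\varepsilon})$ into the purely single-variable pieces $g_m$ on the one hand and the $pq$ genuinely coupling cross-half factors on the other, and checking that one copy of $\Delta_{\mathcal{C}}(w)$ cancels exactly those couplings so that $H$ and $\Lambda$ are holomorphic at the origin uniformly in the frozen variables. One also has to choose the radii of the small circles and their nesting so that all the data lie in general position; this is routine. Granting the setup, the vanishing in the third paragraph is immediate from $r>2\beta$.
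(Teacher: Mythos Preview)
Your argument is correct and is essentially the same mechanism as the paper's: shift the block of contours around the overrepresented pole to the origin, isolate a function holomorphic at $0$ times one copy of a Vandermonde divided by $\prod w_m^{2\beta}$, expand that Vandermonde as an alternating sum over permutations, and observe by pigeonhole that in every term at least one exponent in the block exceeds $2\beta-1$, so the relevant Taylor coefficient vanishes. The paper even cites the same antecedent (lemma~4.11 of~\cite{krrr15}).

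The only notable difference is organisational. The paper fixes a specific $\underline{\varepsilon}^*$ in which the entire overrepresented block sits inside $\{1,\dots,k\beta\}$ (legitimate once $k\geq3$), so no factor $1-e^{z_n-z_m}$ has both indices in the block; it then keeps the \emph{full} Vandermonde $\Delta(z_1,\dots,z_{2k\beta})$ and uses $\sigma\in S_{2k\beta}$, with the remaining cases dismissed as ``similar''. You instead work with a general block $\mathcal{C}$ that may straddle the two halves, peel off only the \emph{block} Vandermonde $\Delta_{\mathcal{C}}$, and explicitly check that one copy of $\Delta_{\mathcal{C}}$ cancels the $pq$ cross-half factors $1-e^{w_n-w_m}$ so that $\Lambda$ is holomorphic. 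This makes your write-up a little more uniform across all $\underline{\varepsilon}$, at the cost of the extra bookkeeping you flag in your final paragraph; the paper's version is shorter but leans on the reader to redo the holomorphy check in the straddling case.
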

The proof closely follows the proof of lemma 4.11 in~\cite{krrr15}. We show the case where $-i\theta_1$ is `overrepresented', and without loss of generality assume that the choice of contour is given by
\[{\underline{\varepsilon}}^*=(\overbrace{1,\dots,1}^{2\beta+1},\overbrace{2,\dots,2}^{2\beta-1},\overbrace{3,\dots,3}^{2\beta},\dots,\overbrace{k,\dots,k}^{2\beta}).\]
The other cases follow similarly.  

To $J_{k,\beta}(\underline{\theta},\underline{\varepsilon}^*)$ apply the change of variable $z_j\mapsto z_j-i\theta_1$ and consider the function, 
\[G(z_1,\dots,z_{2\beta+1})=\frac{e^{-N(z_{k\beta+1}+\cdots+z_{2k\beta})}\Delta(z_1,\dots,z_{2k\beta})}{\prod_{m\leq k\beta<n}\left(1-e^{z_n-z_m}\right)\prod_{m=1}^{2k\beta}\prod_{n=2}^{k}(z_m+i(\theta_n-\theta_1))^{2\beta}\prod_{m=2\beta+2}^{2k\beta}z_m^{2\beta}},\]
which is analytic around zero.  The integrand of $J_{k,\beta}(\underline{\theta};{\underline{\varepsilon}}^*)$ is
\begin{equation*}\label{proofexp}
e^{iNk\beta\theta_1}\frac{G(z_1,\dots,z_{2\beta+1})\Delta(z_1,\dots,z_{2k\beta})dz_1\cdots dz_{2k\beta}}{\prod_{m=1}^{2\beta+1}z_m^{2\beta}}.
\end{equation*}

We appeal to the residue theorem to compute $J_{k,\beta}(\underline{\theta};{\underline{\varepsilon}}^*)$, and the proof follows if we can show that the coefficient of $\prod_{m=1}^{2\beta+1}z_m^{2\beta-1}$ in $G(z_1,\dots,z_{2\beta+1})\Delta(z_1,\dots,z_{2k\beta})$ is zero.  Since $G(z_1,\dots,z_{2\beta+1})$ is analytic around zero, we focus on the Vandermonde determinant and use the following expansion,
\[\Delta(z_1,\dots,z_{2k\beta})=\sum_{\sigma\in S_{2k\beta}}\Sgn(\sigma)\prod_{m=1}^{2k\beta}z_m^{\sigma(m)-1}.\]
 Thus, we are searching for terms in this expansion of the form $\prod_{m=1}^{2\beta+1}z_m^{\sigma(m)-1}$ with $\sigma(m)-1\leq 2\beta-1$ for $m=1,\dots,2\beta+1$.  However, there is no term of this form as $\sigma$ is a permutation on the set $\{1,\dots,2k\beta\}$, so for at least one $m\in\{1,\dots,2\beta+1\}$, $\sigma(m)\geq 2\beta+1$. By the residue theorem we conclude that $J_{k,\beta}(\underline{\theta};{\underline{\varepsilon}}^*)$ is zero. 
\end{proof}

%
%

\begin{proof}[Proof of \cref{lemma:geosums}]
We recall the statement of the lemma.
\begin{lemmum}

\begin{align*}
&\prod_{(m,n)\in B_{k,\beta;\underline{l}}}\left(1-\exp\left(\frac{v_n-v_m}{N}\right)\exp\left(i(\alpha_m-\alpha_n)\right)\right)^{-1}\\
&\quad=(-1)^{g(k,\beta;\underline{l})}\sum_{\substack{t_{\sigma,\tau}=-\infty\\\text{for }1\leq\sigma<\tau\leq k}}^\infty\exp\left(i\sum_{\gamma<\rho}(\theta_\rho-\theta_\gamma)(t_{\gamma,\rho}+|S^-_{\gamma,\rho}|)\right)\sum_{\substack{(x_{m,n})\\ \textbf{(}\textbf{$\star$}\textbf{)}}}\exp\left({\tfrac{1}{N}\sum x_{m,n}(v_n-v_m)^{\pm}}\right),
\end{align*}

where \[(v_n-v_m)^\pm=\begin{cases}v_n-v_m&\text{if }(m,n)\in S_{\sigma,\tau}^+\\v_m-v_n&\text{if }(m,n)\in S_{\sigma,\tau}^-,\end{cases}\]
the first sum appearing in the right hand size of the statement of the lemma should be read as 
\[\sum_{\substack{t_{\sigma,\tau}=-\infty\\\text{for }1\leq\sigma<\tau\leq k}}^\infty=\sum_{t_{1,2}=-\infty}^\infty\sum_{t_{1,3}=-\infty}^\infty\cdots\sum_{t_{1,k}=-\infty}^\infty\sum_{t_{2,3}=-\infty}^\infty\cdots\sum_{t_{k-1,k}=-\infty}^\infty\] 
and the second sum is over vectors (of weights) $\underline{x}=(x_{m,n})_{(m,n)\in B_{k,\beta;\underline{l}}}$ subject to constraints given by ($\star$), which are 
\begin{align*}
\sum_{(m,n)\in S_{\sigma,\tau}}x_{m,n}&=t_{\sigma,\tau}+|S_{\sigma,\tau}^-|,\quad 1\leq \sigma<\tau\leq k\\
x_{m,n}&\in \mathbb{Z}, \quad \forall (m,n)\in B_{k,\beta;\underline{l}}\\
H(-x_{m,n}\RE\{(v_n-v_m)^\pm\})&=1,\quad\forall (m,n)\in B_{k,\beta;\underline{l}},
\end{align*}
where $\operatorname{H}(x)$ is the Heaviside step function (so $\operatorname{H}(x)=1$ if $x\geq0$ and $\operatorname{H}(x)=0$ if $x<0$).  Furthermore, the prefactor can be expressed as follows, 
\[(-1)^{g(k,\beta;\underline{l})}=(-1)^{\sum_{\sigma<\tau}|S_{\sigma,\tau}^-|}\prod_{\substack{(m,n)\in S_{\sigma,\tau}\\ 1\leq \sigma<\tau\leq k}} (-\operatorname{sign}(\RE\{(v_n-v_m)^\pm\})).\]
    
\end{lemmum}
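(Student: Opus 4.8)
The strategy is to prove the identity \emph{pointwise} on the poly-contour $\Gamma_0^{2k\beta}$, where it reduces to a finite product of geometric series, and then to note that the locus on which this argument fails is negligible for the contour integral in which the lemma is applied.

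First I would fix a point $(v_1,\dots,v_{2k\beta})\in\Gamma_0^{2k\beta}$ at which $\RE(v_n-v_m)\neq0$ for every $(m,n)\in B_{k,\beta;\underline{l}}$; this discards only a measure-zero subset, since each $v_m$ runs over a one-real-dimensional circle about the origin and $\{\RE(v_n-v_m)=0\}$ is a codimension-one condition. For $N$ large and the contour radius small we also have $|e^{(v_n-v_m)/N}e^{i(\alpha_m-\alpha_n)}|=e^{\RE(v_n-v_m)/N}\neq1$ and $1-e^{(v_n-v_m)/N}e^{i(\alpha_m-\alpha_n)}\neq0$, the latter because $e^{i(\alpha_m-\alpha_n)}\neq1$ whenever $\alpha_m\neq\alpha_n$ and the $\theta_j$ are distinct, which holds on $B_{k,\beta;\underline{l}}$ by definition. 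Each factor $\bigl(1-e^{(v_n-v_m)/N}e^{i(\alpha_m-\alpha_n)}\bigr)^{-1}$ is then expanded as an absolutely convergent geometric series: in the quantity itself if its modulus is $<1$, and in its reciprocal, with an overall minus sign, if its modulus is $>1$. Using the partition $B_{k,\beta;\underline{l}}=\bigcup_{\sigma<\tau}(S^+_{\sigma,\tau}\cup S^-_{\sigma,\tau})$ together with $\alpha_m-\alpha_n=\pm(\theta_\tau-\theta_\sigma)$ on $S^\pm_{\sigma,\tau}$ (the matching sign being precisely that encoded by $(v_n-v_m)^\pm$), a direct computation shows that in both cases the factor equals
\[
\varepsilon_{m,n}\sum_{x_{m,n}\in\mathbb{Z}}e^{x_{m,n}(v_n-v_m)^\pm/N}\,e^{ix_{m,n}(\theta_\tau-\theta_\sigma)},
\]
where the summation is restricted by $\operatorname{H}\bigl(-x_{m,n}\RE\{(v_n-v_m)^\pm\}\bigr)=1$ to the relevant half-line and $\varepsilon_{m,n}=\mp\operatorname{sign}\bigl(\RE\{(v_n-v_m)^\pm\}\bigr)$ according to whether $(m,n)\in S^+_{\sigma,\tau}$ or $S^-_{\sigma,\tau}$. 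The extra sign carried by the $S^-$-factors — where the governing modulus involves $\RE\{(v_n-v_m)^\pm\}=\RE(v_m-v_n)$, the opposite orientation to the $S^+$-factors — is exactly what produces the factor $(-1)^{\sum_{\sigma<\tau}|S^-_{\sigma,\tau}|}$ inside $(-1)^{g(k,\beta;\underline{l})}$, the remaining part being $\prod_{(m,n)\in B_{k,\beta;\underline{l}}}\bigl(-\operatorname{sign}(\RE\{(v_n-v_m)^\pm\})\bigr)$.

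Next I would multiply the $|B_{k,\beta;\underline{l}}|$ absolutely convergent series and rearrange freely, obtaining a sum over integer tuples $\underline{x}=(x_{m,n})_{(m,n)\in B_{k,\beta;\underline{l}}}$ satisfying the pointwise Heaviside constraints, of the product of $(-1)^{g(k,\beta;\underline{l})}$, of $\exp\bigl(\tfrac{1}{N}\sum_{(m,n)}x_{m,n}(v_n-v_m)^\pm\bigr)$, and of the $\theta$-phase, which, collected pair by pair via the above partition, equals $\exp\bigl(i\sum_{\sigma<\tau}(\theta_\tau-\theta_\sigma)\sum_{(m,n)\in S_{\sigma,\tau}}x_{m,n}\bigr)$. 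Finally, for each pair $\sigma<\tau$ I would set $t_{\sigma,\tau}:=\sum_{(m,n)\in S_{\sigma,\tau}}x_{m,n}-|S^-_{\sigma,\tau}|$ and regroup the sum according to the values of $(t_{\sigma,\tau})_{\sigma<\tau}$: the $\theta$-phase becomes $\exp\bigl(i\sum_{\gamma<\rho}(\theta_\rho-\theta_\gamma)(t_{\gamma,\rho}+|S^-_{\gamma,\rho}|)\bigr)$, the definition of $t_{\sigma,\tau}$ turns into the first line of $(\star)$, and the remaining two lines of $(\star)$ are just the integrality and Heaviside conditions carried over. This is the claimed identity, valid pointwise and hence almost everywhere on $\Gamma_0^{2k\beta}$.

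I expect the main obstacle to be organisational rather than analytic: one must keep perfectly consistent track of (i) which of the two geometric expansions is in force at a given point of the contour and the sign it carries, separately for the two families $S^+_{\sigma,\tau}$ and $S^-_{\sigma,\tau}$ and including the boundary of each summation range; (ii) the two conventions $\alpha_m-\alpha_n=\pm(\theta_\tau-\theta_\sigma)$ and the definition of $(v_n-v_m)^\pm$, which are what allow the $\theta$-phase and the $v$-exponent to be reassembled in the uniform, pair-indexed form; and (iii) the exceptional set $\{\RE(v_n-v_m)=0\}$, which is measure zero and carries no pole of the integrand of $J_{k,\beta;\underline{l}}$ for $N$ large, so that the pointwise identity is all that is needed for the applications of the lemma in \cref{momlemma1,momlemma2}. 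No convergence difficulty arises beyond the elementary one already addressed, since only finitely many absolutely convergent series are being multiplied.
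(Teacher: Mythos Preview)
Your proposal is correct and follows essentially the same approach as the paper: partition $B_{k,\beta;\underline{l}}$ by the sets $S^\pm_{\sigma,\tau}$, expand each factor as a geometric series whose direction is dictated by the sign of $\RE\{(v_n-v_m)^\pm\}$, multiply out, and regroup by the level-sums $t_{\sigma,\tau}$. The only cosmetic difference is ordering---the paper first algebraically normalises the $S^-$ factors (pulling out $(-1)^{|S^-_{\sigma,\tau}|}e^{i(\theta_\tau-\theta_\sigma)|S^-_{\sigma,\tau}|}\prod e^{(v_p-v_q)/N}$) and then applies a single uniform geometric expansion, whereas you expand each factor directly and track the extra sign on $S^-$ afterwards---and your explicit remark that the identity need only hold off the measure-zero set $\{\RE(v_n-v_m)=0\}$ is a useful clarification the paper leaves implicit.
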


{ Firstly, recall the definition of the sets $B_{k,\beta;\underline{l}}, S_{\sigma,\tau}^+$, and $S_{\sigma,\tau}^-$,
\begin{align*}
B_{k,\beta;\underline{l}}&\coloneqq\{(m,n):1\leq m\leq k\beta<n\leq 2k\beta, \alpha_m\neq\alpha_n\},\\
S^+_{\sigma,\tau}&\coloneqq\{(m,n)\in B_{k,\beta;\underline{l}}:\alpha_m-\alpha_n=\theta_\tau-\theta_\sigma\},\\
S^-_{\sigma,\tau}&\coloneqq\{(m,n)\in B_{k,\beta;\underline{l}}:\alpha_m-\alpha_n=\theta_\sigma-\theta_\tau\}.
\end{align*}}

We use the partition of $B_{k,\beta;\underline{l}}$ by the sets $S_{\sigma,\tau}^+, S_{\sigma,\tau}^-$ (although not emphasised in the notation, these sets also depend on $k, \beta$, and $l_1,\dots,l_{k-1}$) to break up the product appearing on the left hand side of the statement of the lemma as follows,
\begin{align}
\prod_{(m,n)\in B_{k,\beta;\underline{l}}}&\left(1-e^{\frac{v_n-v_m}{N}}e^{i(\alpha_m-\alpha_n)}\right)^{-1}\nonumber\\
&=\prod_{1\leq \sigma<\tau\leq k}\prod_{(m,n)\in S^+_{\sigma,\tau}}\left(1-e^{\frac{v_n-v_m}{N}}e^{i(\theta_\tau-\theta_\sigma)}\right)^{-1}\prod_{(p,q)\in S^-_{\sigma,\tau}}\left(1-e^{\frac{v_q-v_p}{N}}e^{i(\theta_\sigma-\theta_\tau)}\right)^{-1}\\
&=\prod_{1\leq \sigma<\tau\leq k}(-1)^{|S^-_{\sigma,\tau}|}e^{i(\theta_\tau-\theta_\sigma)|S^-_{\sigma,\tau}|}\prod_{(p,q)\in S^-_{\sigma,\tau}}e^{\frac{v_p-v_q}{N}}\prod_{(m,n)\in S_{\sigma,\tau}}\left(1-e^{\frac{1}{N}(v_n-v_m)^\pm}e^{i(\theta_\tau-\theta_\sigma)}\right)^{-1}\label{geoterm}
\end{align}
where 
\begin{align*}
(v_n-v_m)^\pm&=\begin{cases}v_n-v_m&\text{for }(m,n)\in S_{\sigma,\tau}^+\\v_m-v_n&\text{for }(m,n)\in S_{\sigma,\tau}^-.\end{cases}\\
\end{align*}

For a fixed choice $\sigma,\tau$, and a fixed pair $(m,n)$, we use the following expansion
\begin{align}
(1-&\exp(\tfrac{1}{N}(v_n-v_m)^\pm)\exp(i(\theta_\tau-\theta_\sigma)))^{-1}\nonumber\\
 &= -\operatorname{sign}(\RE\{(v_n-v_m)^\pm\})\sum_{t=-\infty}^\infty \exp(\tfrac{1}{N}(v_n-v_m)^{\pm}t)\exp(i(\theta_\tau-\theta_\sigma)t)\operatorname{H}(-t\RE\{(v_n-v_m)^\pm\}),\label{geoexpansion}
\end{align}
where $\operatorname{H}(x)$ is the Heaviside step function (so $\operatorname{H}(x)=1$ if $x\geq0$ and $\operatorname{H}(x)=0$ if $x<0$).  Note that this series converges because if $\RE\{(v_n-v_m)^\pm\}>0$, only negative $t$ terms contribute, and otherwise if $\RE\{(v_n-v_m)^\pm\}<0$, only non-negative $t$ terms survive.

%
%
%
%
%
Now, incorporating \eqref{geoexpansion} into the final product of \eqref{geoterm}, we have

\begin{align}
&\prod_{(m,n)\in B_{k,\beta;\underline{l}}}\left(1-\exp\left({\tfrac{1}{N}(v_n-v_m)}\right)\exp\left({i(\alpha_m-\alpha_n)}\right)\right)^{-1}\nonumber\\
&=\prod_{1\leq \sigma<\tau\leq k}(-1)^{|S^-_{\sigma,\tau}|}\exp(i(\theta_\tau-\theta_\sigma)|S^-_{\sigma,\tau}|)\prod_{(p,q)\in S^-_{\sigma,\tau}}\exp(\frac{v_p-v_q}{N})\\
&\times\prod_{(m,n)\in S_{\sigma,\tau}} \left(-\operatorname{sign}(\RE\{(v_n-v_m)^\pm\})\sum_{t=-\infty}^\infty \exp(\tfrac{1}{N}(v_n-v_m)^{\pm}t)\exp(i(\theta_\tau-\theta_\sigma)t)\operatorname{H}(-t\RE\{(v_n-v_m)^\pm\})\right)\nonumber\\
&=(-1)^{g(k,\beta;\underline{l})}\prod_{1\leq\sigma<\tau\leq k}\sum_{t_{\sigma,\tau}=-\infty}^\infty \exp(i(\theta_\tau-\theta_\sigma)(t_{\sigma,\tau}+|S_{\sigma,\tau}^-|))\\
&\times\sum_{\substack{\underline{x}=(x_{m,n})\\\sum_{(m,n)\in S_{\sigma,\tau}} x_{m,n}=t_{\sigma,\tau}+|S_{\sigma,\tau}^-|}} \prod_{(m,n)\in S_{\sigma,\tau}}\exp(\tfrac{1}{N}(v_n-v_m)^{\pm}x_{m,n})\operatorname{H}(-x_{m,n}\RE\{(v_n-v_m)^\pm\})\nonumber\\
&=(-1)^{g(k,\beta;\underline{l})}\sum_{\substack{t_{\sigma,\tau}=-\infty\\\text{for }1\leq\sigma<\tau\leq k}}^\infty\exp\left(i\sum_{\gamma<\rho}(\theta_\rho-\theta_\gamma)(t_{\gamma,\rho}+|S^-_{\gamma,\rho}|)\right)\sum_{\substack{(x_{m,n})\\ \textbf{(}\textbf{$\star$}\textbf{)}}}\exp\left({\tfrac{1}{N}\sum x_{m,n}(v_n-v_m)^{\pm}}\right).\label{multigeosum}
\end{align}

The overall sign in (\ref{multigeosum}) is 
\begin{equation}\label{overallphase}
(-1)^{g(k,\beta;\underline{l})}=(-1)^{\sum_{\sigma<\tau}|S_{\sigma,\tau}^-|}\prod_{\substack{(m,n)\in S_{\sigma,\tau}\\ 1\leq \sigma<\tau\leq k}} (-\operatorname{sign}(\RE\{(v_n-v_m)^\pm\})),
\end{equation}
the first sum should be read as 
\[\sum_{\substack{t_{\sigma,\tau}=-\infty\\\text{for }1\leq\sigma<\tau\leq k}}^\infty=\sum_{t_{1,2}=-\infty}^\infty\sum_{t_{1,3}=-\infty}^\infty\cdots\sum_{t_{1,k}=-\infty}^\infty\sum_{t_{2,3}=-\infty}^\infty\cdots\sum_{t_{k-1,k}=-\infty}^\infty,\] 
and the second sum is now over the `full' vectors $\underline{x}=(x_{m,n})_{(m,n)\in B_{k,\beta;\underline{l}}}$ whose elements $x_{m,n}$ are integers subject to constraints given by $(\star)$, which are
\begin{align}
\sum_{(m,n)\in S_{\sigma,\tau}}x_{m,n}&=t_{\sigma,\tau}+|S_{\sigma,\tau}^-|,\quad 1\leq \sigma<\tau\leq k\label{subseqconstraint1}\\
x_{m,n}&\in \mathbb{Z}, \quad \forall (m,n)\in B_{k,\beta;\underline{l}}\\
H(-x_{m,n}\RE\{(v_n-v_m)^\pm\})&=1,\quad\forall (m,n)\in B_{k,\beta;\underline{l}}.\label{subseqconstraint2}
\end{align}

This means that the vector $\underline{x}$ should be thought of as being made up of concatenated subsequences $(x_{m,n})_{(m,n)\in S_{\sigma,\tau}}$ for each $1\leq \sigma<\tau\leq k$, and each subsequence must satisfy the constraints \eqref{subseqconstraint1}-\eqref{subseqconstraint2}. This completes the proof of \cref{lemma:geosums}.

%

\end{proof}

%
%

\begin{proof}[Proof of \cref{momlemma1}]
Firstly, recall the statement of the lemma.  
\begin{lemmum}
\begin{align*}
\mom_N(k,\beta)&\sim\sum_{l_1,\dots,l_{k-1}=0}^{2\beta}\frac{\tilde{c}_{\underline{l}}(k,\beta)N^{|A_{k,\beta;\underline{l}}|}}{(2\pi i)^{2k\beta}((k\beta)!)^2}\int_{\Gamma_0}\cdots\int_{\Gamma_0}f(\underline{v};\underline{l})\hspace{-0.3cm}\sum_{\substack{t_{\sigma,\tau}=-\infty\\\text{for }1\leq\sigma<\tau\leq k}}^\infty\sum_{\substack{(x_{m,n})\\ \textbf{(}\textbf{$\star$}\textbf{)}}}\exp\left({\tfrac{1}{N}\sum x_{m,n}(v_n-v_m)^{\pm}}\right)\\
&\times\prod_{j=1}^{k-1}\delta_{N(l_j-\beta)+\sum_{\rho=j+1}^k(t_{j,\rho}+|S^-_{j,\rho}|)-\sum_{\gamma=1}^{j-1}(t_{\gamma,j}+|S^-_{\gamma,j}|)}\prod_{m=1}^{2k\beta}dv_m\nonumber.
\end{align*}
\end{lemmum}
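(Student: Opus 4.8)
The plan is to substitute the asymptotic expansion \eqref{eq:IkbetaGEO} for $I_{k,\beta}(\underline\theta)$ into the definition \eqref{eq:momall} of $\mom_N(k,\beta)$ and to carry out the $k$ integrations over $\theta_1,\dots,\theta_k$. The starting observation is that in \eqref{eq:IkbetaGEO} the entire $\theta$-dependence is confined to the product of the two exponentials $\exp\big(iN(\sum_{m=k\beta+1}^{2k\beta}\alpha_m-\beta\sum_{j=1}^k\theta_j)\big)$ and $\exp\big(i\sum_{1\le\gamma<\rho\le k}(\theta_\rho-\theta_\gamma)(t_{\gamma,\rho}+|S_{\gamma,\rho}^-|)\big)$; the prefactors, the function $f(\underline v;\underline l)$, the factor $\exp(\tfrac1N\sum x_{m,n}(v_n-v_m)^\pm)$ and the contours $\Gamma_0$ carry none. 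Since each $\alpha_m$ equals one of $\theta_1,\dots,\theta_k$, this product has the shape $\exp(i\sum_{j=1}^k c_j\theta_j)$ with integer coefficients $c_j=c_j(\underline l,(t_{\sigma,\tau}))$. I would then interchange the $\theta$-integration with the finite sum over $\underline l$, with the sums over $(t_{\sigma,\tau})$ and $(x_{m,n})$ — absolutely convergent by the Heaviside constraints in \cref{lemma:geosums} — and with the contour integrals over the fixed compact circles $\Gamma_0$; all of this is legitimate by Fubini's theorem. The $\theta$-integrations then collapse to $\frac{1}{(2\pi)^k}\int_0^{2\pi}\cdots\int_0^{2\pi}\exp(i\sum_j c_j\theta_j)\,d\theta_1\cdots d\theta_k=\prod_{j=1}^k\delta_{c_j}$.

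The substance of the argument is the evaluation of $c_j$. Here I would first note that the symmetry factor $c_{\underline l}(k,\beta)$ of \eqref{prodofbinom} vanishes unless $l_k:=k\beta-\sum_{i=1}^{k-1}l_i$ lies in $\{0,1,\dots,2\beta\}$, since otherwise one of its binomial factors would require choosing more contours than are available; so it suffices to compute $c_j$ for $\underline l$ in this admissible range. For such $\underline l$, inspection of \eqref{alphavector} shows that $\theta_j$ occurs exactly $l_j$ times among $\alpha_1,\dots,\alpha_{k\beta}$ and exactly $2\beta-l_j$ times among $\alpha_{k\beta+1},\dots,\alpha_{2k\beta}$, for every $j\in\{1,\dots,k\}$. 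Hence $\sum_{m=k\beta+1}^{2k\beta}\alpha_m=\sum_{j=1}^k(2\beta-l_j)\theta_j$, so the first exponential contributes $N(\beta-l_j)$ to $c_j$, while reading the coefficient of $\theta_j$ off the second exponential contributes $\sum_{\gamma=1}^{j-1}(t_{\gamma,j}+|S_{\gamma,j}^-|)-\sum_{\rho=j+1}^k(t_{j,\rho}+|S_{j,\rho}^-|)$. Thus $c_j=N(\beta-l_j)+\sum_{\gamma=1}^{j-1}(t_{\gamma,j}+|S_{\gamma,j}^-|)-\sum_{\rho=j+1}^k(t_{j,\rho}+|S_{j,\rho}^-|)$, and since $\delta_c=\delta_{-c}$ this identifies $\delta_{c_j}$ with the Kronecker delta indexed by $N(l_j-\beta)+\sum_{\rho=j+1}^k(t_{j,\rho}+|S_{j,\rho}^-|)-\sum_{\gamma=1}^{j-1}(t_{\gamma,j}+|S_{\gamma,j}^-|)$ appearing in the statement, for each $j=1,\dots,k$.

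To finish, I would observe that the factor $\delta_{c_k}$ from the $\theta_k$-integral is redundant: summing, $\sum_{j=1}^k c_j=N\big(k\beta-\sum_{j=1}^k l_j\big)=0$ because $\sum_j l_j=k\beta$ on the admissible range and each $(t_{\gamma,\rho}+|S_{\gamma,\rho}^-|)$ enters the double sum once with each sign. Hence $c_k=-\sum_{j<k}c_j$, so $\prod_{j=1}^k\delta_{c_j}=\prod_{j=1}^{k-1}\delta_{c_j}$, and reinstating the prefactor $\tilde{c}_{\underline l}(k,\beta)N^{|A_{k,\beta;\underline l}|}/\big((2\pi i)^{2k\beta}((k\beta)!)^2\big)$, the $v$-integral of $f(\underline v;\underline l)\exp(\tfrac1N\sum x_{m,n}(v_n-v_m)^\pm)$ and the surviving sums over $(t_{\sigma,\tau})$ and $(x_{m,n})$ reproduces the right-hand side of the lemma. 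The step I expect to be the main obstacle is the evaluation of $c_j$ in the middle paragraph: it rests on the combinatorial bookkeeping of how the poles are distributed between the two halves of $\underline\alpha$ — in particular the vanishing of $c_{\underline l}(k,\beta)$ off the simplex on which \eqref{alphavector} yields the clean occurrence counts $l_j$ and $2\beta-l_j$ — and on carefully tracking the signs $|S_{\gamma,\rho}^-|$ together with the evenness of the Kronecker delta. The exchange of summation and integration, and the orthogonality relation $\tfrac1{2\pi}\int_0^{2\pi}e^{ic\theta}\,d\theta=\delta_{c,0}$, are routine.
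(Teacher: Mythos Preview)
Your proposal is correct and follows essentially the same route as the paper: substitute \eqref{eq:IkbetaGEO} into \eqref{eq:momall}, switch the order of integration, and integrate out the $\theta_j$'s via orthogonality. The only organizational difference is that the paper first rewrites the exponent in terms of the $k-1$ differences $\theta_j-\theta_k$ (using $\theta_\rho-\theta_\gamma=(\theta_k-\theta_\gamma)-(\theta_k-\theta_\rho)$), so the $\theta_k$-integral is trivially $2\pi$ and only $k-1$ Kronecker deltas appear; you instead compute all $k$ coefficients $c_j$ and then argue that $\delta_{c_k}$ is redundant via $\sum_j c_j=0$ --- these are equivalent bookkeepings, and your explicit verification that $c_{\underline l}(k,\beta)$ vanishes off the simplex $0\le l_k\le 2\beta$ (which the paper leaves implicit) is a nice addition.
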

We begin with \cref{eq:IkbetaGEO},

\begin{align*}
I_{k,\beta}(\underline{\theta})&\sim \sum_{l_1,\dots,l_{k-1}=0}^{2\beta}\frac{\tilde{c}_{\underline{l}}(k,\beta)N^{|A_{k,\beta;\underline{l}}|}}{(2\pi i)^{2k\beta}((k\beta)!)^2}\int_{\Gamma_0}\cdots\int_{\Gamma_0}f(\underline{v};\underline{l})\exp\left(iN\Big(\sum_{m=k\beta+1}^{2k\beta}\alpha_m-\beta \sum_{j=1}^k\theta_{j}\Big)\right)\\
&\quad\times\sum_{\substack{t_{\sigma,\tau}=-\infty\\\text{for }1\leq\sigma<\tau\leq k}}^\infty\exp\left(i\sum_{{1\leq}\gamma<\rho{\leq k}}(\theta_\rho-\theta_\gamma)(t_{\gamma,\rho}+|S^-_{\gamma,\rho}|)\right)\sum_{\substack{(x_{m,n})\\ \textbf{(}\textbf{$\star$}\textbf{)}}}\exp\left({\tfrac{1}{N}\sum x_{m,n}(v_n-v_m)^{\pm}}\right)\prod_{m=1}^{2k\beta}dv_m.
\end{align*}
and use the structure of the $\alpha_m$ to deduce that
\begin{equation}\label{eq:alphasum}
\exp\left(iN\Big(\sum_{m=k\beta+1}^{2k\beta}\alpha_m-\beta \sum_{j=1}^k\theta_{j}\Big)\right)=\exp\left(iN\sum_{j=1}^{k-1}(\beta-l_j)(\theta_j-\theta_k)\right).
\end{equation}


Combining \cref{eq:IkbetaGEO}, \cref{eq:momall}, \cref{eq:alphasum} and switching the order of integration we have that 
\begin{align}
\mom_N(k,\beta)&\sim\sum_{l_1,\dots,l_{k-1}=0}^{2\beta}\frac{\tilde{c}_{\underline{l}}(k,\beta)N^{|A_{k,\beta;\underline{l}}|}}{(2\pi)^k(2\pi i)^{2k\beta}((k\beta)!)^2}\int_{\Gamma_0}\cdots\int_{\Gamma_0}f(\underline{v};\underline{l})\hspace{-0.3cm}\sum_{\substack{t_{\sigma,\tau}=-\infty\\\text{for }1\leq\sigma<\tau\leq k}}^\infty\sum_{\substack{(x_{m,n})\\ \textbf{(}\textbf{$\star$}\textbf{)}}}\exp\left({\frac{1}{N}\sum x_{m,n}(v_n-v_m)^{\pm}}\right)\nonumber\\
&\times\int_0^{2\pi}\hspace{-0.3cm}\cdots\int_0^{2\pi}\exp\left({iN\sum_{j=1}^{k-1}(\beta-l_j)(\theta_j-\theta_k)}\right) \exp\left(i\sum_{\gamma<\rho}(\theta_\rho-\theta_\gamma)(t_{\gamma,\rho}+|S^-_{\gamma,\rho}|)\right) \prod_{n=1}^kd\theta_n\prod_{m=1}^{2k\beta}dv_m\label{eq:momnearlythere}.
\end{align}
By noting that $\theta_\rho-\theta_\gamma=\theta_k-\theta_\gamma-(\theta_k-\theta_\rho)$, we now see, importantly, that the $\theta$ integral will just be a function of differences $(\theta_j-\theta_k)$, $j\in\{1,\dots,k-1\}$. Focussing on the inner integral in \cref{eq:momnearlythere} we have 

\begin{align}
\int_0^{2\pi}\cdots\int_0^{2\pi}\exp\left(i\sum_{j=1}^{k-1}(\theta_k-\theta_j)\right.&\left.\left(N(l_j-\beta)+\sum_{\rho=j+1}^k(t_{j,\rho}+|S^-_{j,\rho}|)-\sum_{\gamma=1}^{j-1}(t_{\gamma,j}+|S^-_{\gamma,j}|)\right)\right)\prod_{n=1}^kd\theta_n\nonumber\\
&\qquad\qquad=(2\pi)^k\prod_{j=1}^{k-1}\delta_{N(l_j-\beta)+\sum_{\rho=j+1}^k(t_{j,\rho}+|S^-_{j,\rho}|)-\sum_{\gamma=1}^{j-1}(t_{\gamma,j}+|S^-_{\gamma,j}|)},
\end{align}
where the $\delta$ is a Kronecker $\delta$-function.  Considering this in the context of \cref{eq:momnearlythere} we have the result.
\end{proof}

%
%

\begin{proof}[Proof of lemma~\ref{momlemma2}]
We restate the claim for context. 
\begin{lemmum}
\begin{align*}
&\sum_{\substack{t_{\sigma,\tau}=-\infty\\\text{for }1\leq\sigma<\tau\leq k}}^\infty\sum_{\substack{(x_{m,n})\\ \textbf{(}\textbf{$\star$}\textbf{)}}}\exp\left({\tfrac{1}{N}\sum x_{m,n}(v_n-v_m)^{\pm}}\right)\prod_{j=1}^{k-1}\delta_{N(l_j-\beta)+\sum_{\rho=j+1}^k(t_{j,\rho}+|S^-_{j,\rho}|)-\sum_{\gamma=1}^{j-1}(t_{\gamma,j}+|S^-_{\gamma,j}|)}\\
&\sim N^{|B_{k,\beta;\underline{l}}|-k+1}\kappa_k\left((k-1)\beta-\sum_{j=1}^{k-1}l_j\right)^{|B_{k,\beta;\underline{l}}|-\binom{k}{2}}\Psi_{k,\beta;\underline{l}}(((k-1)\beta-\sum_{j=1}^{k-1}l_j)\underline{v}),
\end{align*}
where $\kappa_k$ is a constant depending on $k$, \[\Psi_{k,\beta;\underline{l}}(\underline{v}) =\underset{\substack{\underline{y}=(y_{m,n})_{(m,n)\in B_{k,\beta;\underline{l}}}\\\textbf{(}\tilde{\textbf{$\ddagger$}}\textbf{)}}}{\int\cdots\int}\exp\left({\sum y_{m,n}(v_n-v_m)^{\pm}}\right)\prod dy_{m,n},\]
and $\textbf{(}\tilde{\textbf{$\ddagger$}}\textbf{)}$ denotes normalised versions of constraints \textbf{(}\textbf{$\star$}\textbf{)} { as described in the proof of lemma~\ref{lemma:geosums}}, more details given in the proof.
\end{lemmum}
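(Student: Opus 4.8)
The plan is to read the double sum on the left as a weighted count of lattice points inside a polytope that dilates linearly in $N$, and to extract its asymptotics by a lattice Riemann-sum argument; write $w=(k-1)\beta-\sum_{j=1}^{k-1}l_j$ for brevity. First I would eliminate the variables $t_{\sigma,\tau}$ using \eqref{subseqconstraint1}, which lets us set $t_{\sigma,\tau}=\sum_{(m,n)\in S_{\sigma,\tau}}x_{m,n}-|S^-_{\sigma,\tau}|$; substituting this into the Kronecker deltas the $|S^-_{\sigma,\tau}|$ terms cancel, and the $j$-th delta becomes the integer linear condition
\[
\sum_{\rho=j+1}^{k}\ \sum_{(m,n)\in S_{j,\rho}}x_{m,n}\;-\;\sum_{\gamma=1}^{j-1}\ \sum_{(m,n)\in S_{\gamma,j}}x_{m,n}\;=\;N(\beta-l_j),\qquad 1\le j\le k-1 .
\]
These are the flow-conservation relations at $k-1$ of the $k$ vertices of the complete graph on $\{1,\dots,k\}$ whose edge $\{\sigma,\tau\}$ carries the weight $\sum_{(m,n)\in S_{\sigma,\tau}}x_{m,n}$; the defining matrix is totally unimodular of rank $k-1$. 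Together with \eqref{subseqconstraint2} --- which forces the real part of $\tfrac1N\sum_{(m,n)\in B_{k,\beta;\underline l}}x_{m,n}(v_n-v_m)^{\pm}$ to be $\le 0$, so that the double sum converges absolutely with every term of modulus $\le 1$ --- the summation index set is a coset of a rank $|B_{k,\beta;\underline l}|-(k-1)$ sublattice of $\mathbb Z^{|B_{k,\beta;\underline l}|}$ intersected with a fixed sign-cone.

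Next I would rescale $x_{m,n}=Ny_{m,n}$. The sign-cone is homogeneous, hence unchanged; the linear conditions become the $N$-independent relations $\sum_{\rho>j}\sum_{(m,n)\in S_{j,\rho}}y_{m,n}-\sum_{\gamma<j}\sum_{(m,n)\in S_{\gamma,j}}y_{m,n}=\beta-l_j$; and the summand becomes $\exp\!\big(\sum_{(m,n)\in B_{k,\beta;\underline l}}y_{m,n}(v_n-v_m)^{\pm}\big)$, which is independent of $N$ and decays exponentially as $\underline y\to\infty$ along the cone (this uses $\RE(v_n-v_m)^{\pm}\ne 0$ for the pairs in play, which holds for $\underline v$ off a Lebesgue-null subset of $\Gamma_0\times\cdots\times\Gamma_0$ and is harmless since one later integrates in $\underline v$; it is exactly the hypothesis underlying the expansion \eqref{geoexpansion}). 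The lattice of admissible $\underline y$ then has mesh $N^{-1}$ in the $(|B_{k,\beta;\underline l}|-k+1)$-dimensional affine span, so the left-hand side is a Riemann sum for the integral over the polytope $\mathcal P$ so described; the exponential decay renders that integral convergent and controls the tails of the sum, giving
\[
\text{LHS}\;\sim\;N^{\,|B_{k,\beta;\underline l}|-k+1}\,c_{\mathrm{lat}}\int_{\mathcal P}\exp\Big(\sum_{(m,n)\in B_{k,\beta;\underline l}}y_{m,n}(v_n-v_m)^{\pm}\Big)\prod_{(m,n)\in B_{k,\beta;\underline l}} dy_{m,n},
\]
with $c_{\mathrm{lat}}$ the (rational) reciprocal covolume of the sublattice in a fixed integral coordinatisation of the affine span.

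Finally I would reorganise this integral using the partition $B_{k,\beta;\underline l}=\bigcup_{1\le\sigma<\tau\le k}S_{\sigma,\tau}$: the exponent splits over the blocks, so $\int_{\mathcal P}$ fibres over the vector of edge-totals $(\sum_{(m,n)\in S_{\sigma,\tau}}y_{m,n})_{\sigma<\tau}$, which ranges over the $\binom{k-1}{2}$-dimensional affine space of flows on the complete graph on $k$ vertices with divergences $\beta-l_j$ (these summing to $w$), and over each fibre the integral factorises into $\binom k2$ block integrals. Rescaling so as to normalise the divergence/edge-total data --- scaling $w$ out of the constraints --- puts $\mathcal P$ into the shape $(\tilde\ddagger)$ on which $\Psi_{k,\beta;\underline l}$ is defined, and collecting the Jacobians of this rescaling produces exactly the factor $w^{\,|B_{k,\beta;\underline l}|-\binom k2}$ together with $\Psi_{k,\beta;\underline l}(w\,\underline v)$; the remaining numerical factor, which the bookkeeping shows to depend only on $k$ once combined with $c_{\mathrm{lat}}$, is absorbed into $\kappa_k$. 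This gives the claim.

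I expect the main obstacle to be the middle step: justifying the lattice Riemann-sum approximation over a \emph{non-compact} region, uniformly enough in $N$ that the $O(1/N)$ discrepancies suppressed at every earlier stage (for instance in \cref{lemma:geosums} and \cref{momlemma1}) do not accumulate when summed over infinitely many lattice points --- this is precisely what the exponential decay coming from the Heaviside constraints in \eqref{subseqconstraint2} buys us. The bookkeeping in the last step --- the power of $w$, the normalisation $(\tilde\ddagger)$, and the constant $\kappa_k$ --- is routine but intricate, and parallels the lattice-point count used in the proof of \cref{thm:krrr1} in~\cite{krrr15}.
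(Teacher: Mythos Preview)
Your approach is correct and rests on the same core idea as the paper --- a Riemann-sum/lattice-point asymptotic over a polytope dilating in $N$ --- but the organisation differs in a way worth noting. The paper keeps the two-level structure: it first fixes the $t_{\sigma,\tau}$, applies the Riemann-sum lemma (quoted as Lemma~4.12 from~\cite{krrr15}) to the inner $\underline{x}$-sum with scale $K=Nw$, obtaining $(Nw)^{|B_{k,\beta;\underline l}|-\binom{k}{2}}\Psi_{k,\beta;\underline l}(w\underline v)$, and then sums over the $\binom{k-1}{2}$ remaining free $t_{\sigma,\tau}$ to pick up the further factor $\kappa_k N^{\binom{k-1}{2}}$. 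You instead eliminate the $t_{\sigma,\tau}$ at the outset via \eqref{subseqconstraint1}, collapsing the whole expression to a single constrained lattice sum over $\underline x$, and then do one Riemann-sum step on a $(|B_{k,\beta;\underline l}|-k+1)$-dimensional region; the fibring over edge-totals is how you subsequently recover the $w^{|B_{k,\beta;\underline l}|-\binom{k}{2}}$ factor. Your flow-conservation interpretation on the complete graph $K_k$ (with the total-unimodularity remark handling the covolume) is a clean conceptual framing that the paper does not make explicit. One small caution: your ``scaling $w$ out of the constraints'' is slightly imprecise, since the individual divergences are $\beta-l_j$ and only their sum equals $w$; the paper sidesteps this by the shift of $x_{1,2}$ that forces the total-sum constraint to read exactly $Nw$, whereas in your fibred picture the $w$-power has to emerge from the combined Jacobian of normalising all $\binom{k}{2}$ edge-totals simultaneously --- this works, but the bookkeeping deserves care.
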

Recall that the first sum in the left hand side of the statement of the lemma should be interpreted as
\[\sum_{\substack{t_{\sigma,\tau}=-\infty\\\text{for }1\leq\sigma<\tau\leq k}}^\infty=\sum_{t_{1,2}=-\infty}^\infty\sum_{t_{1,3}=-\infty}^\infty\cdots\sum_{t_{1,k}=-\infty}^\infty\sum_{t_{2,3}=-\infty}^\infty\cdots\sum_{t_{k-1,k}=-\infty}^\infty,\]
and the second sum runs over vectors $\underline{x}=(x_{m,n})_{(m,n)\in B_{k,\beta;\underline{l}}}=(x_{m,n})_{(m,n)\in \bigcup_{\sigma<\tau}S_{\sigma,\tau}}$ with integer elements subject to the following constraints 
\begin{align}
\sum_{(m,n)\in S_{1,2}}x_{m,n}&=t_{1,2}+|S_{1,2}^-|\\
\sum_{(m,n)\in S_{1,3}}x_{m,n}&=t_{1,3}+|S_{1,3}^-|\\
&\vdots\nonumber\\
\sum_{(m,n)\in S_{1,k}}x_{m,n}&=t_{1,k}+|S_{1,k}^-|\\
\sum_{(m,n)\in S_{2,3}}x_{m,n}&=t_{2,3}+|S_{2,3}^-|\\
&\vdots\nonumber\\
\sum_{(m,n)\in S_{k-1,k}}x_{m,n}&=t_{k-1,k}+|S_{k-1,k}^-|\\
H(-x_{m,n}\RE\{(v_n-v_m)^\pm\})&=1,\quad \forall (m,n)\in B_{k,\beta;\underline{l}}.
\end{align}

%
%
%

We now focus on the product of $\delta$-functions in the left hand side of the statement of the lemma, 

\begin{align}
&\sum_{\substack{t_{\sigma,\tau}=-\infty\\\text{for }1\leq\sigma<\tau\leq k}}^\infty\sum_{\substack{\underline{x}\\ {\textbf{(}\textbf{$\star$}\textbf{)}}}}\exp\left({\tfrac{1}{N}\sum x_{m,n}(v_n-v_m)^{\pm}}\right)\prod_{j=1}^{k-1}\delta_{N(l_j-\beta)+\sum_{\rho=j+1}^k(t_{j,\rho}+|S_{j,\rho}^-|)-\sum_{\gamma=1}^{j-1}(t_{\gamma,j}+|S_{\gamma,j}^-|)}\label{eq:Nasympt2},
\end{align}
which constrain \eqref{eq:Nasympt2} to be zero unless the following hold,

\begin{align}
\sum_{j=2}^k(t_{1,j}+|S_{1,j}^-|)&=N(\beta-l_1)\label{lconst1}\\
\sum_{j=3}^k(t_{2,j}+|S_{2,j}^-|)-(t_{1,2}+|S_{1,2}^-|)&=N(\beta-l_2)\label{lconst2}\\
&\vdots\nonumber\\
(t_{k-2,k-1}+|S_{k-2,k-1}^-|+t_{k-2,k}+|S_{k-2,k}^-|)-\sum_{j=1}^{k-3}(t_{j,k-2}+|S_{j,k-2}^-|)&=N(\beta-l_{k-2})\label{lconst3}\\
(t_{k-1,k}+|S_{k-1,k}^-|)-\sum_{j=1}^{k-2}(t_{j,k-1}+|S_{j,k-1}^-|)&=N(\beta-l_{k-1}).\label{lconst4}
\end{align}

These conditions form an underdetermined system of linear equations.  There are $\binom{k}{2}$ variables and $k-1$ equations, hence $\binom{k}{2}-(k-1)$ free parameters.  We eliminate the $k-1$
dependent variables from \eqref{eq:Nasympt2} using the linear equations \eqref{lconst1}-\eqref{lconst4}.  The following rewriting of the system picks out which we choose to discard.
\begin{align}
t_{1,k}&=N(\beta-l_1)-\sum_{j=2}^{k-1}(t_{1,j}+|S_{1,j}^-|)-|S_{1,k}^-|\label{sbound1}\\
t_{2,k}&=N(\beta-l_2)+t_{1,2}+|S_{1,2}^-|-|S_{2,k}^-|-\sum_{j=3}^{k-1}(t_{2,j}+|S_{2,j}^-|)\\
\ \ &\vdots\nonumber\\
t_{k-2,k}&=N(\beta-l_{k-2})+\sum_{j=1}^{k-3}(t_{j,k-2}+|S_{j,k-2}^-|)-(t_{k-2,k-1}+|S_{k-2,k-1}^-|+|S_{k-2,k}^-|)\\
t_{k-1,k}&=N(\beta-l_{k-1})+\sum_{j=1}^{k-2}(t_{j,k-1}+|S_{j,k-1}^-|)-|S_{k-1,k}^-|.\label{sbound2}
\end{align}

Thus, the $k-1$ outer sums over $t_{1,k},\dots,t_{k-1,k}$ in \eqref{eq:Nasympt2} collapse and we are left with

\begin{align}
&\sum_{\substack{t_{\sigma,\tau}=-\infty\\\text{for }1\leq\sigma<\tau\leq k-1}}^\infty\sum_{\substack{\underline{x}\\ \textbf{(}\textbf{$\ddagger$}\textbf{)}}}\exp\left({\tfrac{1}{N}\sum x_{m,n}(v_n-v_m)^{\pm}}\right)\label{eq:Nasympt3},
\end{align}
where $\textbf{(}\textbf{$\ddagger$}\textbf{)}$ reflects constraints on $(x_{m,n})$ as before, with \eqref{sbound1}-\eqref{sbound2} substituted in.  Since the Heaviside function $H(-x_{m,n} \RE\{(v_n-v_m)^\pm\})$ is equal to 1 for all $(m,n)\in B_{k,\beta;\underline{l}}$, this sum converges exponentially quickly.

Summing over all weights $x_{m,n}$ and using \eqref{sbound1}-\eqref{sbound2}, we have that
\begin{equation}\label{vectorconst3}
\sum_{(m,n)\in B_{k,\beta;\underline{l}}}x_{m,n}=N((k-1)\beta-\sum_{j=1}^{k-1}l_j)+\sum_{1\leq\sigma<\tau\leq k-1}(t_{\sigma,\tau}+|S_{\sigma,\tau}^-|).
\end{equation}
Clearly if any of the $t_{\sigma,\tau}$ grow faster than $N$, then there must be a least one weight $x_{m,n}^*$ having the same growth rate.  So for large $t_{\sigma,\tau}$, the summands in  \eqref{vectorconst3} will not contribute to the leading order.

In order to pull out the correct power of $N$, we employ the following general lemma about geometric sums (see Keating et al.~\cite{krrr15}, lemma 4.12).
\begin{lemma}\label{lemma:4.12}
As $K\rightarrow\infty$,
\[\sum_{\substack{k_1+\cdots+k_d=K\\k_i\geq 0}}\exp\left(\frac{1}{K}\sum_{i=1}^dk_jz_j\right)= K^{d-1}\underset{\substack{x_1+\cdots+x_d=1\\x_j\geq 0}}{\int\cdots\int}e^{\sum x_jz_j}dx_1\cdots dx_d+O(K^{d-2}).\]
\end{lemma}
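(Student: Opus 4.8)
\textbf{Proof proposal for \cref{lemma:4.12}.}

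The plan is to prove the asymptotic equality by recognising the left-hand sum as a Riemann sum for the integral on the right. First I would make the substitution $x_j = k_j/K$, so that each lattice point $(k_1,\dots,k_d)$ with $k_i \geq 0$ and $k_1+\cdots+k_d=K$ corresponds to a point $\underline{x}=(x_1,\dots,x_d)$ in the standard $(d-1)$-dimensional simplex $\Sigma = \{\underline{x}: x_j \geq 0,\ \sum_j x_j = 1\}$, with the lattice points becoming equally spaced (with spacing $1/K$) inside $\Sigma$. The summand $\exp(\frac{1}{K}\sum_j k_j z_j) = \exp(\sum_j x_j z_j)$ is then exactly the integrand evaluated at the lattice point. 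Since the number of such lattice points is $\binom{K+d-1}{d-1} \sim K^{d-1}/(d-1)!$, and the Euclidean volume of $\Sigma$ is $\sqrt{d}/(d-1)!$, the sum times the appropriate normalising factor converges to the integral; the factor $K^{d-1}$ is precisely what is needed to balance the $1/K$ spacing in each of the $d-1$ effective dimensions.

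The key steps, in order, would be: (i) parametrise $\Sigma$ by the first $d-1$ coordinates, writing $x_d = 1 - x_1 - \cdots - x_{d-1}$, so that we are summing over lattice points in a genuine $(d-1)$-dimensional region with mesh $1/K$; (ii) observe that $F(\underline{x}) = \exp(\sum_j x_j z_j)$ is smooth (in particular Lipschitz with a bound uniform over the compact simplex, with constant depending only on the $z_j$ and $d$), so the standard error estimate for Riemann sums of a $C^1$ function on a fixed region with mesh $h$ gives an error of order $h$ relative to the main term; (iii) conclude that
\[
\sum_{\substack{k_1+\cdots+k_d=K\\ k_i\geq 0}} \exp\!\left(\tfrac{1}{K}\sum_{i=1}^d k_i z_i\right) = K^{d-1}\!\!\int_{\Sigma'}\!\! e^{\sum x_j z_j}\,dx_1\cdots dx_{d-1} + O(K^{d-2}),
\]
where $\Sigma'$ is the projected simplex; and (iv) identify the integral over $\Sigma'$ with the integral over the constraint $x_1+\cdots+x_d=1$, $x_j\geq 0$ as written in the statement. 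One should be slightly careful about lattice points on the boundary of $\Sigma$ (where some $k_i=0$): these number $O(K^{d-2})$ and so contribute only to the error term, which is why the Riemann-sum comparison with the open region and with the closed simplex agree to the order claimed.

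The main obstacle, such as it is, is bookkeeping rather than anything deep: making the Riemann-sum error estimate clean requires being careful about the geometry near the corners and faces of the simplex (where the number of boundary lattice points and the discrepancy between the lattice cell decomposition and $\Sigma$ itself must be controlled), and verifying that the $O(K^{d-2})$ error is uniform in the sense needed when this lemma is later applied with $z_j$ depending on the other variables $\underline{v}$ ranging over compact contours. Since $F$ and all its derivatives are bounded uniformly in terms of $\max_j|z_j|$ and $d$, and $\max_j |z_j|$ is bounded on the relevant compact sets, this uniformity is automatic; I would simply remark on it rather than belabour it. An alternative, perhaps cleaner, route would be to prove the identity first in the case $z_j = 0$ by the exact count $\binom{K+d-1}{d-1}$ and the volume of the simplex, then handle general $z_j$ by the same Riemann-sum argument applied to $F - F(0)$, but the direct approach above is adequate.
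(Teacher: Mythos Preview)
Your Riemann-sum argument is correct and is the natural way to establish this lemma. Note, however, that the paper does not actually prove \cref{lemma:4.12}: it is quoted directly from \cite{krrr15} (where it appears as lemma~4.12) and invoked as a black box. So there is no ``paper's own proof'' to compare against here; your proposal simply fills in what the paper takes for granted, and does so along exactly the lines one would expect the original reference to follow.
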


From this we deduce that the leading power of $K$ in the left hand side is given by the dimension of the space described by the weights $k_j$ subject to any rules placed upon them.  Within \cref{lemma:4.12}, one can think of the weights as forming a $d$-dimensional vector where the sum of the elements must equal $K$. Thus one has $d-1$ degrees of freedom in choosing the vector elements. 

This means that the information we need to extract from the constraints given by \textbf{(}\textbf{$\ddagger$}\textbf{)} in \eqref{eq:Nasympt3} is the dimension of the space spanned by the vector $\underline{x}=(x_{m,n})_{(m,n)\in B_{k,\beta;\underline{l}}}$ subject to those restrictions. This will give us the claimed power of $N$ in the statement of lemma~\ref{momlemma2}.

Before we apply lemma~\ref{lemma:4.12}, we first incorporate $\sum_{\sigma<\tau}(t_{\sigma,\tau}+|S_{\sigma,\tau}^-|)$ into one of the weights using \eqref{vectorconst3}.  To do this, pick one weight, say $x_{1,2}$,  and shift it by $\sum_{1\leq \sigma<\tau\leq k-1}(t_{\sigma,\tau}+|S_{\sigma,\tau}^-|)$ to get an equivalent form of the inner sum in the right hand side of \eqref{eq:Nasympt3},
\begin{align}
\sum_{\substack{\underline{x}\\ \textbf{(}\textbf{$\ddagger$}\textbf{)}}}\exp&\left(\tfrac{1}{N}\sum_{\substack{(m,n)\in B_{k,\beta;\underline{l}}\\ (m,n)\neq (1,2)}}x_{m,n}(v_n-v_m)^{\pm}+\tfrac{1}{N}\big(x_{1,2}+\sum_{\sigma<\tau}(t_{\sigma,\tau}+|S_{\sigma,\tau}^-|)\big)(v_2-v_1)^{\pm}\right)\nonumber\\
&\quad=\exp\left(\tfrac{1}{N}\sum_{\sigma<\tau}(t_{\sigma,\tau}+|S_{\sigma,\tau}^-|)(v_2-v_1)^{\pm}\right)\sum_{\substack{\underline{x}\\ \textbf{(}\textbf{$\ddagger^\prime$}\textbf{)}}}\exp\left(\tfrac{1}{N}\sum_{(m,n)\in B_{k,\beta;\underline{l}}}x_{m,n}(v_n-v_m)^{\pm}\right),\label{vectorconst4}
\end{align}

The constraints \textbf{(}\textbf{$\ddagger^\prime$}\textbf{)} can be deduced from \textbf{(}\textbf{$\ddagger$}\textbf{)} by applying the described shift to the weights. In particular, this now means \cref{vectorconst3} becomes
\begin{equation}\label{sumofweights}
\sum_{(m,n)\in B_{k,\beta;\underline{l}}}x_{m,n}=N((k-1)\beta-\sum_{j=1}^{k-1}l_j).
\end{equation}

Now, we apply lemma~\ref{lemma:4.12} to the sum term on the right hand side of \eqref{vectorconst4}, with $K=N((k-1)\beta-\sum_{j=1}^{k-1}l_j)$. (The case when $(k-1)\beta=\sum_{j=1}^{k-1}l_j$ does not contribute at leading order, for reasons to be discussed at the end of the proof.) To determine the leading power of $K$, we count the amount of choice we have in choosing the weights $x_{m,n}$, for a fixed choice of $t_{\sigma,\tau}$, $1\leq \sigma<\tau\leq k-1$.  From \eqref{sbound1}-\eqref{sbound2}, we see that for each $\binom{k}{2}$ equation we lose one degree of freedom in choosing the weights, so in total, we have $|B_{k,\beta;\underline{l}}|-\binom{k}{2}$ degrees of freedom in determining $\underline{x}$.  Thus, by lemma~\ref{lemma:4.12} we have 
\begin{align}
\sum_{\substack{\underline{x}\\ \textbf{(}\textbf{$\ddagger^\prime$}\textbf{)}}}&\exp\Big(\tfrac{1}{N}\sum_{(m,n)\in B_{k,\beta;\underline{l}}}x_{m,n}(v_n-v_m)^{\pm}\Big)\nonumber\\
&=\Big(N((k-1)\beta-\sum_{j=1}^{k-1}l_j)\Big)^{|B_{k,\beta;\underline{l}}|-\binom{k}{2}}\Psi_{k,\beta;\underline{l}}(((k-1)\beta-\sum_{j=1}^{k-1}l_j)\underline{v})+O\big(N^{|B_{k,\beta;\underline{l}}|-\binom{k}{2}-1}\big)\label{eq:Nasympt7},
\end{align}
where \[\Psi_{k,\beta;\underline{l}}(\underline{v}) =\underset{\substack{\underline{y}=(y_{m,n})_{(m,n)\in B_{k,\beta;\underline{l}}}\\\textbf{(}\tilde{\textbf{$\ddagger$}}\textbf{)}}}{\int\cdots\int}\exp\left({\sum y_{m,n}(v_n-v_m)^{\pm}}\right)\prod dy_{m,n},\]
and $\textbf{(}\tilde{\textbf{$\ddagger$}}\textbf{)}$ denotes the normalised version of the constraints $\textbf{(}{\textbf{$\ddagger^\prime$}}\textbf{)}$ (since we need only consider $t_{\sigma,\tau}$ growing at most like $N$ for each $1\leq\sigma<\tau\leq k-1$, asymptotically the constraints $\textbf{(}\tilde{\textbf{$\ddagger$}}\textbf{)}$ will be $O(1)$, and in particular will not depend on $t_{\sigma,\tau}$).  

Then, combining \cref{eq:Nasympt7} with \cref{vectorconst4} and \cref{eq:Nasympt3}, we have
\begin{align}
&\sum_{\substack{t_{\sigma,\tau}=-\infty\\\text{for }1\leq\sigma<\tau\leq k-1}}^\infty\sum_{\substack{\underline{x}\\ \textbf{(}\textbf{$\ddagger$}\textbf{)}}}\exp\left({\tfrac{1}{N}\sum x_{m,n}(v_n-v_m)^{\pm}}\right)\nonumber\\
&\sim \left(N((k-1)\beta-\sum_{j=1}^{k-1}l_j)\right)^{|B_{k,\beta;\underline{l}}|-\binom{k}{2}}\nonumber\\
&\qquad\times  \sum_{\substack{t_{\sigma,\tau}=-\infty\\\text{for }1\leq\sigma<\tau\leq k-1}}^\infty \Psi_{k,\beta;\underline{l}}(((k-1)\beta-\sum_{j=1}^{k-1}l_j)\underline{v}) \prod_{1\leq\sigma<\tau\leq k-1}\exp\left(\tfrac{1}{N}(t_{\sigma,\tau}+|S_{\sigma,\tau}^-|)(v_2-v_1)^{\pm}\right)\\
&\sim \kappa_k N^{\binom{k}{2}-(k-1)}\left(N((k-1)\beta-\sum_{j=1}^{k-1}l_j)\right)^{|B_{k,\beta;\underline{l}}|-\binom{k}{2}}\Psi_{k,\beta;\underline{l}}(((k-1)\beta-\sum_{j=1}^{k-1}l_j)\underline{v}),
\end{align}

for some constant $\kappa_k$ depending on $k$. Note that the case where $\sum_{j=1}^{k-1}l_j=(k-1)\beta$ falls in to the subleading order terms.

Thus at leading order,

\begin{align}
&\sum_{\substack{t_{\sigma,\tau}=-\infty\\\text{for }1\leq\sigma<\tau\leq k}}^\infty\sum_{\substack{(x_{m,n})\\ \textbf{(}\textbf{$\star$}\textbf{)}}}\exp\left({\tfrac{1}{N}\sum x_{m,n}(v_n-v_m)^{\pm}}\right)\prod_{j=1}^{k-1}\delta_{N(l_j-\beta)+\sum_{\rho=j+1}^k(t_{j,\rho}+|S^-_{j,\rho}|)-\sum_{\gamma=1}^{j-1}(t_{\gamma,j}+|S^-_{\gamma,j}|)}\nonumber\\
&\sim N^{|B_{k,\beta;\underline{l}}|-k+1}\kappa_k \left((k-1)\beta-\sum_{j=1}^{k-1}l_j\right)^{|B_{k,\beta;\underline{l}}|-\binom{k}{2}}\Psi_{k,\beta;\underline{l}}(((k-1)\beta-\sum_{j=1}^{k-1}l_j)\underline{v}).\label{finalasympt}
\end{align}

\end{proof}

%
%

\begin{proof}[Proof of lemma~\ref{momlemma3}]

We restate the lemma.

\begin{lemmum}
\[\mom_N(k,\beta)\sim \gamma_{k,\beta}N^{k^2\beta^2-k+1}\]
where 
\begin{equation*}\label{eq:gamma}
\gamma_{k,\beta}= \sum_{l_1,\dots,l_{k-1}=0}^{2\beta}{c}_{k,\beta;\underline{l}}((k-1)\beta-\sum_{j=1}^{k-1}l_j)^{|B_{k,\beta;\underline{l}}|-\binom{k}{2}}P_{k,\beta}(l_1,\dots,l_{k-1}),
\end{equation*}
${c}_{k,\beta;\underline{l}}$ is a constant depending on $k,\beta,l_1,\dots,l_{k-1}$ (more details given below), and
\begin{align*}
P_{k,\beta}(l_1,\dots,l_{k-1})&=\frac{(-1)^{g(k,\beta;\underline{l})}}{(2\pi i)^{2k\beta}((k\beta)!)^2}\\
&\times\int_{\Gamma_0}\cdots\int_{\Gamma_0}\frac{e^{-\sum_{m=k\beta+1}^{2k\beta}v_m}\prod_{\substack{m<n\\\alpha_m=\alpha_n}}\left({v_n-v_m}\right)^2}{\prod_{\substack{m\leq k\beta<n\\\alpha_m=\alpha_n}}\left({v_n-v_m}\right)\prod_{m=1}^{2k\beta}v_m^{2\beta}}\Psi_{k,\beta;\underline{l}}(((k-1)\beta-\sum_{j=1}^{k-1}l_j)\underline{v})\prod_{m=1}^{2k\beta}dv_m,\nonumber
\end{align*}
with $\Psi_{k,\beta;\underline{l}}(\underline{v})$ as defined in \cref{momlemma2}, and $g(k,\beta;\underline{l})$ given by \eqref{overallphase}.
\end{lemmum}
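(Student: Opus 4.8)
The plan is to feed the asymptotic of \cref{momlemma2} into the expression for $\mom_N(k,\beta)$ supplied by \cref{momlemma1}, interchange that asymptotic with the (fixed, compact) contour integration over $\underline{v}$, and then simply count powers of $N$. Concretely, \cref{momlemma1} writes $\mom_N(k,\beta)$, up to a relative error of order $1/N$, as a finite sum over $l_1,\dots,l_{k-1}\in\{0,\dots,2\beta\}$ of
\[\frac{\tilde{c}_{\underline{l}}(k,\beta)\,N^{|A_{k,\beta;\underline{l}}|}}{(2\pi i)^{2k\beta}((k\beta)!)^2}\int_{\Gamma_0}\!\!\cdots\!\int_{\Gamma_0} f(\underline{v};\underline{l})\,\mathcal{S}_N(\underline{v};\underline{l})\prod_{m=1}^{2k\beta}dv_m,\]
where $\mathcal{S}_N(\underline{v};\underline{l})$ denotes precisely the double sum over $(t_{\sigma,\tau})$ and $(x_{m,n})$ together with the product of Kronecker $\delta$'s appearing on the left-hand side of \cref{momlemma2}. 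Replacing $\mathcal{S}_N(\underline{v};\underline{l})$ by the right-hand side of \cref{momlemma2},
\[\mathcal{S}_N(\underline{v};\underline{l})\sim N^{|B_{k,\beta;\underline{l}}|-k+1}\kappa_k\Big((k-1)\beta-\sum_{j=1}^{k-1}l_j\Big)^{|B_{k,\beta;\underline{l}}|-\binom{k}{2}}\Psi_{k,\beta;\underline{l}}\Big(\big((k-1)\beta-\sum_{j=1}^{k-1}l_j\big)\underline{v}\Big),\]
the power of $N$ in the $\underline{l}$-th summand becomes $|A_{k,\beta;\underline{l}}|+|B_{k,\beta;\underline{l}}|-k+1$.

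The crucial point is that $|A_{k,\beta;\underline{l}}|+|B_{k,\beta;\underline{l}}|=k^2\beta^2$ for every $\underline{l}$ (recorded just above the definitions of $A_{k,\beta;\underline{l}}$ and $B_{k,\beta;\underline{l}}$), so this exponent equals $k^2\beta^2-k+1$ \emph{independently of} $\underline{l}$. Hence every term contributes at the same order, and reading off the coefficient of $N^{k^2\beta^2-k+1}$ leaves a finite sum over $\underline{l}$. Writing $\tilde{c}_{\underline{l}}(k,\beta)=(-1)^{g(k,\beta;\underline{l})}c_{\underline{l}}(k,\beta)$, setting $c_{k,\beta;\underline{l}}:=\kappa_k\,c_{\underline{l}}(k,\beta)$, and collecting the sign $(-1)^{g(k,\beta;\underline{l})}$, the constant $((2\pi i)^{2k\beta}((k\beta)!)^2)^{-1}$ and the contour integral $\int_{\Gamma_0}\cdots\int_{\Gamma_0}f(\underline{v};\underline{l})\Psi_{k,\beta;\underline{l}}(\cdots)\prod dv_m$ into $P_{k,\beta}(l_1,\dots,l_{k-1})$, one obtains exactly the asserted formula for $\gamma_{k,\beta}$. (The statement $\mom_N(k,\beta)\sim\gamma_{k,\beta}N^{k^2\beta^2-k+1}$ then requires $\gamma_{k,\beta}\neq 0$, which is the separate assertion of \cref{momlemma4}.)

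The one genuine obstacle is legitimising the term-by-term use of \cref{momlemma2} under the $\underline{v}$-integral, since that asymptotic is pointwise in $\underline{v}$ and both its constraints $(\tilde{\ddagger})$ and the function $\Psi_{k,\beta;\underline{l}}$ depend on the signs of $\RE\{(v_n-v_m)^{\pm}\}$ for $(m,n)\in B_{k,\beta;\underline{l}}$. I would handle this by noting that the $\Gamma_0$ are fixed small circles, so $\Gamma_0^{2k\beta}$ is compact, and that the equations $\RE\{(v_n-v_m)^{\pm}\}=0$ cut it into finitely many open ``sign chambers'' (together with an excluded set of zero measure, dealt with by a small perturbation of the contours, or absorbed since $f\cdot\Psi$ stays bounded there). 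On the closure of each chamber the Heaviside conditions in $(\star)$ are constant, the geometric series converge exponentially and uniformly, \cref{lemma:4.12} applies with an error uniform in $\underline{v}$, and scaling $\underline{v}$ by the positive constant $(k-1)\beta-\sum_j l_j$ does not change the chamber; thus the $N\to\infty$ asymptotic passes through each contour integral with a uniform remainder of size $O(N^{k^2\beta^2-k})$. Finally, the degenerate contribution $(k-1)\beta=\sum_j l_j$ is subleading (as already noted in the proof of \cref{momlemma2}), and the computation involves only finitely many $\underline{l}$ and finitely many chambers, so the remainders aggregate to $O(N^{k^2\beta^2-k})$, completing the argument.
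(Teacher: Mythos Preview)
Your proposal is correct and follows essentially the same route as the paper: combine \cref{momlemma1} with \cref{momlemma2}, use $|A_{k,\beta;\underline{l}}|+|B_{k,\beta;\underline{l}}|=k^2\beta^2$ to see that every $\underline{l}$ contributes at order $N^{k^2\beta^2-k+1}$, and absorb the various constants into $c_{k,\beta;\underline{l}}$ and $P_{k,\beta}$. The paper's proof is in fact terser than yours; it does not spell out the justification for passing the pointwise asymptotic of \cref{momlemma2} through the compact contour integral, whereas your chamber decomposition and uniform-remainder argument make this step explicit.
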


Recall the definition of the sets $A_{k,\beta;\underline{l}}$ and $B_{k,\beta;\underline{l}}$,
\begin{align}
A_{k,\beta;\underline{l}}&\coloneqq\{(m,n):1\leq m\leq k\beta<n\leq 2k\beta, \alpha_m=\alpha_n\}\\
B_{k,\beta;\underline{l}}&\coloneqq\{(m,n):1\leq m\leq k\beta<n\leq 2k\beta, \alpha_m\neq\alpha_n\},
\end{align}
so $|A_{k,\beta;\underline{l}}|+|B_{k,\beta;\underline{l}}|=k^2\beta^2$.  Using this fact with \cref{momlemma1} and \cref{momlemma2} we have  

\begin{align}
\mom_N(k,\beta)&\sim N^{k^2\beta^2-k+1}\sum_{l_1,\dots,l_{k-1}=0}^{2\beta}\frac{(-1)^{g(k,\beta;\underline{l})}{c}_{k,\beta;\underline{l}}((k-1)\beta-\sum_{j=1}^{k-1}l_j)^{{|B_{k,\beta;\underline{l}}|-\binom{k}{2}}}}{(2\pi i)^{2k\beta}((k\beta)!)^2}\nonumber\\
&\quad\times\int_{\Gamma_0}\cdots\int_{\Gamma_0}f(\underline{v};\underline{l})\Psi_{k,\beta;\underline{l}}\Big(\Big((k-1)\beta-\sum_{j=1}^{k-1}l_j\Big)\underline{v}\Big)\prod_{m=1}^{2k\beta}dv_m,
\end{align}
where ${c}_{k,\beta;\underline{l}}$ is a constant encompassing the two constants given in \eqref{prodofbinom} and the statement of \cref{momlemma2},

Lemma~\ref{momlemma3} then follows recalling the definition of $f(\underline{v};\underline{l})$, 
\[f(\underline{v};\underline{l})=\frac{e^{-\sum_{m=k\beta+1}^{2k\beta}v_m}\prod_{\substack{m<n\\\alpha_m=\alpha_n}}\left({v_n-v_m}\right)^2}{\prod_{\substack{m\leq k\beta<n\\\alpha_m=\alpha_n}}\left({v_n-v_m}\right)\prod_{m=1}^{2k\beta}v_m^{2\beta}},\]
and by setting $\gamma_{k,\beta}$ and $P_{k,\beta}(l_1,\dots,l_{k-1})$ as claimed. 

\end{proof}

%
%

\begin{proof}[Proof of lemma~\ref{momlemma4}]
Finally, recall the statement of lemma~\ref{momlemma4}.

\begin{lemmum}
For $k,\beta\in\mathbb{N}$, $\gamma_{k,\beta}\neq0$ where $\gamma_{k,\beta}$ is as defined in lemma~\ref{momlemma3}. 
\end{lemmum}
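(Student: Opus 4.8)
The plan is to reduce the non-vanishing of $\gamma_{k,\beta}$ to a single positivity (or nonvanishing) statement that can be established by one of two complementary routes: a direct combinatorial evaluation via the symmetric-function side, or a sign-definiteness argument on the contour-integral side. The term in the sum defining $\gamma_{k,\beta}$ that we expect to control the whole expression is the ``balanced'' one, where the $l_j$ are chosen so that $\underline{\alpha}$ distributes the poles as evenly as possible; more precisely, I would first isolate the summand with $l_1=\cdots=l_{k-1}=\beta$ (so that $(k-1)\beta-\sum_j l_j$ does \emph{not} vanish only for other choices, and one must be careful which term actually carries the leading contribution). The key structural observation is that $\gamma_{k,\beta}$ equals the leading coefficient of $\mom_N(k,\beta)$, and by the symmetric-function representation of Section 2, $\mom_N(k,\beta)$ counts restricted semistandard Young tableaux $\widetilde{T}$ of rectangular shape $\langle N^{k\beta}\rangle$ with the balancing constraint $\tau_j = N\beta$ for each $j$. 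Since this count is manifestly a nonnegative integer for every $N$, and since Lemma \ref{lemma:ssyt} together with the restriction shows it is bounded by a polynomial of degree $k^2\beta^2$, it suffices to show that the number of such RSSYT grows at least like a positive constant times $N^{k^2\beta^2-k+1}$.

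The concrete plan has three steps. First, I would exhibit an explicit family of RSSYT parametrised by $k^2\beta^2-k+1$ independent ``free'' entries whose allowed ranges each scale linearly in $N$: starting from a fixed base filling that satisfies the column-strict and row-weak conditions and the balancing constraint, one perturbs a maximal independent set of cells, the size of that set being exactly the dimension count $|B_{k,\beta;\underline l}| - (k-1) + \binom{k}{2}-\binom{k}{2} = k^2\beta^2-k+1$ appearing in Lemmas \ref{momlemma2}--\ref{momlemma3}. This produces the lower bound $\mom_N(k,\beta)\geq c\, N^{k^2\beta^2-k+1}$ with $c>0$, which forces the leading coefficient $\gamma_{k,\beta}$ of the polynomial to be strictly positive, hence nonzero. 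Second, as a cross-check (and to match the form of $\gamma_{k,\beta}$ actually written in Lemma \ref{momlemma3}), I would verify that the contour-integral expression $P_{k,\beta}(\beta,\dots,\beta)$ is nonzero by computing the residue explicitly: after the change of variables and with all $\alpha_m$ equal within each block, the integrand of $P_{k,\beta}$ reduces to a ratio of Vandermonde-type products and $v_m^{-2\beta}$ factors whose residue at the origin is a known nonzero rational number (this is precisely the kind of evaluation carried out for $\gamma_\eta(c)$ in \cite{krrr15} and for the $k=1$ leading coefficient \eqref{leading}). Third, I would argue that no cancellation can occur between distinct summands: because the RSSYT count argument already gives strict positivity of the total, the sum over $\underline l$ cannot conspire to vanish regardless of the individual signs $(-1)^{g(k,\beta;\underline l)}$.

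The main obstacle I anticipate is reconciling the two representations at the level of \emph{signs}: the symmetric-function side makes nonnegativity transparent but obscures which $\underline l$ dominates, while the contour-integral side makes the degree and the leading term transparent but carries the awkward phase factor $(-1)^{g(k,\beta;\underline l)}$ and the $\Psi_{k,\beta;\underline l}$ integrals, whose positivity is not obvious term-by-term. The cleanest resolution — and the one I would pursue — is to lean entirely on the combinatorial lower bound: once one knows $\mom_N(k,\beta)$ is a polynomial in $N$ (Theorem \ref{thm:polynomial}) of degree \emph{at most} $k^2\beta^2$ and is bounded \emph{below} by a positive constant times $N^{k^2\beta^2-k+1}$ via an explicit tableau family, and that by Lemma \ref{momlemma3} its leading behaviour is $\gamma_{k,\beta}N^{k^2\beta^2-k+1}$, one concludes $\gamma_{k,\beta}>0$ immediately, sidestepping any need to track the sign of individual integrals. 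The residual technical work is then purely the construction of the $(k^2\beta^2-k+1)$-parameter family of RSSYT and the bookkeeping that its parameters are genuinely independent and each ranges over an interval of length $\gg N$; this is a finite, explicit combinatorial check, analogous to the lattice-point count underlying \eqref{krrr:leadingcoeff}.
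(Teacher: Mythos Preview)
Your proposal is sound in outline but takes a genuinely different route from the paper, and there is one real gap and one minor confusion worth flagging.

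\textbf{Comparison with the paper.} The paper proves $\gamma_{k,\beta}>0$ entirely on the contour-integral side, term by term: for each admissible $\underline{l}$ it computes the residue of $P_{k,\beta}(\underline{l})$ by exhibiting the unique monomial $(v_1\cdots v_{2k\beta})^{2\beta-1}$ in the numerator and checking its coefficient is strictly positive, and then tracks the sign of $(-1)^{g(k,\beta;\underline{l})}$ together with the parity of differentiations of $\Psi_{k,\beta;\underline{l}}$ to show every summand in $\gamma_{k,\beta}$ is nonnegative and not all zero. Your route instead leverages the RSSYT identity from Section~2: since $\mom_N(k,\beta)$ is a nonnegative integer for every $N$, a lower bound of the form $\geq c\,N^{k^2\beta^2-k+1}$ combined with the asymptotic $\mom_N(k,\beta)=\gamma_{k,\beta}N^{k^2\beta^2-k+1}+O(N^{k^2\beta^2-k})$ from Lemma~\ref{momlemma3} immediately forces $\gamma_{k,\beta}\geq c>0$. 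This is conceptually cleaner and sidesteps all the sign bookkeeping; what the paper's approach buys in exchange is a self-contained argument that does not rely on the tableau interpretation, and it yields the stronger statement that each individual $\underline{l}$-summand is positive.

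\textbf{The gap.} Your argument stands or falls on the construction of the $(k^2\beta^2-k+1)$-parameter family of RSSYT, which you defer as ``a finite, explicit combinatorial check''. This is the entire content of the lemma under your scheme, and it is not automatic: one must show that the Gelfand--Tsetlin polytope for shape $\langle N^{k\beta}\rangle$ with entries in $\{1,\dots,2k\beta\}$, intersected with the $k-1$ affine hyperplanes $\tau_j=N\beta$, has nonempty interior of the claimed dimension. It is plausible (the constraints are visibly independent and the ``block-diagonal'' filling lies strictly inside), but until you write down an interior point and verify the $k^2\beta^2-k+1$ independent perturbation directions, the proof is incomplete. The analogy with the $k=2$ lattice-point count in \cite{krrr15} is apt, but that case had a single constraint; here you must handle $k-1$ simultaneous ones.

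\textbf{The minor confusion.} Your proposed cross-check at $l_1=\cdots=l_{k-1}=\beta$ is misdirected: that choice gives $(k-1)\beta-\sum_j l_j=0$, so the prefactor in Lemma~\ref{momlemma3} kills that summand entirely (the paper notes this explicitly). The balanced term contributes nothing at leading order, so it cannot serve as a witness for $\gamma_{k,\beta}\neq0$ on the contour side. This does not damage your main combinatorial argument, but you should drop or redirect the ``second step''.
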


Thus, we have to show that 
\begin{equation}\label{eq:gamma1}
 \sum_{l_1,\dots,l_{k-1}=0}^{2\beta}{c}_{k,\beta;\underline{l}}((k-1)\beta-\sum_{j=1}^{k-1}l_j)^{|B_{k,\beta;\underline{l}}|-\binom{k}{2}}P_{k,\beta}(l_1,\dots,l_{k-1})\neq 0,
\end{equation}
for ${c}_{k,\beta;\underline{l}}$ some constant depending on $k,\beta, \underline{l}$ and
\begin{align}
P_{k,\beta}(l_1,\dots,l_{k-1})&=\frac{(-1)^{g(k,\beta;\underline{l})}}{(2\pi i)^{2k\beta}((k\beta)!)^2}\nonumber\\
&\times\int_{\Gamma_0}\cdots\int_{\Gamma_0}\frac{e^{-\sum_{m=k\beta+1}^{2k\beta}v_m}\prod_{\substack{m<n\\\alpha_m=\alpha_n}}\left({v_n-v_m}\right)^2}{\prod_{\substack{m\leq k\beta<n\\\alpha_m=\alpha_n}}\left({v_n-v_m}\right)\prod_{m=1}^{2k\beta}v_m^{2\beta}}\Psi_{k,\beta;\underline{l}}(((k-1)\beta-\sum_{j=1}^{k-1}l_j)\underline{v})\prod_{m=1}^{2k\beta}dv_m,\label{vandermondeintegrand}
\end{align}
Since $c_{k,\beta;\underline{l}}$ is a constant encompassing both \eqref{prodofbinom} and the constant appearing in the statement of  \cref{momlemma2}, it is clearly non-zero and its sign is independent of the sum parameters $l_1,\dots,l_{k-1}$. Further, at leading order we need only consider parameters $l_j$ such that $l_1+\cdots+l_{k-1}\neq (k-1)\beta$.  To prove the required result, we show that  $((k-1)\beta-\sum_{j=1}^{k-1}l_j)^{|B_{k,\beta;\underline{l}}|-\binom{k}{2}}P_{k,\beta}(\underline{l})\neq 0$ (in fact, it is strictly positive) for such $l_1,\dots,l_{k-1}$. To do this, we will appeal to the residue theorem. 

Fix a choice of $l_1,\dots,l_{k-1}$ in agreement with the various constraints.  To show that $P_{k,\beta}(\underline{l})$ is non-zero, { firstly denote the integrand in (\ref{vandermondeintegrand}) by $q_{k,\beta}(l_1,\dots,l_{k-1})$.  Then by the residue theorem} we have to show that there is a term of the form $(v_1\cdots v_{2k\beta})^{2\beta-1}$ with non-zero coefficient in the expansion of 
\begin{equation}\label{eq:residuetheorem1}
{q_{k,\beta}(l_1,\dots,l_{k-1})\prod_{m=1}^{2k\beta}v_m^{2\beta}=\Psi_{k,\beta;\underline{l}}\big(\big((k-1)\beta-\sum_{j=1}^{k-1}l_j\big)\underline{v}\big)e^{-\sum_{m=k\beta+1}^{2k\beta}v_m}\frac{\prod_{\substack{m<n\\\alpha_m=\alpha_n}}(v_n-v_m)^2}{\prod_{\substack{m\leq k\beta<n\\\alpha_m=\alpha_n}}(v_n-v_m)}}.
\end{equation}

{Now, simplifying the product terms of the right hand side of $(\ref{eq:residuetheorem1})$, }
\begin{align}
\frac{{ \prod_{\substack{m<n\\\alpha_m=\alpha_n}}(v_n-v_m)^2}}{{ \prod_{\substack{m\leq k\beta<n\\\alpha_m=\alpha_n}}(v_n-v_m)}}&{=\prod_{\substack{m\leq k\beta<n\\\alpha_m=\alpha_n}}\left({v_n-v_m}\right)\prod_{n=1}^{k}\Delta(v_{\sum_{j=1}^{n-1}l_j+1},\dots,v_{\sum_{j=1}^{n}l_j})^2}\nonumber\\
&{\times \prod_{n=1}^{k}\Delta(v_{\sum_{j=1}^{n}l_j+2(k-n)\beta+1},\dots,v_{\sum_{j=1}^{n-1}l_j+2(k-(n-1))\beta})^2,}\label{expansion}
\end{align}
where $l_k=k\beta-(l_1+\cdots+l_{k-1})$. We use the following expansion of the Vandermonde determinant,
\[\Delta(x_1,\dots,x_n)^2=\sum_{\sigma,\tau\in S_n}\Sgn(\sigma)\Sgn(\tau)\prod_{i=1}^nx_i^{\sigma(i)+\tau(i)-2}.\]

From any of the terms appearing in the first product of Vandermonde determinants in the right hand side of \cref{expansion}, we find a term of the form \[l_n!\prod_{i=\sum_{j=1}^{n-1}l_j+1}^{\sum_{j=1}^{n}l_j}v_i^{l_n-1},\quad n\in\{1,\dots,k\},\] and similarly for any of the terms in the second product.  Thus,  the Vandermonde determinants collectively contribute a term of the form 
\begin{equation}\label{eq:vdmterm}
\prod_{i=1}^{l_1}v_i^{l_1-1}\prod_{i=l_1+1}^{l_1+l_2}v_i^{l_2-1}\cdots \prod_{i=\sum_{j=1}^{k-1}l_j+1}^{k\beta}v_i^{l_k-1}\prod_{i=k\beta+1}^{\sum_{j=1}^{k-1}l_j+2\beta}v_i^{2\beta-l_k-1}\cdots\prod_{i=2(k-1)\beta+1+l_1}^{2k\beta}v_i^{2\beta-l_1-1},
\end{equation}
and this term has a strictly positive coefficient (a detailed explanation can be found in the Appendix, see~\cref{app:scott}) given by

\[\prod_{j=1}^kl_j!(2\beta-l_j)!.\] 

We expand the remaining product as 
\begin{equation}\label{productexpansion}
\prod_{\substack{m\leq k\beta<n\\\alpha_m=\alpha_n}}\left({v_n-v_m}\right)=\prod_{\substack{m\in\{1,\dots,l_1\}\\n\in\{l_1+2(k-1)\beta+1,\dots,2k\beta\}}}(v_n-v_m)\cdots\prod_{\substack{m\in\{\sum_{j=1}^{k-1}l_j+1,\dots,k\beta\}\\n\in\{k\beta+1,\dots,\sum_{j=1}^{k-1}l_j+2\beta\}}}(v_n-v_m).
\end{equation}
From the first product in the right hand side of \cref{productexpansion} we take the term $\prod_{i=1}^{l_1}(-v_i)^{2\beta-l_1}$.  The second gives $\prod_{i=l_1+1}^{l_1+l_2}(-v_i)^{2\beta-l_2}$, and so on.  Hence, in total from \cref{productexpansion} we have a term of the form 
\begin{equation}\label{middlecoeff}
(-1)^{k\beta}\prod_{j=1}^{k}\left(\prod_{i=\sum_{n=1}^{j-1}l_n+1}^{\sum_{n=1}^{j}l_n}v_i^{2\beta-1}\right)\prod_{j=1}^{k}\left(\prod_{i=\sum_{n=1}^{j}l_n+2(k-j)\beta+1}^{\sum_{n=1}^{j-1}l_n+2(k-(j-1))\beta}v_i^{2\beta-l_j-1}\right).
\end{equation}
We now use the exponential function in \cref{eq:residuetheorem1} to give us the remaining contribution,
\begin{align}
e^{-\sum_{m=k\beta+1}^{2k\beta}v_m}&=\sum_{t=0}^\infty \frac{\left(-\sum_{m=k\beta+1}^{2k\beta}v_m\right)^t}{t!}\\
&=\sum_{t=0}^\infty \frac{(-1)^t}{t!}\sum_{a_{k\beta+1}+\dots+a_{2k\beta}=t}\binom{t}{a_{k\beta+1},\dots,a_{2k\beta}}\prod_{i=k\beta+1}^{2k\beta}v_i^{a_i},
\end{align}
where the multinomial coefficient is \[\binom{n}{c_1,\dots,c_m}=\frac{n!}{c_1!\cdots c_m!}.\]
To complete the construction of the term of the form $(v_1\cdots v_{2k\beta})^{2\beta-1}$, we need 
\begin{equation}
a_{i}=\begin{cases}
l_k&\text{for }i\in\{k\beta+1,\dots,\sum_{j=1}^{k-1}l_j+2\beta\}\\
l_{k-1}&\text{for }i\in\{\sum_{j=1}^{k-1}l_j+2\beta+1,\dots,\sum_{j=1}^{k-2}l_j+4\beta\}\\
\vdots&\quad\vdots\\
l_1&\text{for }i\in\{l_1+2(k-1)\beta+1,\dots,2k\beta\}.
\end{cases}
\end{equation}

Hence the required coefficient comes from looking at the term for which $t=\sum_i a_i=\sum_i l_i(2\beta-l_i)$, which has coefficient 
\begin{equation}\label{finalcoeff}
\frac{(-1)^{\sum_{i=1}^k l_i(2\beta-l_i)}\binom{\sum l_i(2\beta-l_i)}{l_k,\dots,l_k,\dots,l_1,\dots,l_1}}{\left(\sum_{i=1}^k l_i(2\beta-l_i)\right)!}=\frac{(-1)^{k\beta}}{(l_1!)^{2\beta-l_1}\cdots (l_k!)^{2\beta-l_k}}.
\end{equation}

Thus, we have constructed a term of the form $(v_1\cdots v_{2k\beta})^{2\beta-1}$ which has strictly positive coefficient (the prefactors of $(-1)^{k\beta}$ in \cref{middlecoeff} and \cref{finalcoeff} cancel each other) given by
\begin{equation}
\frac{\prod_{j=1}^kl_j!(2\beta-l_j!)}{(l_1!)^{2\beta-l_1}\cdots (l_k!)^{2\beta-l_k}}.
\end{equation}

In fact this is the only way to construct a term of this form from the integrand (more details are given in the Appendix, see \cref{sec:uniqueconstruction}). 

All that is left to prove is that the term 
\[(-1)^{g(k,\beta;\underline{l})}((k-1)\beta-\sum_{j=1}^{k-1}l_j)^{|B_{k,\beta;\underline{l}}|-\binom{k}{2}}\Psi_{k,\beta;\underline{l}}(((k-1)\beta-\sum_{j=1}^{k-1} l_j)\underline{v})\]
only contributes a positive coefficient, where recall

\begin{align}
&\Psi_{k,\beta;\underline{l}}(((k-1)\beta-\sum_{j=1}^{k-1} l_j)\underline{v})=\nonumber\\
&\qquad\underset{\substack{\underline{y}=(y_{m,n})_{(m,n)\in B_{k,\beta;\underline{l}}}\\\textbf{(}\tilde{\textbf{$\ddagger$}}\textbf{)}}}{\int\cdots\int}\hspace{-.3cm}\exp\left(((k-1)\beta-\sum_{j=1}^{k-1} l_j)\sum y_{m,n}(v_n-v_m)^\pm\right)\prod dy_{m,n}.\label{psifunction}
\end{align}

Calculating $\gamma_{k,\beta}$ involves computing derivatives of \cref{psifunction} and evaluating it at $\underline{v}=0$ by the residue theorem.  
We consider the case where $(k-1)\beta>l_1+\cdots+l_{k-1}$, the other case follows similarly. Incorporating in the sign $(-1)^{g(k,\beta;\underline{l})}$ in to the integrand of the right hand side of \eqref{psifunction} (where for simplicity we ignore the positive prefactor in the exponent since it doesn't contribute to the overall sign)  we have
\begin{align}
(-1)^{g(k,\beta;\underline{l})}\prod_{(m,n)\in B_{k,\beta;\underline{l}}}\exp\left(y_{m,n}(v_n-v_m)^\pm\right)&=\prod_{\sigma<\tau}\prod_{(m,n)\in S_{\sigma,\tau}^-}(\operatorname{sign}(\RE\{(v_m-v_n)\})\exp\left(-y_{m,n}(v_n-v_m)\right))\nonumber\\
&\times\prod_{(m,n)\in S_{\sigma,\tau}^+}(-\operatorname{sign}(\RE\{(v_n-v_m)\})\exp\left(y_{m,n}(v_n-v_m)\right)).\label{overallsgn}
\end{align}

Thus, in order to show that $\gamma_{k,\beta}$ is strictly positive we need to establish that differentiating the right hand side of \cref{overallsgn} contributes an overall positive sign.  To see that this is true, first note that since each of the $v_m$, for $m\in\{1,\dots,2k\beta\}$, in $P_{k,\beta}(\underline{l})$ has a pole of even order at $0$, and by the residue theorem we are required to differentiate the exponential term in \cref{psifunction} an odd number of times.  Then, by the requirements of the conditions on the Riemann integral in the right hand side of \cref{psifunction}, for each $(m,n)$, one has that the Heaviside function ensures that the product $y_{m,n}\RE\{(v_n-v_m)^\pm\}$ is negative.  It is easy to check that in each case, after differentiating an odd number of times, that the term on the right hand of \cref{overallsgn} is positive. This concludes the proof of lemma~\ref{momlemma4}.
\end{proof}


\section{Polynomial Structure}
\label{sec:polystruct}

In this section we prove theorem~\ref{thm:polynomial}. The technique we use relies on a formula for $I_{k,\beta}(\theta_1,\dots,\theta_k)$ (c.f.~(\ref{eq:pureca}) and (\ref{eq:momall})) that follows from an expression obtained by Conrey et al.~\cite{cfkrs1}.  This takes the form of a combinatorial sum and is a special case of a more general expression that we state in the Appendix.

\begin{theorem}\label{ratiothm}
Let $\Xi_{k\beta}$ be the set of $\binom{2k\beta}{k\beta}$ permutations $\sigma\in S_{2k\beta}$ such that $\sigma(1)<\sigma(2)<\cdots<\sigma(k\beta)$ and $\sigma(k\beta+1)<\cdots<\sigma(2k\beta)$, and 
\[\underline{\omega}=(\underbrace{e^{i\theta_1},\dots,e^{i\theta_1}}_{\beta},\dots,\underbrace{e^{i\theta_k},\dots,e^{i\theta_k}}_{\beta},\underbrace{e^{i\theta_1},\dots,e^{i\theta_1}}_{\beta},\dots,\underbrace{e^{i\theta_k},\dots,e^{i\theta_k}}_{\beta}).\]

Then, 
\[\left(\prod_{j=k\beta+1}^{2k\beta}\omega_j^{N}\right) I_{k,\beta}(\theta_1,\dots,\theta_k)=\sum_{\sigma\in \Xi_{k\beta}}\frac{(\omega_{\sigma(k\beta+1)}\omega_{\sigma(k\beta+2)}\cdots\omega_{\sigma(2k\beta)})^N}{\prod_{l\leq k\beta<q}(1-\omega_{\sigma(l)}\omega^{-1}_{\sigma(q)})}.\]
\end{theorem}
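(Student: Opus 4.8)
The plan is to derive \cref{ratiothm} as a specialisation of the combinatorial moments formula of Conrey et al.~\cite{cfkrs1} — the residue-sum counterpart of the contour-integral identity \cref{lemma:mci}. The first step is to make the relevant structure visible. Writing $|P_N(A,\theta)|^{2\beta}=\det(I-Ae^{-i\theta})^\beta\det(I-A^*e^{i\theta})^\beta$ gives
\[
I_{k,\beta}(\theta_1,\dots,\theta_k)=\int_{U(N)}\prod_{j=1}^{k}\det(I-Ae^{-i\theta_j})^\beta\prod_{j=1}^{k}\det(I-A^*e^{i\theta_j})^\beta\,dA,
\]
so that $I_{k,\beta}$ is exactly a CUE average of a product of $k\beta$ characteristic polynomials of $A$ evaluated at the points $e^{-i\theta_1},\dots,e^{-i\theta_k}$ (each with multiplicity $\beta$) against $k\beta$ characteristic polynomials of $A^*$ evaluated at $e^{i\theta_1},\dots,e^{i\theta_k}$ (each with multiplicity $\beta$). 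This is precisely an instance of the averages handled by \cref{lemma:mci}, with the $2k\beta$ parameters $e^{\alpha_j}$ specialised (with repetitions) to the entries of $\underline{\omega}$; the ordering of the parameters is immaterial to the value of the average.

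Next I would invoke the combinatorial form of this identity. For $2k\beta$ \emph{generic} parameters, the average equals a sum over the $\binom{2k\beta}{k\beta}$ ways of splitting the parameters into an ``$A$-block'' and an ``$A^*$-block'' of size $k\beta$ each — equivalently a sum over $\sigma\in\Xi_{k\beta}$ — the summand for each split being (the product of the $A^*$-block parameters) to the power $N$, divided by $\prod_{l\le k\beta<q}\bigl(1-\omega_{\sigma(l)}\omega_{\sigma(q)}^{-1}\bigr)$; this is the special case of the general expression recorded in the Appendix, and it is what one obtains by collapsing the contours in \cref{lemma:mci} onto the poles and summing residues. The prefactor $\prod_{j=k\beta+1}^{2k\beta}\omega_j^{N}=e^{iN\beta\sum_{j}\theta_j}$ on the left-hand side of \cref{ratiothm} simply records the powers of $e^{i\theta_j}$ produced when each $A^*$-factor is rewritten in terms of $A$ via $\det(I-A^*z)=(-1)^{N}z^{N}\det(A)^{-1}\det(I-z^{-1}A)$ (equivalently, it is the normalisation built into the CFKRS formula); bringing it to the left puts the identity in the stated symmetric shape.

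The one genuine subtlety — and the step I expect to require the most care — is that the entries of $\underline{\omega}$ are \emph{not} distinct: each $e^{i\theta_j}$ occurs $\beta$ times in each block, so there are splits $\sigma$ for which $\omega_{\sigma(l)}=\omega_{\sigma(q)}$, making the corresponding denominator vanish. The resolution is the standard analytic-continuation argument: for fixed $N$, $I_{k,\beta}$ is an entire function of the underlying parameters (a finite average of polynomials); the generic identity holds on the dense open set where all $2k\beta$ parameters are distinct; and the right-hand side — once the poles coming from coincident parameters are seen to cancel in pairs across the sum — extends continuously to the locus corresponding to $\underline{\omega}$, so the identity persists there. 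Thus the content of \cref{ratiothm} is to be read as this limiting value.

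An alternative, self-contained route avoiding \cite{cfkrs1} is to start instead from \cref{prop:symmetric}: write the rectangular Schur polynomial $s_{\langle N^{k\beta}\rangle}(\underline{\omega})$ via the bialternant formula $s_\lambda=\det(\omega_i^{\lambda_j+n-j})/\Delta(\omega)$ with $n=2k\beta$, and Laplace-expand the numerator determinant along its first $k\beta$ columns. Each of the $\binom{2k\beta}{k\beta}$ resulting minors factors as a product of two smaller Vandermonde determinants (after extracting a common power $\omega_i^{N+k\beta}$ from the rows indexed by the chosen subset), and dividing by the factorisation of $\Delta(\omega)$ and using $\omega_a-\omega_b=-\omega_b(1-\omega_a\omega_b^{-1})$ reproduces the summand of \cref{ratiothm} indexed by the corresponding $\sigma\in\Xi_{k\beta}$; here the only real work is tracking the signs (from the Laplace expansion, the column reversals, and the Vandermonde splitting) and verifying that the powers of $\omega_j$ collect into precisely the $(\omega_{\sigma(k\beta+1)}\cdots\omega_{\sigma(2k\beta)})^N$ of the statement.
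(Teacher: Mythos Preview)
Your primary approach is exactly what the paper does: \cref{ratiothm} is not proved from scratch in the paper but is stated as a special case of the combinatorial formula of Conrey et al.~\cite{cfkrs1} (the general version with ratio structure is recorded in the Appendix, attributed to \cite{CFZ} and \cite{bumgam06}), and the cancellation of the apparent poles at coincident parameters is noted in the paragraph following the theorem in essentially the same terms you use. So your first route matches the paper's.

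Your alternative route via \cref{prop:symmetric} and Laplace expansion of the bialternant is a genuinely different argument which the paper does not give. It has the merit of being self-contained (no appeal to \cite{cfkrs1}) and of making the connection to the Schur-function formulation in Section~2 explicit; on the other hand, the CFKRS route is cleaner once one is willing to quote the literature, since the sign and power bookkeeping you flag in the Laplace expansion is already packaged into the cited formula. Either approach is valid here.
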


Therefore,
\begin{equation}
\mom_N(k,\beta)=\frac{1}{(2\pi)^k}\int_0^{2\pi}\cdots\int_0^{2\pi}\prod_{j=k\beta+1}^{2k\beta}\omega_j^{-N}\sum_{\sigma\in \Xi_{k\beta}}\frac{(\omega_{\sigma(k\beta+1)}\omega_{\sigma(k\beta+2)}\cdots\omega_{\sigma(2k\beta)})^N}{\prod_{l\leq k\beta<q}(1-\omega_{\sigma(l)}\omega^{-1}_{\sigma(q)})}d\theta_1\cdots d\theta_k.
\end{equation}

The individual summands in the integrand in this expression have poles of finite order (when $\omega_{\sigma(q)}=\omega_{\sigma(l)}$).  These cancel with zeros in the numerator in the complete sum, as they must because $I_{k,\beta}(\theta_1,\dots,\theta_k)$ is bounded, being the average of a product of polynomials~\cite{cfkrs1}.   The function remaining after this cancellation may be computed by applying l'H\^opital's rule a finite number of times.  This function is therefore a polynomial in the variables $e^{i\theta_1},\dots,e^{i\theta_k}$ with coefficients that are each polynomial functions of $N$ (coming from the derivatives associated with applying l'H\^opital's rule).  Upon integrating only the coefficient of the constant term remains, which is polynomial in $N$.    This concludes the proof of $\ref{thm:polynomial}$.  In principle one could compute the order of the polynomial this way, but in general we found the approach based on the asymptotic evaluation of the integral representation, set out in the previous section, to be more straightforward.  In specific cases the calculation is feasible, as demonstrated in the Appendix.   

\section{Summary and Outlook}
\label{Summary and Outlook}

Our main result is a proof that the moments of the moments of the characteristic polynomials of random unitary matrices, 
$\mom_N(k,\beta)$, are polynomial functions of $N$, of order $k^2\beta^2-k+1$,  when $k$ and $\beta$ both take values in $\mathbb{N}$.  This proves the conjecture for the leading order asymptotics made in \cite{fyodorov12, fyodorov14} when $k$ and $\beta$ both take values in $\mathbb{N}$.  Moreover, it goes further in establishing that an exact formula exists when $k$ and $\beta$ both take values in $\mathbb{N}$, and, in passing, establishes the general structure of the (finite) asymptotic expansion for $\mom_N(k,\beta)$ in this case.

It is clear from the calculation set out in Section 2 that we have an exact formula when $k$ and $\beta$ both take values in $\mathbb{N}$ because of an underlying integrable structure: the approach based on symmetric function theory, and hence on representation theory, yields an exact formula in terms of a count of certain restricted semistandard Young tableaux.  The symmetric functions used in Section 2 may be related to certain generalized hypergeometric functions (c.f.~\cite{macdonald98} and, for example~\cite{FK04}), and it would be interesting to explore this calculation in that context, especially if doing so extends the results to non-integer values of $k$ and $\beta$.  We see our calculation as a first step in that direction and anticipate pursuing this further.  We note in passing that the formula we establish using the multiple integral approach provides as a byproduct an asymptotic expression for the count of semistandard Young tableaux that arises in the calculation.  

The moments of the moments we study here play a central role in the heuristic analysis in \cite{fyodorov12, fyodorov14, Keating_lec} of the value distribution of $\log P_{\rm max}(A)$, leading to the conjecture that as $N\rightarrow\infty$
\begin{equation}
\label{FKconjecture}
 \log P_{\rm max}(A)=\log N -\frac{3}{4}\log\log N +x_N(A),
 \end{equation}
where $x_N(A)$ is a random variable that is $O_\mathbb{P}(1)$ and which has a limiting value distribution that is a sum of two Gumbel distributions.
Several components of these conjectures have recently been proved: the first term on the righthand side of (\ref{FKconjecture}) was established in \cite{ABB}, the second term in \cite{PZ2017}, and the tightness of $x_N(A)$ in \cite{CMN}.  All of these calculations have utilised a hierarchical branching structure in the Fourier expansion of $\log |P_N(A,\theta)|$,
\begin{equation}
\label{Fourier}
 \log |P_N(A,\theta)|=-{\rm Re}\sum_{k=1}^\infty \frac{{\rm Tr}A^k}{k}\exp (ik\theta),
 \end{equation}
similar to that found in other log-correlated Gaussian fields such as the branching random walk and the two-dimensional Gaussian Free Field; that is, they have utilised general probabilistic methods.  When $\log |P_N(A,\theta)|$ (c.f.~(\ref{Fourier})) is replaced by a random Fourier series with the same correlation structure -- such series can be considered as one-dimensional models of the two-dimensional Gaussian Free Field -- the analogue of conjecture (\ref{conj}), due to Fyodorov and Bouchaud \cite{FB2008}, has recently been proved for all $k$ and $\beta$ in the regime $k<1/\beta^2$ by Remy \cite{Remy1} using ideas from conformal field theory \cite{KRV}.

Formally, the $\beta\rightarrow\infty$ asymptotics of $Z_N(A, \beta)$ determines $P_{\rm max}(A)$, and so it is natural to seek to understand the value distribution of $P_{\rm max}(A)$ by calculating the moments of $Z_N(A, \beta)$ and then taking the large-$\beta$ limit.  However, doing this requires the moments for all $k$ and $\beta$, not just the integer moments.  Moreover, the controlling range is when freezing dominates and $k\beta^2$ is close to 1.  Our results therefore cannot be applied as they stand.  This is one reason why the possibility of using the integrable structure to extend them to non-integer values of $k$ and $\beta$ is attractive.  When $k=1$ the Selberg integral makes this possible.  (And in the somewhat similar problem of the joint moments of the characteristic polynomial and its derivative, Painlev\'e theory provides a route (c.f.~\cite{BBBGIIK, BBBCPRS}).) 

The association between characteristic polynomials of random matrices and the theory of the Riemann zeta-function motivates analogous conjectures to those just described for
 \begin{equation}
 \zeta_{\rm max}(T)=\max_{0\le x <2\pi}|\zeta(1/2+iT+ix)|,
 \end{equation}
where $T$ is random \cite{fyodorov12, fyodorov14, Keating_lec}.  These correspond to replacing $N$ in (\ref{FKconjecture}) by $\log T$ (c.f~\cite{keasna00a}).  In this case too there has been recent progress in proving the leading order term in the resulting formula when $T\rightarrow\infty$ \cite{Najnudel, ABBRS}, based on calculations that mirror those for the extremes of characteristic polynomials.  

The multiple-integral approach we have developed here also applies to the zeta-function, using the representation established in \cite{cfkrs2}, giving explicit (conjectural) formulae for the integer moments of the integer moments over short intervals of the critical line in that case too.  These take the form of polynomials in $\log T$ up to an error that is a power of $T$ smaller.  This is important because in numerical computations of the moments one is necessarily restricted to finite intervals, and it is a key question how moments computed in different intervals fluctuate.  Our formula for the moments gives an answer to this question.  We intend to discuss this in more detail in a forthcoming paper.  

The methods of calculation we have developed here for moments defined with respect to averages over the unitary group extend to the other classical compact groups: both the representations in terms of symmetric functions and multiple integrals have been developed \cite{cfkrs1, bumgam06}.  The extension of our results to the associated random matrix ensembles can therefore be worked out in an analogous way to that described in this paper.  This would then have applications to the other symmetry classes of $L$-functions, using \cite{cfkrs2}, in a similar way as for the Riemann zeta-function.  Our results apply immediately (and unconditionally) as well to the moments of the moments of function field $L$-functions defined over $\mathbb{F}_q$ in the limit $q\to\infty$.  This follows from equidistribution results in that case (c.f.~\cite{Keating_lec}).

Finally, our formulae have already been applied to analysing the results of numerical computations using randomly generated unitary matrices, where they explain the fluctuations in the moments of the characteristic polynomials evaluated by averaging over the unit circle \cite{FGK}.  We anticipate further similar applications and extensions to other numerical computations of the moments of spectral determinants.

\newpage
\section{Appendix}

\subsection{Examples}\label{app:euan}

Here we give explicit examples of the polynomials $\mom_N(k,\beta)$ for small values of $k,\beta$.  The formulae we record extend the results of preliminary calculations due to Keating and Scott \cite{KeatingScott} (c.f.~\cite{Keating_lec}), which formed the basis for some of the numerical computations in \cite{FGK}.

First, the general technique is described and then explicit forms of the polynomials are given in the cases of $\beta=1$, $k\in\{1,2,3,4\}$ and $\beta=2$, $k\in\{1,2\}$. We should remark that the moment formula of Keating and Snaith~\cite{keasna00a} gives the full polynomials for the case $k=1$, $\beta\in \mathbb{N}$; see (\ref{k=1}).

The technique we use is in a slightly more general form than is needed here, because we see it as having other potential applications; we then specialise back to the actual formula required for our calculations.  The more general form we start with was first derived by Conrey, Farmer and Zirnbauer \cite{CFZ}, and later by Bump and Gamburd using symmetric function theory \cite{bumgam06}. Note that we used a special case of this result to prove theorem~\ref{thm:polynomial} in Section~\ref{sec:polystruct}. First, define for finite sets $A,B,C,D$,
\[R(A,B;C,D)\coloneqq\int_{U(N)}\frac{\prod_{\alpha\in A}\det(I-X^*e^{-\alpha})\prod_{\beta\in B}\det(I-Xe^{-\beta})}{\prod_{\gamma\in C}\det(I-X^*e^{-\gamma})\prod_{\delta\in D}\det(I-Xe^{-\delta})}dX.\]
Further if 
\[Z(A,B)\coloneqq \prod_{\substack{\alpha\in A,\\\beta\in B}}\frac{1}{(1-e^{-(\alpha+\beta)})},\]then define
\[Z(A,B;C,D)\coloneqq\frac{Z(A,B)Z(C,D)}{Z(A,D)Z(B,C)}.\] Finally, if $S\subset A$ and $T\subset B$ then $\overline{S}=A-S$, $\overline{T}=B-T$, $S^-=\{-\hat{\alpha}:\hat{\alpha}\in S\}$ and similarly for $T$.  Note that here we are using the notation $U+V$, $U-V$ (to be interpreted as $U\cup V$ and $U\backslash V$ respectively for sets $U, V$) to be consistent with the statement of the theorem in~\cite{consna08}.
\begin{thm}
(\cite{cfkrs1}) With $N\geq 0$ and $\RE(\gamma)>0, \RE(\delta)>0$ for $\gamma\in C$, $\delta\in D$, $|C|\leq |A|+N$, $|D|\leq |B|+N$, we have 
\[R(A,B;C,D)=\sum_{\substack{S\subset A,T\subset B\\|S|=|T|}}e^{-N(\sum_{\hat{\alpha}\in S}\hat{\alpha}+\sum_{\hat{\beta}\in T}\hat{\beta})}Z(\overline{S}+T^-,\overline{T}+S^-;C,D),\]
where $A=S+\overline{S}$ and $B=T+\overline{T}$. 
\end{thm}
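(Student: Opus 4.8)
The plan is to realize the left-hand side as a Toeplitz determinant and then evaluate that determinant in closed form, with the subset sum over $(S,T)$ emerging as its natural index set. First I would apply the Weyl integration formula: writing $X=\operatorname{diag}(e^{i\theta_1},\dots,e^{i\theta_N})$, the integrand of $R(A,B;C,D)$ is the class function $\prod_{j=1}^{N}w(e^{i\theta_j})$ with
\[w(z)=\frac{\prod_{\alpha\in A}(1-z^{-1}e^{-\alpha})\,\prod_{\beta\in B}(1-z\,e^{-\beta})}{\prod_{\gamma\in C}(1-z^{-1}e^{-\gamma})\,\prod_{\delta\in D}(1-z\,e^{-\delta})},\]
integrated against the Weyl density $\tfrac{1}{N!}\lvert\Delta(e^{i\theta})\rvert^2\prod_j\tfrac{d\theta_j}{2\pi}$. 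Expanding $\Delta(e^{i\theta})=\det\bigl(e^{i(k-1)\theta_j}\bigr)$ and $\overline{\Delta(e^{i\theta})}=\det\bigl(e^{-i(k-1)\theta_j}\bigr)$ and applying Andr\'eief's (Heine's) identity collapses the $N$-fold integral to the single Toeplitz determinant $R(A,B;C,D)=\det\bigl(\widehat w_{\,j-k}\bigr)_{j,k=1}^{N}$, where $\widehat w_n=\tfrac{1}{2\pi}\int_0^{2\pi}w(e^{i\theta})e^{-in\theta}\,d\theta$. The hypotheses $\RE(\gamma)>0$, $\RE(\delta)>0$ are exactly what ensures $w$ has no poles on $\lvert z\rvert=1$ so the Fourier coefficients exist, and $\lvert C\rvert\le\lvert A\rvert+N$, $\lvert D\rvert\le\lvert B\rvert+N$ will turn out to be the conditions under which the closed form of the determinant is a finite, nonempty sum.

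The second step is to evaluate this determinant. Since $w$ is rational, a partial-fraction expansion writes each $\widehat w_n$ as a finite linear combination of geometric sequences in $n$, and a Cauchy--Binet expansion of $\det(\widehat w_{j-k})$ reorganizes the determinant as a sum indexed by a choice, for each of the $\lvert A\rvert+\lvert B\rvert$ zeros of $w$, of whether that zero is ``resolved'' inside or outside the unit circle; evaluating the resulting minors by the Cauchy determinant identity, the configurations that survive are precisely the pairs $S\subseteq A$, $T\subseteq B$ with $\lvert S\rvert=\lvert T\rvert$, and the corresponding summand equals $e^{-N(\sum_{\hat\alpha\in S}\hat\alpha+\sum_{\hat\beta\in T}\hat\beta)}\,Z(\overline S+T^-,\overline T+S^-;C,D)$ --- this is the content of Day's formula for Toeplitz determinants with rational symbols. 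Two other routes give useful cross-checks. One can avoid Day's formula by expanding each denominator $\det(I-X^*e^{-\gamma})^{-1}$, $\det(I-Xe^{-\delta})^{-1}$ as a geometric series (legitimate precisely because $\RE(\gamma),\RE(\delta)>0$), so that $R(A,B;C,D)$ becomes an infinite combination of averages of products of characteristic polynomials; the $C=D=\varnothing$ case of the theorem --- which follows from \cref{lemma:mci} by a residue computation, or by symmetric function theory following Bump and Gamburd --- is then applied term by term and the series resummed. Alternatively, closest to the methods of Section 3, one takes the multiple-contour-integral representation of $R(A,B;C,D)$ generalizing \cref{lemma:mci} (the denominators contribute extra linear factors in the numerator of the integrand), deforms each $z$-contour onto the poles at the $\alpha$'s and $\beta$'s, and sums residues, the subsets $S,T$ recording which contour encircles which pole and the $\Delta(z)^2$ producing the $Z$-factors. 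All three routes reduce, in the smallest case $\lvert A\rvert=\lvert B\rvert=1$, $C=D=\varnothing$, to $\int_{U(N)}\det(I-X^*e^{-\alpha})\det(I-Xe^{-\beta})\,dX=\sum_{j=0}^{N}e^{-j(\alpha+\beta)}$, which is a good sanity check.

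The main obstacle is the closed-form evaluation sitting at the heart of all three routes --- the rational-symbol Toeplitz determinant, the resummation of the reduced series, or the residue bookkeeping in the contour route --- and the real difficulty is the same in each: showing that only the subset configurations satisfying the balance condition $\lvert S\rvert=\lvert T\rvert$ contribute, while scrupulously tracking (i) the signs produced by the factors $(-e^{-\alpha})$ and by the functional relation $\det(I-X^*z)=(-z)^N(\det X)^{-1}\det(I-z^{-1}X)$ used to convert $X^*$-determinants into $X$-determinants; (ii) the exponential shifts $e^{-N(\cdots)}$; and (iii) the ``box'' constraints (partitions confined to width $\lvert A\rvert$ or $\lvert B\rvert$ and length $N$) that make the intermediate sums finite, force $\lvert S\rvert=\lvert T\rvert$, and encode the bounds $\lvert C\rvert\le\lvert A\rvert+N$, $\lvert D\rvert\le\lvert B\rvert+N$. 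Once the combinatorial identity organizing this cancellation is identified, the remaining steps --- partial fractions, the Cauchy determinant, Andr\'eief's identity --- are all routine.
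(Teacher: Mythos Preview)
The paper does not prove this theorem; it is quoted in the Appendix as a result of Conrey--Farmer--Keating--Rubinstein--Snaith~\cite{cfkrs1} (with the more general ratio form attributed to Conrey--Farmer--Zirnbauer~\cite{CFZ} and to Bump--Gamburd~\cite{bumgam06}), and is used only as an input for the explicit polynomial computations in Section~\ref{app:euan}. So there is no proof in the paper to compare against.

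That said, your sketch is in the right spirit and is consistent with the known proofs. Of the three routes you outline, the third (multiple contour integral and residue bookkeeping) is essentially the method of~\cite{cfkrs1}, and the second (reduce to the $C=D=\varnothing$ case via geometric expansion / symmetric functions and resum) is close to Bump--Gamburd~\cite{bumgam06}. The Toeplitz/Day route is also a valid and well-known alternative for rational symbols. Your identification of the real work --- isolating why only balanced pairs $|S|=|T|$ survive and tracking the signs and $e^{-N(\cdots)}$ shifts --- is accurate; none of the three approaches is trivial at that step, but none requires an idea beyond the standard toolkit (Cauchy determinant, Cauchy--Binet, residue calculus). The one place where your write-up is still a plan rather than a proof is the ``Day's formula'' step: you assert that the surviving configurations are exactly the balanced $(S,T)$ without carrying out the partial-fraction and minor computations, and in the presence of repeated poles (when elements of $A$ or $B$ coincide, as they do in the application) this requires a limiting argument you have not spelled out.
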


To see how this is used to give the full polynomials for $\mom_N(k,\beta)$, we outline the simplest case with $k=\beta=1$. We note that 
\begin{align*}
\mom_N(1,1)&=\frac{1}{2\pi}\int_0^{2\pi}\mathbb{E}_{A\in U(N)}\left(|P_N(A,\theta)|^2\right)d\theta\\
&=\frac{1}{2\pi}\int_0^{2\pi}\int_{U(N)}P_N(A,\theta)P_N(A^*,-\theta)dAd\theta,
\end{align*}
so we apply the above theorem with $A=\{i\alpha\}, B=\{i\beta\}$, $C,D=\emptyset$.  This gives us that 
\begin{align*}
\mom_N(1,1)&=\frac{1}{2\pi}\int_0^{2\pi}\lim_{\beta\rightarrow-\alpha}Z(A,B)+e^{-iN(\alpha+\beta)}Z(B^-,A^-)d\alpha\\
&=\frac{1}{2\pi}\int_0^{2\pi}\lim_{\beta\rightarrow-\alpha}\sum_{m=0}^Ne^{-im(\alpha+\beta)}d\alpha\\
&=N+1.
\end{align*}

Higher values of $k, \beta$ clearly result in bigger sets $A, B$, and hence many more choices for $S, T$.  Nevertheless, small cases of $\mom_N(k,\beta)$ can be computed in the same way.  For example
\begin{align*}
\mom_N(1,1)&=N+1\\
\mom_N(2,1)&=\frac{1}{6}(N+3)(N+2)(N+1)\\
\mom_N(3,1)&=\frac{1}{2520}(N+5)(N+4)(N+3)(N+2)(N+1)(N^2+6N+21)\\
\mom_N(4,1)&=\frac{1}{778377600}(N+7)(N+6)(N+5)(N+4)(N+3)(N+2)(N+1)\\ 
&\times(7N^6+168N^5+1804N^4+10944N^3+41893N^2+99624N+154440)\\
\mom_N(1,2)&=\frac{1}{12}(N+1)(N+2)^2(N+3)\\
\mom_N(2,2)&=\frac{1}{163459296000}(N+7)(N+6)(N+5)(N+4)(N+3)(N+2)(N+1)\\
&\times(298N^8+9536N^7+134071N^6+1081640N^5+5494237 N^4+18102224N^3\\
&+38466354N^2+50225040N+32432400)\\
\mom_N(2,3)&=\frac{ (N+1) (N+2) (N+3) (N+4) (N+5) (N+6) (N+7) (N+8) (N+9) (N+10) (N+11)}{1722191327731024154944441889587200000000}\\
&\times \big(12308743625763
   N^{24}+1772459082109872 N^{23}+121902830804059138 N^{22}+\\
&+5328802119564663432 N^{21}+166214570195622478453 N^{20}+3937056259812505643352 N^{19}\\
&+73583663800226157619008 N^{18}+1113109355823972261429312 N^{17}+\\
&13869840005250869763713293
   N^{16}+144126954435929329947378912 N^{15}\\
&+1259786144898207172443272698
   N^{14}+9315726913410827893883025672 N^{13}\\
&+58475127984013141340467825323
   N^{12}+311978271286536355427593012632 N^{11}\\
&+1413794106539529439589778645028
   N^{10}+5427439874579682729570383266992 N^9\\
&+17564370687865211818995713096848
   N^8+47561382824003032731805262975232 N^7\\
&+106610927256886475209611301000128
   N^6+194861499503272627170466392014592 N^5\\
&+284303877221735683573377603640320
   N^4+320989495108428049992898521600000 N^3\\
&+266974288159876385845370793984000
   N^2+148918006780282798012340305920000
   N\\
&+43144523802785397500411904000000\big).
\end{align*}

It is worth noting explicitly that this method gives exact information about the moments of the moments at the freezing transition $\beta=1$.  

\subsection{Vandermonde Determinant Coefficients}\label{app:scott}

Recall that we are interested in determining the coefficient of terms of the form $(x_1\cdots x_n)^{n-1}$ in the square of the Vandermonde determinant,
\begin{equation}
\Delta(x_1,\dots,x_n)^2=\sum_{\sigma,\tau\in S_n}\Sgn(\sigma)\Sgn(\tau)\prod_{i=1}^nx_i^{\sigma(i)+\tau(i)-2}.
\end{equation}

Thus, we require that $\sigma(i)+\tau(i)=n+1$ for all $i\in\{1,\dots,n\}$ and in particular we want to show that this coefficient is strictly positive. 

Immediately, we see that there will be $n!$ terms of the required form since fixing $\sigma(i)$ completely determines $\tau(i)$. Consider the bijection 
\begin{align*}
\phi:\{1,\dots,n\}\rightarrow\{1,\dots,n\}, \quad i\mapsto n+1-i.
\end{align*}
The order of $\phi$ is 2 and if $n$ is even, it has no fixed point, whereas if $n$ is odd there is a unique fixed point $(n+1)/2$.  Thus, $\phi\in S_n$ and it consists of $n/2$ transpositions if $n$ is even, and $(n-1)/2$ transpositions if $n$ is odd. Now set $\tau=\phi\circ\sigma$, so $\tau\in S_n$, and $\tau(i)=n+1-\sigma(i)$.  Given $\sigma$, we have found our unique permutation.  To determine the sign of $\tau$, note that $\Sgn(\tau)=\Sgn(\phi)\Sgn(\sigma)$, and 
\[\Sgn(\phi)=(-1)^{\left\lfloor\tfrac{n}{2}\right\rfloor}=\begin{cases}+1&\text{if }n\equiv 0,1\mod 4\\ -1&\text{if }n\equiv 2,3\mod 4.\end{cases}\]
Thus, the coefficient of $(x_1\cdots x_n)^{n-1}$ in $\Delta(x_1,\dots,x_n)^2$ is $\Sgn(\phi)n!$. It now follows that the coefficient of 
\[\prod_{i=1}^{l_1}v_i^{l_1-1}\prod_{i=l_1+1}^{l_1+l_2}v_i^{l_2-1}\cdots \prod_{i=\sum_{j=1}^{k-1}l_j+1}^{k\beta}v_i^{l_k-1}\prod_{i=k\beta+1}^{\sum_{j=1}^{k-1}l_j+2\beta}v_i^{2\beta-l_k-1}\cdots\prod_{i=2(k-1)\beta+1+l_1}^{2k\beta}v_i^{2\beta-l_1-1}\] 
 in  
\[\prod_{n=1}^{k}\Delta(v_{\sum_{j=1}^{n-1}l_j+1},\dots,v_{\sum_{j=1}^{n}l_j})^2 \prod_{n=1}^{k}\Delta(v_{\sum_{j=1}^{n}l_j+2(k-n)\beta+1},\dots,v_{\sum_{j=1}^{n-1}l_j+2(k-(n-1))\beta})^2\]
is given by
\begin{align*}
(-1)^{\sum_{j=1}^k\left(\big\lfloor \tfrac{l_j}{2}\big\rfloor+\big\lfloor \tfrac{2\beta-l_j}{2}\big\rfloor\right)}\prod_{j=1}^kl_j!(2\beta-l_j)!&=(-1)^{k\beta+\sum_{j=1}^k\left(\big\lfloor \tfrac{l_j}{2}\big\rfloor+\big\lfloor \tfrac{-l_j}{2}\big\rfloor\right)}\prod_{j=1}^kl_j!(2\beta-l_j)!\\
&=(-1)^{k\beta}(-1)^{\sum_{j=1}^k\delta_{\{l_j\text{ is odd\}}}}\prod_{j=1}^kl_j!(2\beta-l_j)!\\
&=(-1)^{k\beta}(-1)^{\#\{j:l_j\text{ is odd}\}}\prod_{j=1}^kl_j!(2\beta-l_j)!.\end{align*}
This proves the result since the parity of $\#\{j:l_j\text{ is odd}\}$ is the same as the parity of $k\beta$ as $\sum_{j=1}^kl_j= k\beta$.  

\subsection{Uniqueness of Construction}\label{sec:uniqueconstruction}

When trying to construct the term of the form $(v_1\cdots v_{2k\beta})^{2\beta-1}$ in

\[\prod_{q=1}^{k}\Delta(v_{\sum_{j=1}^{q-1}l_j+1},\dots,v_{\sum_{j=1}^{q}l_j})^2 \prod_{q=1}^{k}\Delta(v_{\sum_{j=1}^{q}l_j+2(k-q)\beta+1},\dots,v_{\sum_{j=1}^{q-1}l_j+2(k-(q-1))\beta})^2\prod_{\substack{m\leq k\beta<n\\\alpha_m=\alpha_n}}\left({v_n-v_m}\right),\]
first note that the variables $v_m$, for $m\in\{1,\dots,k\beta\}$ only appear in the Vandermonde determinants and the products
\[\prod_{\substack{m\leq k\beta<n\\\alpha_m=\alpha_n}}\left({v_n-v_m}\right)=\prod_{\substack{m\in\{1,\dots,l_1\}\\n\in\{2(k-1)\beta+1+l_1,\dots,2k\beta\}}}(v_n-v_m)\cdots\prod_{\substack{m\in\{\sum_{j=1}^{k-1}l_j+1,\dots,k\beta\}\\n\in\{k\beta+1,\dots,\sum_{j=1}^{k-1}l_j+2\beta\}}}(v_n-v_m).\]

In particular, after fixing $q\in\{1,\dots,k\}$ take $v_j$ with $j\in\{\sum_{i=1}^{q-1}l_i+1,\dots,\sum_{i=1}^ql_i\}$. Then $v_j$ only appears in the following two terms: 
\[\Delta(v_{\sum_{i=1}^{q-1}l_i+1},\dots,v_{\sum_{i=1}^ql_i})^2\text{ and }\prod_{\substack{m\in\{\sum_{i=1}^{q-1}l_i+1,\dots,\sum_{i=1}^ql_i\}\\n\in\{2k\beta-\sum_{i=1}^{q}(2\beta-l_i)+1,\dots,2k\beta-\sum_{i=1}^{q-1}(2\beta-l_i)\}}}(v_n-v_m).\] In particular these are both homogeneous polynomials: the former of degree $l_q(l_q-1)$ in $l_q$ variables and the latter is of degree $l_q(2\beta-l_q)$ in $2\beta$ variables.  We will show that the only way to construct a term of the form $(v_1\cdots v_{2k\beta})^{2\beta-1}$ is as described following \cref{expansion}.  Without loss of generality, we will set $q=1$ and assume $l_1\geq 2$. From the above discussion, the square of the Vandermonde determinant consists of terms of the form 
\begin{equation}\label{aweights}
v_1^{a_1}\cdots v_{l_1}^{a_{l_1}},\text{ with }\sum_{i=1}^{l_1}a_i=l_1(l_1-1).
\end{equation}
Similarly, the product term is built of elements of the form 
\begin{equation}\label{bweights}
v_1^{b_1}\cdots v_{l_1}^{b_{l_1}}v_{2(k-1)\beta+1+l_1}^{b_{l_1+1}}\cdots v_{2k\beta}^{b_{2\beta}},\text{ with }\sum_{i=1}^{2\beta}b_i=l_1(2\beta-l_1),\ 0\leq b_i\leq 2\beta-l_1.
\end{equation}
Hence, each term of 
\[\Delta(v_1,\dots,v_{l_1})^2\prod_{\substack{m\in\{1,\dots,l_1\}\\n\in\{2(k-1)\beta+1+l_1,\dots,2k\beta\}}}(v_n-v_m)\]
is of the form 
\[v_1^{a_1+b_1}\cdots v_{l_1}^{a_{l_1}+b_{l_1}}v_{2(k-1)\beta+1+l_1}^{b_{l_1+1}}\cdots v_{2k\beta}^{b_{2\beta}},\]
with $a_i,b_i$ satisfying the homogenous conditions. To reach our goal, we need to find all possibilities for $a_i, 1\leq i\leq l_1$ and $b_i, 1\leq i\leq 2\beta$ that $a_i+b_i=2\beta-1$ for $i\in\{1,\dots,l_1\}$. This implies that we need $\sum_{i=1}^{l_1}(a_i+b_i)=l_1(2\beta-1)$.  Now note that the `homogeneous conditions' in \cref{aweights} and \cref{bweights} together mean that 
\[\sum_{i=1}^{l_1}(a_i+b_i)+\sum_{l_1+1}^{2\beta}b_i=\sum_{i=1}^{l_1}a_i+\sum_{i=1}^{2\beta}b_i=l_1(2\beta-1).\]
Thus, we must set $b_{l_1+1},\dots,b_{2\beta}=0$ if we want to construct the required term.  This leaves us with finding all $a_i,b_i$ $1\leq i\leq l_1$ such that all the following are satisfied,
\begin{align*}
a_i+b_i&=2\beta-1,\\
\sum_{i=1}^{l_1}a_i&=l_1(l_1-1),\\
\sum_{i=1}^{l_1}b_i&=l_1(2\beta-l_1),\\
0\leq &b_i\leq 2\beta-l_1.
\end{align*}
However, the latter two conditions imply that we must have $b_i=2\beta-l_1$ for all $1\leq i\leq l_1$ which in turn gives us that $a_i=l_1-1$ for all $1\leq i\leq l_1$, and these are the only possible choices. This exactly matches the construction described following \cref{expansion}.  The case for $q\in\{2,\dots,k\}$ follows similarly.


\end{document}